\documentclass[11pt,draftclsnofoot,onecolumn,journal,letterpaper]{IEEEtran}
%\IEEEoverridecommandlockouts
% The preceding line is only needed to identify funding in the first footnote. If that is unneeded, please comment it out.
\usepackage{cite}
\usepackage{amsmath,amssymb,amsfonts}
\usepackage{algorithmic}
\usepackage{graphicx}
\usepackage{textcomp}
\usepackage{xcolor}
\def\BibTeX{{\rm B\kern-.05em{\sc i\kern-.025em b}\kern-.08em
    T\kern-.1667em\lower.7ex\hbox{E}\kern-.125emX}}

\usepackage{graphicx}
\usepackage{epstopdf}
\usepackage{amsmath,amssymb,amsthm,mathrsfs,amsfonts,dsfont}
\usepackage{epsfig}
\usepackage[ruled,linesnumbered]{algorithm2e}
\usepackage{color}
\usepackage{subfigure}
\usepackage{cite}
\usepackage{diagbox}
\usepackage{multirow}
\usepackage{float}
\usepackage{stfloats}
\usepackage{enumitem}
\usepackage{cases}
\usepackage{setspace}
\usepackage{adjustbox,lipsum}
\usepackage{comment}
\usepackage{lipsum}
\usepackage{url}
%--------------------------------------------
%command file for *.tex files
%--------------------------------------------

\newcommand{\ba}{\begin{array}}
\newcommand{\ea}{\end{array}}
\newcommand{\be}{\begin{displaymath}}
\newcommand{\ee}{\end{displaymath}}
\newcommand{\ben}{\begin{equation}}
\newcommand{\een}{\end{equation}}
\newcommand{\bena}{\begin{eqnarray}}
\newcommand{\eena}{\end{eqnarray}}
\newcommand{\beqa}{\begin{eqnarray*}}
\newcommand{\enqa}{\end{eqnarray*}}
\newcommand{\f}{\frac}
\newcommand{\bc}{\begin{center}}
\newcommand{\ec}{\end{center}}
\newcommand{\bi}{\begin{itemize}}
\newcommand{\ei}{\end{itemize}}
\newcommand{\benu}{\begin{enumerate}}
\newcommand{\eenu}{\end{enumerate}}
\newcommand{\bdes}{\begin{description}}
\newcommand{\edes}{\end{description}}
\newcommand{\bt}{\begin{tabular}}
\newcommand{\et}{\end{tabular}}

\newcommand \mubf{\mbox{\boldmath$\mu$\unboldmath}}

\newcommand \fbf{{\bf f}}

\newcommand \hbf{{\bf h}}

\newcommand \nbf{{\bf n}}

\newcommand \vbf{{\bf v}}
\newcommand \wbf{{\bf w}}
\newcommand \xbf{{\bf x}}
\newcommand \ybf{{\bf y}}
\newcommand \zbf{{\bf z}}

\newcommand \Cbf{{\bf C}}

\newcommand \Fbf{{\bf F}}

\newcommand \Hbf{{\bf H}}
\newcommand \Ibf{{\bf I}}

\newcommand \Nbf{{\bf N}}

%\newtheorem{theorem}{Theorem}[section]

%-------My Definition-------

%---

%---usage $$\gs_{H_1}^{H_0}$

\newcommand{\circlambda}{\mbox{$\Lambda$
             \kern-.85em\raise1.5ex
             \hbox{$\scriptstyle{\circ}$}}\,}

%\newcommand{\implies}{\;\rightarrow\;}

%%%%%%%%   ``Theorem-like'' environments (defs, lemmas numbered like Theorems)
%

%
% to label, reference them
%

%
%%%%%%%%

\graphicspath{{figures/}}
\allowdisplaybreaks

\newcommand{\mypara}[1]{{\smallskip \noindent \bf #1}\hspace{0.1in}}
\newcommand{\ssf}[1]{\textrm{$\sf{#1}$}{}}
\newcommand{\real}{\text{Re}}

\newtheorem{theorem}{Theorem}

\newtheorem{lemma}{Lemma}
\newtheorem{assumption}{Assumption}
\newtheorem{coro}[theorem]{Corollary}

\newtheorem{remark}{Remark}

\DeclareMathOperator*{\argmin}{arg\,min}

\newcommand{\vect}[1]{\mathbf{#1}}
\newcommand{\avgvect}[1]{\mathbf{\overline{#1}}}
\newcommand{\expt}{\mathbb{E}}
\newcommand{\variance}{\mathbb{V}\text{ar}}
\newcommand{\norm}[1]{\left \| #1 \right \|}
\newcommand{\squab}[1]{\left [ #1 \right ]}
\newcommand{\dotp}[2]{\left \langle #1, #2 \right \rangle}

\ifodd 0

\else

\fi

\ifodd 0
\newcommand{\congc}[1]{{\color{red}(Cong: #1)}}
\else
\newcommand{\congc}[1]{}
\fi

\ifodd 0
\newcommand{\zixiang}[1]{{\color{blue}(Zixiang: #1)}}
\else
\newcommand{\zixiang}[1]{}
\fi

\ifodd 0
\newcommand{\zixiangb}[1]{{\color{blue}#1}}
\else
\newcommand{\zixiangb}[1]{#1}
\fi

\begin{document}

% \thanks{Identify applicable funding agency here. If none, delete this.}
% }

% \title{Over-the-Air Computation Design for Uplink of Federated Learning in Massive MIMO Systems}
% \title{When Federated Learning Meets Massive MIMO}
\title{Random Orthogonalization for Federated Learning in Massive MIMO Systems}

%\author{\IEEEauthorblockN{Xizixiang Wei$^*$, Cong Shen$^*$, Jing Yang$^\dag$, H. Vincent Poor$^\ddag$}

%\IEEEauthorblockA{$^*$ Department of Electrical and Computer Engineering, University of Virginia, USA\\
%$^\dag$ Department of Electrical Engineering, The Pennsylvania State University, USA\\
%$^\ddag$ Department of Electrical and Computer Engineering, Princeton University, USA
%}
%}

\author{Xizixiang~Wei \qquad Cong~Shen \qquad Jing~Yang \qquad H.~Vincent~Poor
\thanks{A preliminary version of this work has been presented at the 2022 IEEE International Conference on Communications \cite{wei2022icc}.}
\thanks{Xizixiang Wei and Cong Shen are with the Charles L. Brown Department of Electrical and Computer Engineering, University of Virginia, USA. (E-mail: \texttt{\{xw8cw,cong\}@virginia.edu}.)

Jing Yang is with the Department of Electrical Engineering, The Pennsylvania State University, USA. (E-mail: \texttt{yangjing@psu.edu}.)

H. Vincent Poor is with the Department of Electrical and Computer Engineering, Princeton University, USA. (E-mail: \texttt{poor@princeton.edu}.)}
%\thanks{The work was partially supported by the US National Science Foundation (NSF) under Grant ECCS-2033671.}
}

\maketitle

\begin{abstract}
% v2

We propose a novel communication design, termed \emph{random orthogonalization}, for federated learning (FL) in a massive multiple-input and multiple-output (MIMO) wireless system. The key novelty of random orthogonalization comes from the tight coupling of FL and two unique characteristics of massive MIMO -- channel hardening and favorable propagation. As a result, random orthogonalization can achieve natural over-the-air model aggregation without requiring transmitter side channel state information (CSI) for the uplink phase of FL, while significantly reducing the channel estimation overhead at the receiver. We extend this principle to the downlink communication phase and develop a simple but highly effective model broadcast method for FL. We also relax the massive MIMO assumption by proposing an enhanced random orthogonalization design for both uplink and downlink FL communications, that does not rely on channel hardening or favorable propagation. Theoretical analyses with respect to both communication and machine learning performance are carried out. In particular, an explicit relationship among the convergence rate, the number of clients, and the number of antennas is established. Experimental results validate the effectiveness and efficiency of random orthogonalization for FL in massive MIMO.

% v1
% We propose a novel uplink communication method, termed \emph{random orthogonalization}, for federated learning (FL) in a massive multiple-input and multiple-output (MIMO) wireless system. The key novelty of random orthogonalization comes from the tight coupling of FL model aggregation and two unique characteristics of massive MIMO -- channel hardening and favorable propagation. As a result, random orthogonalization can achieve natural over-the-air model aggregation without requiring transmitter side channel state information, while significantly reducing the channel estimation overhead at the receiver. Theoretical analyses with respect to both communication and machine learning performances are carried out. In particular, an explicit relationship among the convergence rate, the number of clients and the number of antennas is established. Experimental results validate the effectiveness and efficiency of random orthogonalization for FL in massive MIMO.

\end{abstract}

\begin{IEEEkeywords}
Federated Learning; Convergence Analysis; Massive MIMO.
\end{IEEEkeywords}

\section{Introduction}
\label{sec:intro}

Machine learning (ML) model communication is widely considered as one of the primary bottlenecks for federated learning (FL) \cite{mcmahan2017fl,konecny2016fl,chen2021communication}. This is because an FL task consists of multiple learning rounds, each of which requires uplink and downlink model exchanges between clients and the server. The limited communication resources in both uplink and downlink, combined with the detrimental effects from channel fading, noise, and interference, severely impact the \emph{scalability} (in terms of the number of participating clients) of FL in a wireless communication system.
%Compared with downlink broadcasting, uplink communication is more challenging in FL. Due to the stringent power constraint at edge devices, channel noise and fading have more conspicuous impacts on uplink communications. More importantly, the limited uplink communication resources may severely limit the \emph{scalability} of FL, negatively affecting one of its primary features \cite{kairouz2019advances}.
%Moreover, massively involved clients in FL bring heavy loads on the scalability of wireless networks over limited uplink communication resources.

One promising technique to tackle the scalability problem of FL over wireless communications is over-the-air computation (also known as {AirComp}); see \cite{niknam2020federated} and the references therein.  Instead of the standard approach of decoding the individual local models of each client and then aggregating, AirComp allows multiple clients to transmit uplink signals in a superpositioned fashion, and decodes the average global model directly at the FL server. In order to achieve this goal, a common approach is to ``invert'' the fading channel at each transmitter \cite{zhu2019broadband,cao2020tpc}, so that the sum model can be obtained at the server. AirComp has attracted considerable interest and a detailed literature review can be found in Section~\ref{sec:related}.  

However, much of the existing work on AirComp has several limitations. First, these methods often require channel state information at the transmitter (CSIT) for each individual client. The process of enabling individual CSIT is complicated -- in a frequency division duplex (FDD) system, this involves the receiver estimating the channels and then sending back the estimates to the transmitters; in a time division duplex (TDD) system, one can benefit from channel reciprocity \cite{TV:05,G:05}, but there is still a need for an independent pilot for each client. In both cases, practical mechanisms to obtain individual CSIT do not scale with the number of clients. In addition, the precision of CSIT is often worse than that of channel state information at the receiver (CSIR).  Second, most AirComp approaches in the literature require a channel inversion-type power control, which is well known to ``blow up'' when at least one of the channels is experiencing deep fading \cite{TV:05}. %Third, existing AirComp literature almost exclusively focuses on improving the scalability and efficiency of the uplink communication phase in FL. How to address these challenges in the downlink communication phase is underdeveloped. 
\zixiangb{Third, AirComp approaches focus on improving the scalability and efficiency of the uplink communication phase in FL. How to address these challenges in the downlink communication phase remains underdeveloped.} %Third, existing AirComp solutions almost exclusively apply only to the uplink communication phase in FL. The underlying philosophy has not been realized in the downlink communication phase. 
%Last but not the least, the AirComp solution does not naturally extend to multiple-input and multiple-output (MIMO) systems where the channels become vectors, which makes over-the-air computation nontrivial.

Another important limitation is that the AirComp solution does not naturally extend to multiple-input and multiple-output (MIMO) systems where the uplink and downlink channels become vectors. Compared with the studies in scalar channels, there are only a few recent papers that explore the potential of MIMO for wireless FL. MIMO beamforming design to optimize FL has been studied in \cite{yang2020federated,elbir2020federated}. Coding, quantization, and compressive sensing over MIMO channels for FL have been studied in \cite{huang2020physical,jeon2020compressive}. Nevertheless, none of these works tightly incorporates the unique properties of MIMO to the FL communication design. On the other hand, if we ignore the unique characteristics of FL, MIMO can also be utilized in a straightforward manner. \zixiangb{In the uplink phase, we can use conventional MIMO estimators such as zero-forcing (ZF) or minimum mean square error (MMSE) to estimate each local model, and then compute the global model. In the downlink phase, we can design MIMO precoders to broadcast the global model.} However, \zixiangb{these approaches incur a large channel estimation overhead}, especially when the channels have high dimensions. Moreover, matrix inversions in the ZF or MMSE estimators \zixiangb{and the optimization algorithms for the precoding design} are computationally demanding, in particular for massive MIMO. This increases the complexity and latency of the overall system. \zixiangb{In addition}, decoding individual local models also makes it easier for the server to sketch the data distribution of the clients, leading to potential privacy leakage.

% This paper aims at designing simple-yet-effective FL communication methods that enable over-the-air computation for both uplink and downlink phases, to address the scalability challenge in FL. 
This paper aims at designing simple-yet-effective FL communication methods that can efficiently address the scalability challenge in FL for both uplink and downlink phases. The novelty comes from a tight integration of MIMO and FL -- our design explicitly utilizes the characteristics of both components. To illustrate the key idea, we start with  \emph{massive} MIMO where the base station (BS) has a large number of antennas. The proposed framework only requires the BS to estimate a \emph{summation channel}, which significantly alleviates the burden on channel estimation\footnote{For example, a single pilot can be used by all clients as long as it is sent synchronously, regardless of the number of clients that participate in the current FL round.}. Moreover, our approach is agnostic to the number of clients, and thus improves the scalability of FL. By leveraging the unique {channel hardening} and {favorable propagation} properties of massive MIMO, the proposed principle, termed \emph{random orthogonalization}, allows the BS to directly compute the global model via a simple linear projection operation, hence achieving extremely low complexity and low latency in the uplink communication phase.  We then extend the random orthogonalization design to the downlink communication phase, which leads to a simple but highly effective model broadcast method for FL. As the random orthogonalization designs rely on channel hardening and favorable propagation to eliminate the interference, which do not always hold in practice (e.g., when the number of antennas is small), we further propose an \emph{enhanced random orthogonalization} design for both uplink and downlink FL communications, that leverages \emph{channel echos} to compensate for the lack of channel hardening and favorable propagation. The enhanced random orthogonalization design thus can be applied to a general MIMO system. To analyze the performances of random orthogonalization, we derive the Cramer-Rao lower bounds (CRLBs) of the average model estimation errors as a theoretical benchmark. Moreover, taking both interference and noise into consideration, a novel convergence bound of FL is derived for the proposed methods over massive MIMO channels. Notably, we establish an explicit relationship among the convergence rate, the number of clients, and the number of antennas, which provides practical design guidance for wireless FL. Extensive numerical results validate the effectiveness and efficiency of the proposed random orthogonalization principle in a variety of FL and MIMO settings.

\if{0}

% To tackle the scalability problem in FL uplink communications, several over-the-air computation (also known as \emph{AirComp}) mechanisms have been exploited in wireless FL (see \cite{niknam2020federated} and the references therein). Instead of decoding the individual local models of each client and then aggregating, AirComp allows multiple clients to transmit uplink signals in a superpositioned fashion, and decodes the average model (global model) directly at the FL server. Zhu et al. \cite{zhu2019broadband} propose an analog aggregation framework which ``inverts'' the fading channel at each transmitter, so that the sum model can be directly obtained at the server. However, the fundamental limitation of analog aggregation is that it requires channel state information at transmitter (CSIT). The process of enabling CSIT is complicated and the precision of CSIT is often worse than the channel state information at receiver (CSIR). Besides, analog aggregation essentially requires a channel inversion power control, which is well known to ``blow up'' when channel is in deep fade. Moreover, analog aggregation does not naturally extend to multiple-input and multiple-output (MIMO) systems where the uplink channels become vectors, which makes channel inversions at the transmitters nontrivial.  

This paper aims at designing simple-yet-effective FL communication methods that enable over-the-air computation for both uplink and downlink phases. To address the scalability challenge in FL, we explore another design degree of freedom (d.o.f.) in modern wireless systems: \emph{massive MIMO}. The proposed framework only requires the base station (BS) to estimate a \emph{summation channel}, which significantly alleviates the burden on uplink channel estimation in FL. Moreover, this approach is agnostic to the number of clients, making it attractive for the scalability of FL. By tightly integrating the channel hardening and favorable propagation properties of massive MIMO, the proposed principle, termed \emph{random orthogonalization}, allows the BS to directly compute the global model via a simple linear projection operation, thus achieving extremely low complexity and low latency for the uplink communication phase of FL.  
We then extend the random orthogonalization design to the downlink communication phase, which leads to a simple but highly effective model broadcast method for FL. 
As the vanilla random orthogonalization designs rely on channel hardening and favorable propagation to eliminate the interference, which does not always hold in practice (e.g., when the number of antennas is small), we further propose an enhanced random orthogonalization design for both uplink and downlink FL communications, that leverages \emph{channel echos} to compensate for the lack of channel hardening and favorable propagation. 
To analyze the performances of random orthogonalization, we derive the Cramer-Rao lower bounds (CRLBs) of the average model estimation as a theoretical benchmark. Moreover, taking both interference and noise into consideration, a novel convergence bound of FL is derived for the proposed method over massive MIMO channels. Notably, we establish an explicit relationship among the convergence rate, the number of clients $K$, and the number of antennas $M$, which provides practical design guidance for wireless FL. Extensive numerical results validate the effectiveness and efficiency of the proposed method.

The potential of MIMO for wireless FL has attracted interest recently.  %The authors of \cite{vu2020cell} propose to jointly optimize communication and machine learning (ML) model computation via online successive convex approximation. 
MIMO beamforming design to optimize FL has been studied in \cite{yang2020federated,elbir2020federated}. Coding, quantization, and compressive sensing over a (massive) MIMO channel for FL has been studied in \cite{huang2020physical,jeon2020compressive}. Nevertheless, none of these works tightly incorporates the unique properties of massive MIMO to the FL uplink communication design. On the other hand, massive MIMO can also be utilized in a straightforward manner, e.g., one can use traditional MIMO estimators such as zero-forcing (ZF) or minimum mean-square-error (MMSE) to estimate each local model, and then compute the global model. However, this heuristic approach requires large channel estimation overhead, especially in massive MIMO. Decoding individual local models also makes it easier for the server to sketch the data distribution of a client. Moreover, matrix inversion operations in ZF or MMSE estimators are computationally demanding, which increases the complexity and latency.

\mypara{Large channel estimation overhead.} To decode individual $\{w_{t+1}^{k,i}\}_{i = 1}^d$, traditional MIMO estimators usually require full CSIR, which consumes expensive time-frequency overhead for channel estimation when $M$ is large. Intuitively, estimating $\sum_{k\in[K]} w_{t+1}^{k,i}$ from the received signal does not require full CSIR, as summation result is a weaker estimate than figuring out each individual parameter. Therefore, traditional MIMO estimators may waste too much channel estimation overhead when directly applying to FL.

\mypara{Privacy concerns.} FL is proposed to protect users' privacy. However, decoding individual $\{w_k\}_{k=1}^K$ makes it easier for server to sketch data distribution of a specific user. To further address the privacy concern, we would like the server to obtain $\sum_{k\in[K]} w_{t+1}^{k,i}$ without a priori knowledge $\{w_{t+1}^{k,i}\}_{i = 1}^d$.

\mypara{High system latency.} Unlike AirComp, traditional MIMO estimators cannot directly generate a averaged local model and matrix inversion operations in ZF or MMSE estimator also require extra computations, which increase the system latency for server aggregation. 

\fi
% Regarding a BS with $M$ antennas along with $K$ single-antenna devices as a MIMO system, one can use traditional MIMO estimator, e.g., zero-forcing (ZF), minimize-mean-square-error (MMSE), etc., to estimate each local model individually, and calculate the global model by a summation operation. However, such heuristic ideas require large channel estimation overheads, especially in massive MIMO systems. Decoding individual local model also makes it easier for the server to sketch data distribution of a specific user, which spoils the privacy protection feature of FL. Moreover, matrix inversion operations in ZF or MMSE estimators also require extra computation resources, which increases the system latency for global aggregations. Note that the potentials of MIMO in FL have been noticed by the academia. \cite{yang2020federated} proposed a well-known DC algorithm, which designs a beamformer enabling the BS to obtain the global model of FL directly with a mean-square-error (MSE) guarantee. However, the algorithm still requires full CSIR and is of high computational complexity. 

The remainder of this paper is organized as follows. Related works are surveyed in Section~\ref{sec:related}. Section \ref{sec:model} introduces the FL pipeline and the wireless communication model. The proposed random orthogonalization principle is presented in Section \ref{sec:RO}, and then the enhanced design is proposed in Section~\ref{sec:enhanced}. 
Analyses of the CRLB as well as the FL model convergence are given in Section \ref{sec:analysis}. Experimental results are reported in Section \ref{sec:sim}, followed by the conclusions in Section \ref{sec:conc}.

\section{Related Works}
\label{sec:related}

\mypara{Improve FL communication efficiency.} The original Federated Averaging (\textsc{FedAvg}) algorithm \cite{mcmahan2017fl} reduces the communication overhead by only periodically averaging the local models. Theoretical understanding of the communication-computation tradeoff has been actively pursued and, depending on the underlying assumptions (e.g., independent and identically distributed (i.i.d.) or non-i.i.d. local datasets, convex or non-convex loss functions, gradient descent or stochastic gradient descent (SGD)), convergence analyses have been carried out \cite{stich2018local,wang2018cooperative,li2019convergence}. The approaches to reduce the payload size or communication frequency include sparsification \cite{thonglek2022sparse,oh2022fedvqcs} and quantization \cite{zhu2020one,amiri2020federated2,du2020high}. There are also efforts to improve resource allocation \cite{wen2021adaptive, chen2022federated,wang2021federated}. %Nevertheless, mostly of these works consider interference-free rate-limited communication links and often ignore the physical characteristics of wireless communication channels.

\mypara{AirComp for FL.}
As a special case of computing over multiple access channels \cite{nazer2007it}, AirComp \cite{zhu2019broadband,yang2020federated,amiri2020federated,cao2020tpc} leverages the signal superposition properties in a wireless multiple access channel to efficiently compute the average ML model. This technique has attracted considerable interest, as it can reduce the uplink communication cost to be (nearly) agnostic to the number of participating clients. Client scheduling and various power and computation resource allocation methods have been investigated \cite{chen2020convergencenew,xu2020client,sun2021dynamic,ma2021user, lee2021adaptive, wadu2021joint}. The assumption of full CSIT is relaxed in \cite{sery2020tsp} by only using the phase information of each individual channel. Convergence guarantees of Aircomp under different constraints are reported in \cite{lin2021deploying,aygun2022over, wan2021convergence,sery2021over,sun2022time}. %Regarding imperfect or noisy communications in FL, existing literature is dominated by uplink-only noisy communications \cite{amiri2020federated,guo2021iot,cao2021optimized,wang2021federated1,lin2022relay}. There is very limited study on downlink noisy communications for FL; \cite{amiri2020convergence} proposes and analyzes downlink digital and analog transmissions while assuming an error-free uplink. \cite{wang2022interference, wang2022edge, wei2021TCCN} study the imperfection of both uplink and downlink communication. 
%Nevertheless, most designs still require individual CSIT for each client, and they focus on scalar channels.

% Under the topic of Aircomp, \cite{chen2020convergencenew,xu2020client,sun2021dynamic,ma2021user, lee2021adaptive, wadu2021joint} investigated different client scheduling schemes and various power and computation resource allocation strategy. Several studies have provided convergence guarantees of Aircomp under different practical constraints and types of heterogeneity \cite{lin2021deploying,aygun2022over, wan2021convergence,sery2021over, sun2022time}. In the topic of imperfect/noisy communications in FL, existing literature is dominated by uplink-only noisy communications \cite{amiri2020federated,guo2021iot,cao2021optimized, wang2021federated1, lin2022relay}. There is very limited study on downlink-only noisy communications for FL; \cite{amiri2020convergence} proposes and analyzes downlink digital and analog transmissions while assuming an error-free uplink. \cite{wang2022interference, wang2022edge, wei2021TCCN} tickle the imperfection of both uplink and downlink communication. However, most of the above designs require full CSIT at clients and are focus on SISO systems.

\mypara{Communication design for FL in MIMO systems.} %There is also an increased effort on the communication algorithm design for FL in MIMO systems. 
There are recent studies to optimize the communication efficiency and learning performance in MIMO systems for FL, including transmit power control \cite{vu2021energy,hamdi2021federated,vu2022joint}, data rate allocation \cite{vu2020cell}, compression \cite{jeon2020compressive,mu2022communication}, and learning rate optimization \cite{xu2021learning}. Several beamforming designs have been proposed to improve the performance of FL \cite{yang2020federated,lin2022distributed,zhong2021over,xia2020fast}. However, these methods require full CSIT and rely on complex optimization methods to design the beamformers, which is impractical in massive MIMO due to the high communication and computation cost. There is very limited study that relaxes the individual CSIT assumption in wireless FL over MIMO channels, with notable exceptions of \cite{amiri2021blind,tegin2021blind}. However, they only focus on the uplink communication phase.

% Probably the most closely related to our work,  

% \cite{amiri2021blind} proposed a beamforming method leveraging the properties of MIMO channel, but only focuses on the uplink communication.

\section{System Model}
\label{sec:model}

\subsection{FL Model}
The FL problem studied in this paper mostly follows that in the original paper \cite{mcmahan2017fl}. In particular, we consider an FL system with one central parameter server (e.g., base station) and a set of at most $N$ clients (e.g., mobile devices). Client $k \in [N] \triangleq \{1, 2, \cdots, N\}$ stores a local dataset $\mathcal{D}_k = \{\vect{\xi}_i\}_{i=1}^{D_k}$, with its size denoted by $D_k$, that never leaves the client. Datasets across clients are assumed to be non-i.i.d. and disjoint. The maximum data size when all clients participate in FL is $D = \sum_{k=1}^N D_k$. 
Each data sample $\vect{\xi}$ is denoted by an input-output pair $\{\vect{x}, y\}$ for a supervised learning task. We use $f_k(\wbf)$ to denote the local loss function at client $k$, which measures how well an ML model with parameter $\wbf \in \mathbb{R}^d$ fits its local dataset. The global objective function over all $N$ clients is
% \begin{equation*}
$
    f(\wbf) = \sum_{k \in [N]} p_k f_k(\wbf),
$
% \end{equation*}
where $p_k = \frac{D_k}{D}$ is the weight of each local loss function, and the purpose of FL is to distributively find the optimal model parameter $\wbf^*$ that minimizes the global loss function: 
% \begin{equation*}
$
    \wbf^* \triangleq \argmin_{\wbf\in\mathbb{R}^d}f(\wbf).
$
% \end{equation*}
Let $f^*$ and $f_k^*$ be the minimum value of $f(\wbf)$ and $f_k(\wbf)$, respectively. Then, $\Gamma = f^* - \sum_{k=1}^N \frac{D_{k}}{D}f_k^*$ quantifies the degree of non-i.i.d. as defined in \cite{li2019convergence}.
% A typical wireless FL pipeline is illustrated in Fig.~\ref{fig:FLpipeline}. 

Specifically, the FL pipeline \cite{mcmahan2017fl} iteratively executes the following steps at the $t$-th learning round.%, $t=1, \cdots, T$.

\begin{enumerate}
\item \textbf{Downlink communication.} The BS broadcasts the current global model $\wbf_t$ to $K$ {randomly selected} clients over the downlink wireless channel. We use $[K]$ to denote the selected client set to simplify the notation, but this should be interpreted as possibly different sets of clients at different round $t$.
\item \textbf{Local computation.} Each selected client uses its local dataset to train a local model improved upon the received global model $\wbf_t$. We assume that mini-batch SGD is used to minimize the local loss function. The parameter is updated iteratively (for $E$ steps) at client $k$ as: $\wbf_{t,0}^k = \wbf_t; \wbf_{t,\tau}^k = \wbf_{t,\tau-1}^k - \eta_t \nabla \tilde{f}_k(\wbf_{t,\tau - 1}^k), \forall \tau = 1, \cdots, E; \wbf_{t+1}^k = \wbf_{t,E}^k$, where $\nabla\tilde{f}_k(\wbf)$ denotes the mini-batch SGD operation at client $k$ on model $\wbf$, and $\eta_t$ is the learning rate (step size).
\item \textbf{Uplink communication.} Each selected client uploads its latest local model to the server synchronously over the uplink wireless channel.
\item \textbf{Server aggregation.} The BS aggregates the received noisy local models $\tilde \wbf_{t+1}^k$ to generate a new global model: $\wbf_{t+1} = \Sigma_{k\in [K]} \tilde p_{k} \tilde \wbf_{t+1}^k$, where $\tilde p_{k} \triangleq \frac{D_k}{\Sigma k \in [K] D_k}$. For simplicity, we assume that each local dataset has equal size, \zixiangb{hence $\tilde p_{k} = {1}/{K}$. }
\end{enumerate}

This work focuses on \textit{both downlink and uplink} communication design in the FL pipeline. We \zixiangb{next} describe the communication models under consideration. 
%In particular, we take advantage of the unique properties of (massive) MIMO to design efficient communication schemes and enable server aggregation over the air. %By taking advantages of massive MIMO wireless channels, we design a uplink FL framework with little system overhead. 

% \begin{figure}[t]
%     \centering
%     %\includegraphics[width = 0.8\linewidth]{}
%     \includegraphics[width = 0.6\linewidth]{flpipeline.png}
%     \caption{The wireless FL pipeline.}
%     \vspace{-0.2in}
%     \label{fig:FLpipeline}
% \end{figure}

\subsection{Communication Model}
Consider a MIMO \zixiangb{TDD} communication system equipped with $M$ antennas at the BS (server) where $K$ randomly-selected single-antenna devices (clients) are involved in the $t$-th round of the aforementioned FL task. Let $\hbf_k\in\mathbb{C}^{M\times1}$ denote the \zixiangb{uplink} wireless channel between the $k$-th client and the BS. During the uplink communication phase, each client transmits the difference between the received global model and the newly computed local model 
%${\mathsf{x}_t^k} = \wbf_t - \wbf_{t+1}^k\in\mathbb{R}^d, \forall k \in [K]$  
 \begin{equation}\label{eq:diffModel}
    \mathsf{x}_t^k = \wbf_t - \wbf_{t+1}^k\in\mathbb{R}^d,\;\;\forall k \in [K]
\end{equation}
to the BS, where ${\mathsf{x}_t^k} \triangleq [x_{1,t}^k, \cdots, x_{d,t}^k]^T$. To simplify the notation, we omit index $t$ by using  $x_{k,i}$ instead of $x_{i,t}^k$ barring any confusion. We assume that each client transmits every element of the differential model $\{x_{k,i}\}_{i = 1}^d$ via $d$ shared time slots\footnote{In general, differential model parameters can be transmitted over any $d$ shared orthogonal communication resources (e.g., time or frequency). For simplicity, we use $d$ time slots here.}. For a given element $x_{k,i}$, the received signal at the BS is $\ybf_i^{\ssf{UL}} = \sqrt{P_{\ssf{Client}}}\sum_{k\in[K]} {\bf h}_k x_{k,i} +  \nbf_i, \forall i = 1,\cdots,d$, 
% \begin{equation}\label{eq:FLuplink}
%     \ybf_i^{\ssf{UL}} = \sqrt{P_{\ssf{Client}}}\sum_{k\in[K]} {\bf h}_k x_{k,i} +  \nbf_i,\;\;\forall i = 1,\cdots,d, 
% \end{equation}
where $P_{\ssf{Client}}$ is the maximum transmit power of each client, and $\nbf_i\in\mathbb{C}^{M\times1}$ represents the uplink noise. Denoting $\Hbf \triangleq \squab{\hbf_1, \cdots, \hbf_K}\in\mathbb{C}^{M\times K}$ and $\xbf_i \triangleq \squab{x_{1,i},\cdots,x_{K,i}}^T \in\mathbb{R}^{K\times 1}, \forall i = 1,\cdots, d$, the received signal\footnote{For simplicity, we assume real signals $\{x_{k,i}\}_{i = 1}^d$ are transmitted in this paper. It can be easily extended to complex signals by stacking two real model parameters into a complex signal, so that the full d.o.f. is utilized.} can be written as
\begin{equation}\label{eq:MatrixForm}
    \ybf_i^{\ssf{UL}} = \sqrt{P_{\ssf{Client}}}\Hbf \xbf_i + \nbf_i.
\end{equation}
It is easy to see that \eqref{eq:MatrixForm} is a standard MIMO communication model and traditional MIMO estimators can be adopted to estimate $\hat\xbf_i = \squab{\hat x_{1,i},\cdots,\hat x_{K,i}}^T$. However, as discussed before, decoding $\{x_{k,i}\}_{i = 1}^d$ individually and obtaining the aggregated parameter $\tilde{x}_i\triangleq\sum_{k\in[K]} \hat x_{k,i}$ by a summation is inefficient. %We propose a novel method that allows the BS to compute $\tilde{x}_i$ directly. 
After \zixiangb{the} BS decoding all aggregated parameter $\tilde \xbf_t \triangleq \squab{\tilde{x}_1, \cdots,\tilde{x}_d}^T$ in $d$ slots, it can compute the new global model as
\begin{equation}\label{eq:diffGlobal}
    \wbf_{t + 1} = \wbf_t + \frac{1}{K}\tilde \xbf_t.
\end{equation}

\zixiangb{In the downlink, after the computation of the global model $\wbf_{t + 1} = [w_{1, t + 1}, \cdots, w_{d, t + 1}]^T$,} the BS broadcasts the global model to all clients via a precoder $\fbf\in\mathbb{C}^{M \times 1}$, and the received signal at client $k$ is given by
\begin{equation}\label{eq:DLcomp}
    y_i^{\ssf{DL}} = \sqrt{P_{\ssf{BS}}}\hbf^H_{k,t+1}\fbf w_{i, t + 1} + z_i^k,\;\;\forall i = 1,\cdots,d,
\end{equation}
where $P_{\ssf{BS}}$ is the maximum transmit power of the BS and $z_i^k $ denotes the downlink noise. We note that channel $\hbf_{k,t+1}^H\in\mathbb{C}^{1\times M}$ denotes the downlink vector channel that is reciprocal of the uplink channel in round $t+1$. 
Each client \zixiangb{then} computes an estimated global model and uses it as a new initial point for the next learning round after all $d$ elements are received via \eqref{eq:DLcomp}. 
Traditionally, the precoder design of $\fbf$ belongs to broadcasting common messages (see \cite{sidiropoulos2004broadcasting} and the references therein). However, existing methods become impractical due to the difficulty in obtaining full CSI in massive MIMO systems, which motivates us to design $\fbf$ with only partial CSI. For mathematical simplicity, we assume a normalized symbol power\footnote{The parameter normalization and de-normalization procedure in wireless FL follows the same as that in the Appendix of \cite{zhu2019broadband}.}, i.e., $\expt\norm{x_{k,i}}^2 = 1$ and $\expt\norm{w_{i,t + 1}}^2 = 1$; {normalized} Rayleigh block fading channels\footnote{{The large-scale pathloss and shadowing effect is assumed to be taken care of by, e.g., open loop power control \cite{SesiaLTE}.} } $\hbf_k\sim\mathcal{CN}(0,{\frac{1}{M}\bf I})$ in $d$ slots; and i.i.d. Gaussian noise $\nbf_i \sim\mathcal{CN}(0,\frac{\sigma_{\ssf{UL}}^2}{M}{\bf I})$ and $z_i^k \sim\mathcal{CN}(0,\sigma_{\ssf{DL}}^2)$. 
We define the signal-to-noise ratio (SNR) as $\ssf{SNR}_{\ssf{UL}} \triangleq {P_{\ssf{Client}}}/{\sigma_{\ssf{UL}}^2}$ for uplink communications and $\ssf{SNR}_{\ssf{DL}} \triangleq {P_{\ssf{BS}}}/{\sigma_{\ssf{DL}}^2}$ for downlink communications, and without loss of generality (w.l.o.g.) we set $P_{\ssf{Client}} = 1$ and $P_{\ssf{BS}} = 1$. 
% Without loss of generality, we set $P = 1$ and let $\sigma^2$ dictate the signal-to-noise ratio (SNR). 

\section{Random Orthogonalization}
\label{sec:RO}

% We propose a new wireless FL framework by developing random orthogonalization. 
In this section, we present the key ideas of random orthogonalization. With this principle, the global model can be directly obtained at the BS via a simple operation in the uplink communications, and the global model can be broadcast to clients efficiently in the downlink communications. By exploring favorable propagation and channel hardening in massive MIMO, our proposed methods only require \emph{partial} CSI, which significantly reduces the channel estimation overhead.

% We study a wireless FL framework leveraging random orthogonalization of massive MIMO channels. In uplink communication, the global model can be directly obtained at the BS via a simple operation; in downlink communication, the global model can be broadcast to clients efficiently. By exploring favorable propagation and channel hardening in massive MIMO, our proposed FL framework only requires \emph{partial} CSI, which significantly reduces channel estimation overhead of the system.

\begin{figure*}
    \centering
    \includegraphics[width = \linewidth]{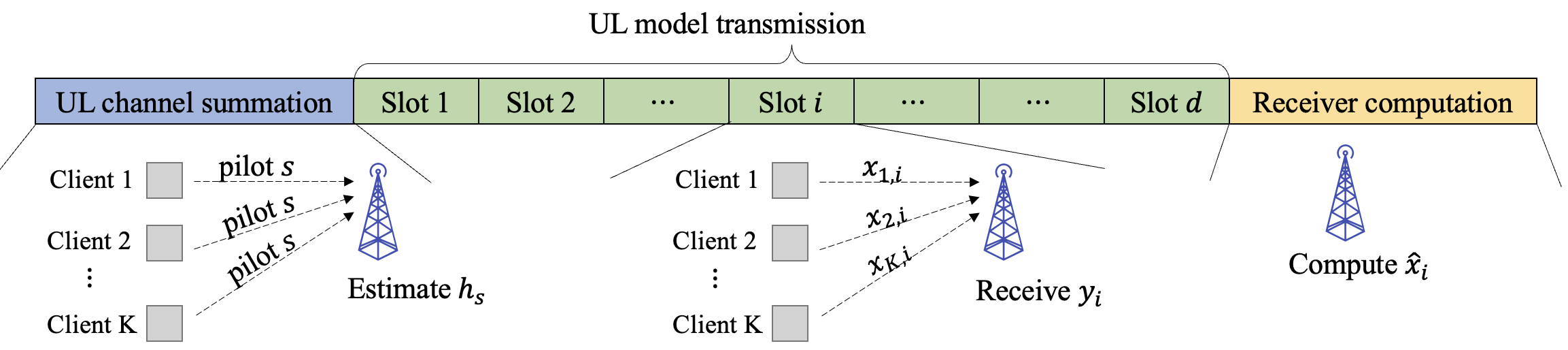}
    \caption{An illustration of the proposed uplink FL design with massive MIMO.}
    \label{fig:framework}
    \vspace{-0.2in}
\end{figure*}

\subsection{Uplink Communication Design}
The designed framework contains the following three main steps in the uplink communications.

\mypara{(U1) Uplink channel summation.}
The BS first schedules all clients participating in the current learning round to transmit a \emph{common} pilot signal $s$ synchronously. The received signal at the BS is
\begin{equation}\label{eq:ULcom}
    \ybf_s = \sum_{k\in[K]} \hbf_k s + \nbf_s,
\end{equation}
and the BS can estimate the {\em summation} of channel {vectors} $\hbf_s \triangleq \sum_{k \in [K]} \hbf_k$ from the received signal $\ybf_s$ (e.g., via a maximum likelihood estimator). We note that the complexity of this sum channel estimation does not scale with $K$. For the purpose of illustrating our key ideas, we assume perfect summation channel estimation at the BS for now.

\mypara{(U2) Uplink model transmission.} All selected clients transmit model differential parameters $\{x_{k,i}\}_{i=1}^d$ to the BS in $d$ shared time slots. The received signal for each differential model element is  $\ybf_{i} = \sum_{k \in [K]} \hbf_k x_{k,i} + \nbf_i, \forall i = 1,\cdots, d$. 
% \begin{equation*}
%     \ybf_{i} = \sum_{k \in [K]} \hbf_k x_{k,i} + \nbf_i,\;\;\forall i = 1,\cdots, d.
% \end{equation*}

\mypara{(U3) Receiver computation.} The BS estimates each aggregated model element via the following simple \emph{linear projection} operation:
% \begin{equation}\label{eq:OTAComp}
\begin{align}
    & \tilde x_i = \hbf_s^H \ybf_{i} = \sum_{k \in [K]} \hbf_k^H\sum_{k \in [K]} \hbf_k x_{k,i} +  \sum_{k \in [K]} \hbf_k^H\nbf_i \nonumber\\
    & \overset{(a)}{=} \underbrace{\sum_{k \in [K]} \hbf_k^H \hbf_k x_{k,i}}_{\text{Signal}} + \underbrace{\sum_{k \in [K]}\sum_{j \in [K], j\neq k} \hbf_k^H \hbf_j x_{j,i}}_{\text{Interference}}  +  \underbrace{\sum_{k \in [K]} \hbf_k^H\nbf_i}_{\text{ Noise}} \overset{(b)}{\approx} \sum_{k \in [K]} x_{k,i},\;\;\forall i = 1,\cdots, d. \label{eq:OTAComp}
\end{align}
% \end{equation}
The above three-step uplink communication procedure is illustrated in Fig.~\ref{fig:framework}. Based on Eqn.~\eqref{eq:OTAComp}, the BS then computes the global model via Eqn.~\eqref{eq:diffGlobal} and begins the downlink global model broadcast.

As shown in (a) of Eqn.~\eqref{eq:OTAComp}, inner product $\hbf_s^H \ybf_{i}$ can be viewed as the combination of three parts: signal, interference, and noise. We next show that, taking advantage of two fundamental properties of massive MIMO, the error-free approximation (b) in \eqref{eq:OTAComp} is asymptotically accurate (as the number of BS antennas $M$ goes to infinity). 

\mypara{Channel hardening.} Since each element of $\hbf_k$ is i.i.d. complex Gaussian, by the law of large numbers, massive MIMO enjoys channel hardening \cite{ngo2014aspects}: $\hbf_k^H\hbf_k\rightarrow 1$, as $M\rightarrow\infty$. 
% \begin{equation*}\label{eq:ChannelHardening}
%     \hbf_k^H\hbf_k\rightarrow 1,\;\;\text{as}\;\;M\rightarrow\infty.
% \end{equation*}
In practical systems, when $M$ is large but finite, for the signal part of (\ref{eq:OTAComp}), we have %\zixiangb{
\begin{equation}
    \expt_{\hbf}\squab{\sum_{k \in [K]} \hbf_k^H \hbf_k x_{k,i}} = \sum_{k \in [K]}x_{k,i},
\quad\text{and}\quad
    \variance_{\hbf}\squab{\sum_{k \in [K]} \hbf_k^H \hbf_k x_{k,i}} = \frac{\sum_{k \in [K]}x_{k,i}^2}{M}.
\end{equation}
%}

\mypara{Favorable propagation.} Since channels between different users are independent random vectors, massive MIMO also offers favorable propagation \cite{ngo2014aspects}: $\hbf_k^H\hbf_j\rightarrow 0$, as $M\rightarrow\infty$, $\forall k\neq j$.
% \begin{equation*}
%     \hbf_k^H\hbf_j\rightarrow 0,\;\; \text{as} \;\; M\rightarrow\infty,\;\;\forall k\neq j.
% \end{equation*}
Similarly, when $M$ is finite, we have
% \zixiangb{
\begin{equation}
    \expt_{\hbf}\squab{\sum_{k \in [K]}\sum_{j \in [K], j\neq k} \hbf_k^H \hbf_j x_{j,i}} = 0,
\quad\text{and}\quad
    \variance_{\hbf}\squab{\sum_{k \in [K]}\sum_{j \in [K], j\neq k} \hbf_k^H \hbf_j x_{j,i}} = \frac{(K-1)\sum_{k \in [K]}x^2_{k,i}}{M}.
\end{equation}
% }
Furthermore, the expectation of the noise part in \eqref{eq:OTAComp} is zero. Therefore, $\tilde{x}_i$ in \eqref{eq:OTAComp} is an unbiased estimate of the average model. For a given $K$, the variances of both signal and interference decrease in the order of $\mathcal{O}(1/M)$, which shows that \textit{massive MIMO offers {\bf random orthogonality} for analog aggregation over wireless channels}. In particular, the asymptotic element-wise orthogonality of channel vector ensures channel hardening, and the asymptotic vector-wise orthogonality among different wireless channel vectors provides favorable propagation. Both properties render the linear projection operation $\hbf_s^H \ybf_{i}$ an ideal fit for the server model aggregation in FL. 

To gain some insight of random orthogonality, we approximate the average signal-to-interference-plus-noise-ratio (SINR) after the operation in \eqref{eq:OTAComp} as
\begin{equation}\label{eq:SINR} 
\begin{split}
    \expt[\text{SINR}_i] \approx  \frac{\expt_{\hbf,x}\norm{\sum_{k \in [K]} \hbf_k^H \hbf_k x_{k,i}}^2}{\expt_{\hbf,\nbf,x}\norm{\sum_{k \in [K]}\sum_{j \in [K], j\neq k} \hbf_k^H \hbf_j x_{j,i} + \sum_{k \in [K]} \hbf_k^H\nbf_i}^2}
     = \frac{M}{K - 1 + 1/\ssf{SNR}},
\end{split}
\end{equation}
which grows linearly with $M$ for a fixed $K$. On the other hand, for a given number of antennas $M$, Eqn.~\eqref{eq:SINR} can be used to guide the choice of $K$ in each communication round to satisfy an SINR requirement. We will provide more details on the scalability of clients via the convergence analysis of FL with random orthogonalization in Section \ref{subsec:CvgAnaUL}. We note that Eqn.~\eqref{eq:SINR} is an approximate expression for SINR but it sheds light into the relationship between $K$ and $M$. This approximation, however, is not used in the convergence analysis of FL with random orthogonalization in Section \ref{subsec:CvgAnaUL}.
\begin{figure}
    \centering
    \includegraphics[width = .9\linewidth]{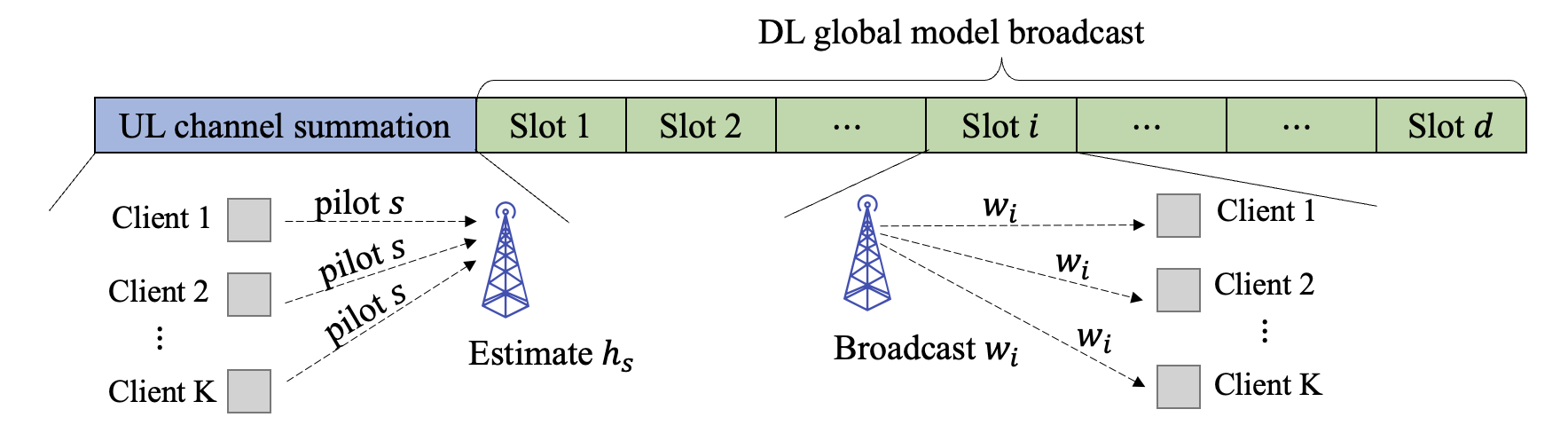}
    \caption{An illustration of the proposed downlink FL design with massive MIMO.}
    \label{fig:frameworkDL}
\end{figure}
\subsection{Downlink Communication Design}
Inspired by the uplink communication design, the downlink design contains the following two steps.

\mypara{(D1) Uplink channel summation.}  This step remains the same as \textbf{U1} in the uplink design. We similarly assume perfect sum channel estimation $\hbf_s = \sum_{k \in [K]} \hbf_k$ at the BS. 
% The BS first schedules all participating clients to transmit a \emph{common} pilot signal $s$ synchronously. The received signal at the BS is
% \begin{equation*}
%     \ybf_s = \sum_{k\in[K]} \hbf_k s + \nbf_s,
% \end{equation*}
% so that the BS can estimate the {\em summation} of channel {vectors} $\hbf_s \triangleq \sum_{k \in [K]} \hbf_k$ from the received signal $\ybf_s$ (e.g., via a maximum likelihood estimator). For simplicity, we assume perfect sum channel estimation at the BS.

\mypara{(D2) Downlink global model broadcast.} The BS broadcasts global model $\{w_i\}$ to all users, using the estimated summation channel $\hbf_s$ as the precoder. Hence the received signal at the $k$-th user is
\begin{equation}\label{eq:DLOTAComp}
\begin{split}
    y_k = \hbf_k^H \hbf_s w_i + z_i^k 
     \overset{(a)}{=}  \underbrace{\hbf_k^H \hbf_k w_i}_{\text{Signal}} + \underbrace{\sum_{j \in [K], j\neq k} \hbf_k^H \hbf_j w_i}_{\text{Interference}}  + \underbrace{z_i^k}_{\text{Noise}}
     \overset{(b)}{\approx} w_i\;\;\forall i = 1,\cdots, d.
\end{split}
\end{equation}
The above two-step downlink communication procedure is illustrated in Fig.~\ref{fig:frameworkDL}. Similar to the uplink case, the global model signal obtained at each client can also be regarded as the combination of three parts: signal, interference, and noise as shown in \eqref{eq:DLOTAComp}. Leveraging channel hardening and favorable propagation of massive MIMO channels as mentioned before, we have  
\begin{equation}
    \expt_{\hbf}\squab{ \hbf_k^H \hbf_k w_{i}} = w_{i}\quad
\text{and}\quad
    \variance_{\hbf}\squab{ \hbf_k^H \hbf_k w_{i}} = \frac{w_{i}^2}{M},
\end{equation}
for the signal part of (\ref{eq:DLOTAComp}). Besides, we have 
% \zixiangb{
\begin{equation}
    \expt_{\hbf}\squab{\sum_{j \in [K], j\neq k} \hbf_k^H \hbf_j w_{i}} = 0
\quad
\text{and}\quad
    \variance_{\hbf}\squab{\sum_{j \in [K], j\neq k} \hbf_k^H \hbf_j w_{i}} = \frac{(K-1)w^2_{i}}{M},
\end{equation}
% }
for the interference part. The above derivation demonstrates that, similar to the uplink design, received signals obtained via \eqref{eq:DLOTAComp} are unbiased estimates of global model parameters whose variances decrease in the order of $\mathcal{O}(1/M)$ with the increase of BS antennas. We next give a few remarks about the proposed uplink and downlink communication designs of FL with random orthogonalization.

\begin{remark}
In uplink communications, unlike the analog aggregation method in \cite{zhu2019broadband}, the proposed random orthogonalization does not require any individual CSIT. On the contrary, it only requires \zixiangb{partial CSIR, i.e., the estimation of a summation channel $\hbf_s$}, which is $1/K$ of the channel estimation overhead compared with the AirComp method in \cite{yang2020federated} or the traditional MIMO estimators. In downlink communications, the traditional precoder design for common message broadcast requires CSIT for each client. By using the summation channel $\hbf_s$ as the precoder for global model broadcast, only partial CSIT is needed. 
Since we assume a TDD system configuration, the downlink summation channel $\hbf_s$ can be estimated at a low cost utilizing channel reciprocity as shown in Step D1. %Moreover, if the wireless channels are time-invariant during both uplink and downlink communications, which happens when the channel coherence time is larger than the duration of one communication round, Steps U1 and D1 can be merged to further reduce the channel estimation overhead. 
Therefore, the proposed method is attractive in wireless FL due to its mild requirement of {partial CSI}. Moreover, the server obtains global models directly after a series of simple linear projections, which improves the privacy and reduces the system latency as a result of the extremely low computational complexity of random orthogonalization. The same applies to the downlink phase.
\end{remark}

\begin{remark}
Note that although we assume i.i.d. Rayleigh fading channels across different clients, the proposed random orthogonalization method is still valid for other channel models as long as channel hardening and favorable propagation are offered. In massive MIMO millimeter-wave (mmWave) communications, Rayleigh fading channels and light-of-sight (LOS) channels represent two extreme cases: rich scattering and no scattering. It is shown in \cite{ngo2014aspects} that both channel models offer asymptotic channel hardening and favorable propagation. We will discuss the general case that lies in between these two extremes in the next section as well as in the experiment results.  %In practice, we are likely to have a scenario which lies in between of these two cases. Therefore, it is reasonable to expect that in most practical environments, we have approximately channel hardening and favorable propagation, which enables the proposed random orthogonalization method for FL tasks. We will discuss the performance when channel hardening and favorable propagation are not strictly offered in the numerical results.
\end{remark}

\begin{figure}
    \centering
   \includegraphics[width = .93\linewidth]{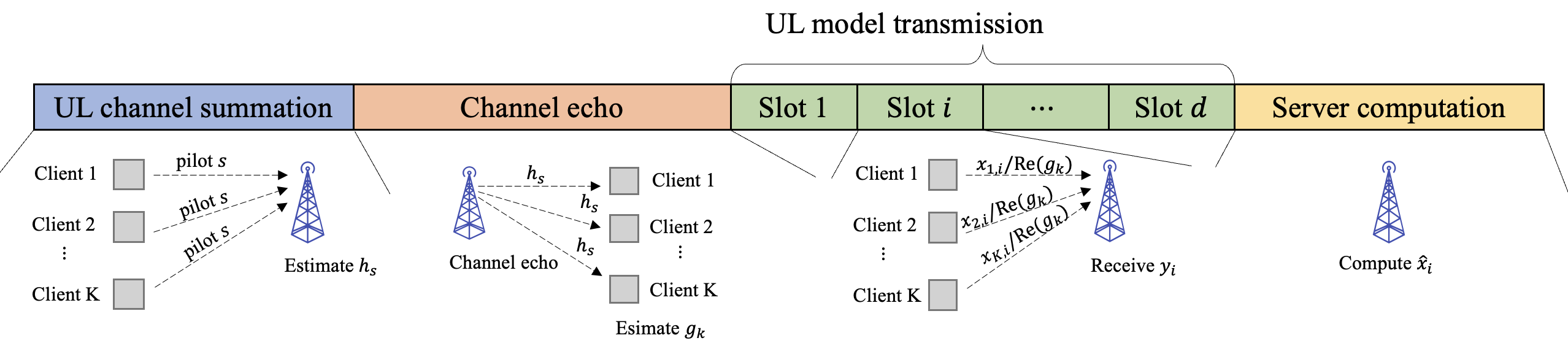}
    \caption{An illustration of the proposed enhanced uplink FL design with massive MIMO.}
    \label{fig:frameworkEnhancedUL}
\end{figure}
\begin{figure}
    \centering
    \includegraphics[width = .93\linewidth]{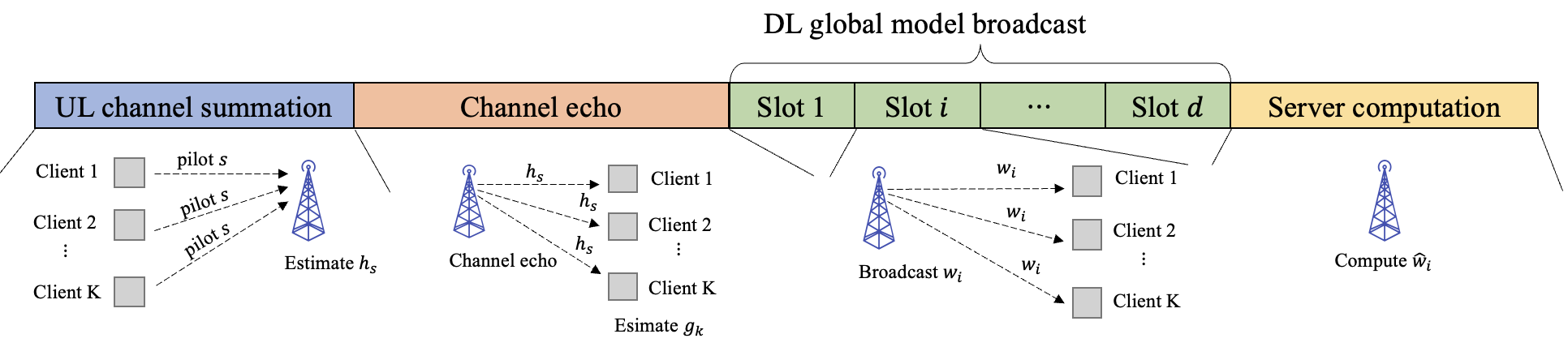}
    \caption{An illustration of the proposed enhanced downlink FL design with massive MIMO.}
    \label{fig:frameworkEnhancedDL}
\end{figure}

\section{Enhanced Random Orthogonalization Design}
\label{sec:enhanced}

The proposed random orthogonalization principle in Section \ref{sec:RO} requires channel hardening and favorable propagation. Although these two properties are quite common in massive MIMO systems as discussed before, in the case that channel hardening and favorable propagation are not available (e.g., the number of BS antennas is small), our design philosophy can still be applied by introducing a novel \textbf{channel echo mechanism}. In this section, we present an enhanced design to the methods in Section \ref{sec:RO} by taking advantage of channel echos.

Channel echo refers to that the receiver sends whatever it receives back to the original transmitter as the data payload. The main purpose of channel echo in the uplink communication of FL is to ``cancel" channel fading at each client. The enhanced design for the uplink communications contains the following four main steps, which is demonstrated in Fig.~\ref{fig:frameworkEnhancedUL}.

% \mypara{(EU1) Uplink channel summation.} The beginning of the enhanced design follows from the same as the random orthogonalization method. All involved clients at current learning round send the same pilot spontaneously, so that the BS could estimate channel vector summation $\hbf_s = \sum_{j\in [K]} \hbf_k$.

\mypara{(EU1) Uplink channel summation.} The first step of the enhanced design follows the same as the random orthogonalization method (U1 and D1), so that the BS has the estimate sum channel vector $\hbf_s = \sum_{j\in [K]} \hbf_k$.

\mypara{(EU2) Downlink channel echo.} The BS sends the previously estimated $\hbf_s$ (after normalization to satisfy the power constraint) to all clients. For the $k$-th client, the received signal is $\ybf_k = \frac{\hbf_k^H\hbf_s}{\sqrt{K}} + \nbf_k$,
by which client $k$ can estimate $g_k = \hbf_k^H\hbf_s = \hbf_k^H\sum_{j\in [K]} \hbf_k = \norm{\hbf_k}^2 + \sum_{j \in [K], j\neq k}^K \hbf_k^H \hbf_j$. Note again that we assume a perfect estimation of $\hbf_s$. An additional error term will appear in the estimation of $g_k$ when the summation channel estimation is imperfect, which will be discussed later.

\mypara{(EU3) Uplink model transmission.} All involved clients transmit local parameter $\{{x_{k,i}}/{\real(g_k)}\}_{k\in[K]}$ to the BS synchronously in $d$ shared time slots: 
$
    \ybf_{i} = \sum_{k\in [K]}^K \hbf_k \frac{x_{k,i}}{\real(g_k)} + \nbf_i,\;\;\forall i = 1,\cdots,d.$

\mypara{(EU4) Server computation.} The BS obtains $\sum_{k\in[K]} x_{k,i}$ via the following operation:
\begin{equation}\label{eq:ehancedUL}
\begin{split}
    \tilde x_i &= \real(\ybf_{i}^H\hbf_s) = \real\squab{\sum_{k \in [K]} \hbf_k^H \frac{x_{k,i}}{\real(g_k)} \sum_{j\in[K]}^K \hbf_j+ \nbf_i^H \sum_{j\in[K]} \hbf_j} \\
    & 
    = \sum_{k\in[K]} \frac{x_{k,i}}{\real(g_k)}\real\squab{\hbf_k^H  \sum_{j \in [K]} \hbf_j} + \real\squab{\nbf_i^H \sum_{j\in [K]} \hbf_j} = \sum_{k\in [K]} x_{k,i} + \real\squab{ \sum_{j\in [K]} \hbf_j^H \nbf_i}.
\end{split}
\end{equation}

Similarly, as shown in Fig.~\ref{fig:frameworkEnhancedDL}, 
the enhanced design for the downlink communication contains the following four main steps.

\mypara{(ED1-2) Uplink channel summation and downlink channel echo.} The first two steps in the downlink design remain the same as Steps EU1 and EU2 in the uplink design, so that the BS can estimate channel vector summation $\hbf_s = \sum_{j\in [K]} \hbf_k$ and each client can estimate the parameter $g_k$.

\mypara{(ED3) Downlink global model broadcast.} The BS broadcasts global model $\{w_i\}$ to all clients using the estimated sum channel $\f{\hbf_s}{\sqrt{K}}$ as the precoder. The received signal at the $k$-th client is $ y_k = \hbf_k^H \f{\hbf_s}{\sqrt{K}} w_i + \nbf_i = \f{1}{\sqrt{K}} g_k w_i + z_i^k, \forall i = 1,\cdots, d$.
% \begin{equation*}
%      y_k = \hbf_k^H \f{\hbf_s}{\sqrt{K}} w_i + \nbf_i = \f{1}{\sqrt{K}} g_k w_i + z_i^k, \;\;\forall i = 1,\cdots, d.
% \end{equation*}

\mypara{(ED4) Model parameter computation.} Each user obtains the global model $\{w_i\}$ via the following calculation:
\begin{equation}\label{eq:ehancedDL}
    \real\squab{\frac{\sqrt{K}y_k}{g_k}} =  w_i + \real(\f{\sqrt{K}z_i^k}{g_k}) \approx \hat w_i, \;\; \forall i = 1,\cdots, d.
\end{equation}

Compared with the random orthogonalization method that offers \emph{asymptotic} interference-free global model estimation, the received FL parameters obtained by the enhanced method are \textit{completely interference-free} at both the server and the clients, as shown in \eqref{eq:ehancedUL} and \eqref{eq:ehancedDL}. The extra channel echo steps (Step EU2 in uplink and Step ED1 in downlink) allow clients to obtain \emph{partial} CSI $g_k$, so that they can pre-cancel and post-cancel channel interference among different user channels in uplink and downlink communications, respectively. Therefore, \textbf{this enhancement is valid even if channel hardening and favorable  propagation are not present in wireless channels}, at a low cost of using one extra slot for the channel echo operation, and preserves all the other advantages of random orthogonalization. 

\begin{remark}
We note that both random orthogonalization and enhanced methods assume a perfect estimation of $\hbf_s$. In practical systems, to improve the accuracy of the estimate $\hat{\hbf}_s$, BS can use multiple pilots / time slots for channel estimation. Moreover, for the enhanced method, the estimation error of $\hbf_s$ itself will not affect the performance, since the imperfect estimated summation channel will cancel out in Step EU/ED$4$. Only the imperfect estimation of $g_k$ will influence the results. We provide more details on the robustness of the proposed schemes over imperfect $\hat{\hbf}_s$ and $g_k$ in the experiment results.
\end{remark}

\begin{remark}
In the enhanced uplink design, each client pre-cancels the channel fading effect so that the global model can be directly obtained at the BS after simple operations. Note that the analog aggregation method in \cite{zhu2019broadband} also uses ``channel inversion" to pre-cancel channel fading. However, our design outperforms the method in \cite{zhu2019broadband} because the latter requires full CSIT, which leads to a large channel estimation overhead even with channel reciprocity in a TDD system. On the contrary, our method only requires partial CSI, which can be efficiently obtained via channel echos. Moreover, analog aggregation does not naturally extend to MIMO systems when the uplink channels become vectors, which makes channel inversions at the transmitters nontrivial.
\end{remark}

\section{Performance Analyses}
\label{sec:analysis}

We analyze the performance of the proposed methods from two aspects. On the communication performance side, we derive CRLBs of the estimates of model parameters in both uplink and downlink phases as the theoretical benchmarks. On the machine learning side, we present the convergence analysis of FL when the proposed communication designs are applied.

\subsection{Cramer-Rao Lower Bounds}\label{subsec:crlb}

In the uplink communication, recall that the received signal is $\ybf_{i} = \Hbf\xbf_i + \nbf_i$. Denoting $\mubf_{\ssf{UL}} = \Hbf\xbf_i$, we have $\ybf_{i} \sim \mathcal{CN}(\mubf_{\ssf{UL}}, \frac{1}{\ssf{SNR}}\Ibf)$. To leverage CRLBs to evaluate the parameter estimation, we first need to derive the Fisher information of $\xbf_i$. Based on Example $3.9$ in \cite{kay1993fundamentals}, we can write the Fisher information matrix (FIM) of the estimation of $ \xbf_i$ as:  %Fisher information of parameters in AWGN is given by Example $3.9$ in \cite{kay1993fundamentals}. Therefore, the Fisher information matrix (FIM) of the estimation of $ \xbf_i$ is 
\begin{equation*}
    \Fbf_{\ssf{UL}} = 2\cdot  \ssf{SNR} \cdot  \real\squab{\frac{\partial^H\mubf_{\ssf{UL}}(\xbf_i)}{\partial \xbf_i}\frac{\partial\mubf_{\ssf{UL}}(\xbf_i)}{\partial \xbf_i}}.
    % \Fbf = \frac{2}{\sigma^2}\real\squab{\frac{\partial^H\mubf(\xbf_i)}{\partial \xbf_i}\frac{\partial\mubf(\xbf_i)}{\partial \xbf_i}}.
\end{equation*}
After inserting $\frac{\partial\mubf_{\ssf{UL}}(\xbf_i)}{\partial \xbf_i} = \Hbf$ into the FIM, we have $\Fbf_{\ssf{UL}} = 2 \cdot \ssf{SNR} \cdot \real(\Hbf^H\Hbf)$. Note that for the enhanced uplink design, we can absorb $\real(g_k)$ into the effective channel as  $\tilde\Hbf\triangleq\squab{\hbf_1/\real(g_1),\cdots,\hbf_K/\real(g_K)}$, and calculate FIM via $\Fbf_{\ssf{UL}} = 2 \cdot \ssf{SNR} \cdot \real(\tilde\Hbf^H\tilde\Hbf)$. 

In the downlink communication, since $y_k = \hbf_k^H \hbf_s w_i + \nbf_k$, by the definition of $\mu_{\ssf{DL}} = \hbf_k^H\hbf_s w_i$, we have that $y_{k} \sim \mathcal{CN}(\mu_{\ssf{DL}}, \frac{1}{\ssf{SNR}})$. The Fisher information of global model parameters is
\begin{equation*}
    F_{\ssf{DL}} =  2\cdot  \ssf{SNR} \cdot  \real\squab{\frac{\partial^H\mu_{\ssf{DL}}(w_i)}{\partial w_i}\frac{\partial\mu_{\ssf{DL}}(w_i)}{\partial w_i}} = 2\cdot  \ssf{SNR} \cdot  \real (\hbf_k^H\hbf_s\hbf_s^H\hbf_k).
\end{equation*}

The CRLBs of estimates are then given by the inverse of the Fisher information (matrix): 
% \begin{equation*}
    $\Cbf_{\hat\xbf_i} = \Fbf_{\ssf{UL}}^{-1}$ and $C_{\hat w_i} = 1/F_{\ssf{DL}}$, respectively.
% \end{equation*}
CRLBs are the lower bounds on the variances of unbiased estimators, stating that the variance of any such estimator is at least as high as the inverse of the Fisher information (matrix). We have shown that the proposed methods lead to unbiased estimations of the global model in both uplink and downlink communications.  Hence, we can use the sum of all diagonal elements of $\Cbf_{\hat\xbf}$ as the lower bound of the mean squared error (MSE) $\expt\norm{\xbf_i - \hat\xbf_i}^2$, and use $C_{\hat w_i}$ as the lower bound of MSE $\expt\norm{w_i - \hat w_i}^2$, to evaluate the performance of model estimation in both uplink and downlink communications.  These bounds will be validated in the experiment results.

\subsection{ML Model Convergence Analysis}
\label{subsec:CvgAnaUL}

We now analyze the ML model convergence performances of the proposed methods. Note that as we have proposed two different designs (basic and enhanced) for the uplink and downlink communications, respectively, there would be four cases of convergence analysis. Since these convergence analyses are quite similar, we only report one of these results. 

We first make the following standard assumptions that are commonly adopted in the convergence analysis of \textsc{FedAvg} and its variants \cite{li2019convergence,jiang2018nips,stich2018local,Zheng2020jsac}. In particular, Assumption \ref{as:1} indicates that the gradient of $f_k$ is Lipschitz continuous. The strongly convex loss function in Assumption \ref{as:2} is a category of loss functions that are widely studied in the literature (see \cite{li2019convergence} and its follow-up works). Assumptions \ref{as:3} and \ref{as:4} imply that the mini-batch stochastic gradient and its variance are bounded \cite{stich2018local}.

\begin{assumption}\label{as:1}
\textbf{$L$-smooth:} $\forall~\vect{v}$ and $\vect{w}$, $\norm{f_k(\vbf)-f_k(\wbf)}\leq L \norm{\vbf-\wbf}$;
\end{assumption}
\begin{assumption}\label{as:2}
\textbf{$\mu$-strongly convex:} $\forall~\vect{v}$ and $\vect{w}$, $\left<f_k(\vbf)-f_k(\wbf), \vbf-\wbf\right>\geq \mu \norm{\vbf-\wbf}^2$;
\begin{assumption}\label{as:3}
\textbf{Bounded variance for unbiased mini-batch SGD:} $\forall k \in [N]$, $\expt[\nabla\tilde{f}_k(\wbf)] = \nabla f_k(\wbf)\text{~~and~~}\\\expt\norm{\nabla f_k(\wbf) - \nabla \tilde f_k(\wbf)}^2\leq H_k^2$;
\end{assumption}
\begin{assumption}\label{as:4}
\textbf{Uniformly bounded gradient:} $\forall k \in [N]$, $\expt\norm{\nabla\tilde{f}_k(\wbf)}^2 \leq H^2$ for all mini-batch data.
\end{assumption}
\end{assumption}

We next provide a convergence analysis of FL when the uplink communication utilizes random orthogonalization and the enhanced design is applied to the downlink communication. Note that unlike uplink communications, we cannot use model differential for downlink FL communications because of partial clients selection. To guarantee the convergence of FL, we need to borrow the necessary condition for noisy FL downlink communication from our previous work \cite{wei2021TCCN}, i.e., downlink transmit power should scale in the order of $\mathcal{O}(t^2)$.

\begin{theorem}[\textbf{\em Convergence for random orthogonalization in the uplink and enhanced method in the downlink}]
\label{thm.RO} 
Consider a wireless FL task that applies random orthogonalization for the uplink communications and the enhanced method for the downlink communications. With Assumptions 1-4, for some $\gamma\geq 0$, if we set the learning rate as $\eta_t = \frac{2}{\mu(t+\gamma)}$ and downlink SNR scales as $\ssf{SNR}_{\ssf{DL}}\geq \f{1-\mu\eta_t}{\eta^2_t}$ in round $t$, we have
\begin{equation}\label{eq.convergence}
    \expt[f(\wbf_t)] - f^* \leq \frac{L}{2(t + \gamma)}\left[\frac{4B}{\mu^2} + (1 + \gamma)\norm{\wbf_0 - \wbf^*}^2\right],
\end{equation}
for any $t\geq 1$, where 
\begin{equation}\label{eq.Bdef}\small
\begin{split}
      B \triangleq \sum_{k=1}^N \frac{H_k^2}{N^2} + 6L \Gamma + 8(E-1)^2 H^2  +  \frac{N-K}{N-1} \frac{4}{K}  E^2 H^2 + \frac{4}{K}\left(\frac{K}{M} + \frac{1}{\ssf{SNR}_{\ssf{UL}}}\right)  E^2 H^2 + \frac{MK}{N^2(K + M)}.
    \end{split}
\end{equation}
\end{theorem}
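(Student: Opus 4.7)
The plan is to extend the standard non-i.i.d.\ \textsc{FedAvg} convergence argument of \cite{li2019convergence} by absorbing the uplink and downlink wireless errors into the per-round noise budget. Let $\avgvect{w}_{t+1} \triangleq \frac{1}{K}\sum_{k\in[K]}\wbf_{t+1}^k$ denote the noise-free aggregation that perfect communications would deliver, and let $\wbf_{t+1}$ denote the actual global model produced by the uplink linear projection in \eqref{eq:OTAComp} followed by the enhanced downlink broadcast in \eqref{eq:ehancedDL}. The unbiasedness and variance identities in Sections~\ref{sec:RO}--\ref{sec:enhanced} show that $\wbf_{t+1}$ is a zero-mean perturbation of $\avgvect{w}_{t+1}$, so I would split $\expt\norm{\wbf_{t+1}-\wbf^*}^2 = \expt\norm{\avgvect{w}_{t+1}-\wbf^*}^2 + \expt\norm{\wbf_{t+1}-\avgvect{w}_{t+1}}^2$, handle the first summand with classical local-SGD machinery, and handle the second with the channel-level second-moment computations already present in the paper.

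For the noise-free term, I would reuse the ingredients of \cite{li2019convergence}: $\mu$-strong convexity together with $L$-smoothness yields the contraction $(1-\mu\eta_t)\expt\norm{\avgvect{w}_t-\wbf^*}^2$; Assumption~\ref{as:3} contributes $\sum_k H_k^2/N^2$; the non-i.i.d.\ mismatch contributes $6L\Gamma$ via the standard inequality involving $f^*-\sum_k p_k f_k^*$; the $E$ local steps with Assumption~\ref{as:4} produce the $8(E-1)^2 H^2$ client-drift term; and uniform sampling of $K$ out of $N$ clients yields the partial-participation term $\frac{N-K}{N-1}\cdot\frac{4}{K}E^2 H^2$. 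For the uplink error, the signal/interference/noise decomposition in \eqref{eq:OTAComp} together with the variances recorded in Section~\ref{sec:RO} (signal variance $\sum_k x_{k,i}^2/M$, interference variance $(K-1)\sum_k x_{k,i}^2/M$, and zero-mean noise of variance $\mathcal{O}(1/\ssf{SNR}_{\ssf{UL}})$) and the aggregate bound $\sum_{k,i}x_{k,i}^2\le E^2 H^2\eta_t^2$ from Assumption~\ref{as:4} give the uplink contribution $\frac{4}{K}\bigl(\frac{K}{M}+\frac{1}{\ssf{SNR}_{\ssf{UL}}}\bigr)E^2 H^2$ to $B$. For the downlink error, \eqref{eq:ehancedDL} provides interference-free Gaussian perturbations whose variance, after averaging over the channel distribution and substituting the hypothesis $\ssf{SNR}_{\ssf{DL}}\ge(1-\mu\eta_t)/\eta_t^2$, collapses into $\eta_t^2 B$ with residual $\frac{MK}{N^2(K+M)}$, where the $M/(K+M)$ factor tracks $\expt[1/|g_k|^2]$ under the assumed Rayleigh statistics.

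Combining the pieces yields the single-step recursion $\expt\norm{\wbf_{t+1}-\wbf^*}^2 \le (1-\mu\eta_t)\expt\norm{\wbf_t-\wbf^*}^2 + \eta_t^2 B$ with $B$ exactly as in \eqref{eq.Bdef}. Induction on $t$ with $\eta_t=2/(\mu(t+\gamma))$ then closes the recursion to $\expt\norm{\wbf_t-\wbf^*}^2\le \frac{1}{t+\gamma}\bigl[\frac{4B}{\mu^2}+(1+\gamma)\norm{\wbf_0-\wbf^*}^2\bigr]$, and a final application of $L$-smoothness at $\wbf^*$ (where $\nabla f$ vanishes) converts this to the function-value bound \eqref{eq.convergence}. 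The main obstacle is the uplink interference term $\sum_{k\ne j}\hbf_k^H\hbf_j\,x_{j,i}$: unlike a fresh Gaussian, $x_{j,i}$ depends on the history of iterates through the local SGD updates and is therefore correlated with the channel once one unfolds more than one round, so the correct order of conditioning---first on the iterates $\{\wbf_{t+1}^k\}$, then on the i.i.d.\ Rayleigh $\{\hbf_k\}$, and only then applying Assumption~\ref{as:4}---is the delicate step that determines whether the $K/M$ factor appears with the right constant in \eqref{eq.Bdef}.
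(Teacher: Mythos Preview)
Your proposal is correct and follows essentially the same route as the paper: unbiasedness of each communication step kills the cross terms, the Li et al.\ local-SGD machinery supplies the first four summands of $B$, the channel-hardening/favorable-propagation variance identities yield the uplink term, the enhanced-downlink variance together with the $\ssf{SNR}_{\ssf{DL}}$ scaling yields the last term, and the recursion closes by induction with $\eta_t=2/(\mu(t+\gamma))$. The only cosmetic difference is that the paper makes the layering explicit via four virtual sequences $\avgvect{v}_t\to\avgvect{u}_t\to\avgvect{w}_t\to\avgvect{p}_t$ (separating the SGD step, client sampling, uplink noise, and downlink noise) and places the downlink perturbation at the \emph{start} of the next round rather than the end of the current one, but the resulting one-step inequality and the constant $B$ are identical.
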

\begin{proof}
Proof of Theorem \ref{thm.RO} is given in Appendix \ref{sec:proofThrm1}.
\end{proof}

Theorem \ref{thm.RO} shows that applying random orthogonalization in the uplink communications and enhanced method in the downlink communications preserves the $\mathcal{O}(1/T)$ convergence rate of vanilla SGD in FL tasks with perfect communications in both uplink and downlink phases. The factors that impact the convergence rate are captured entirely in the constant $B$, which come from multiple sources as explained below: $\frac{\sum_{k \in [N]} H_k^2}{N^2}$ comes from the variances of stochastic gradients; $6L \Gamma$ is introduced by the non-i.i.d. of local datasets; the choice of local computation steps and the fraction of partial client participation lead to $8(E-1)^2 H^2$ and $\frac{N-K}{N-1} \frac{4}{K} E^2 H^2 $, respectively; and the interference and noise in uplink and downlink communications result in $\frac{4}{K}\left(\frac{K}{M} + \frac{1}{\ssf{SNR}_{\ssf{UL}}}\right)  E^2 H^2$ and $\left(d + \frac{dK}{M}\right)$, respectively. Note that the impact of the downlink noise, i.e., $\ssf{SNR}_{\ssf{DL}}$, is not explicit in $B$ due to the requirement of $\ssf{SNR}_{\ssf{DL}}\geq \f{1-\mu\eta_t}{\eta^2_t}$ to guarantee the convergence. 

\begin{remark}\label{remark:constantconvergence}
We note that Theorem \ref{thm.RO} considers random orthogonalization in the uplink and enhanced method in the downlink. When random orthogonalization is adopted in the downlink, the convergence bound in \eqref{eq.convergence} will suffer from an additive constant term. This is because the interference cannot be effectively reduced when downlink power scales in the order of $\mathcal{O}(t^2)$, as required for direct model transmission \cite{wei2021TCCN}. This gap is also empirically observed in the experiments (see Section \ref{subsec:learningperformance}). However, we also note that this gap is inversely proportional to the number of antennas $M$. Hence, as $M$ becomes large, it reduces to zero asymptotically\footnote{Due to the space limitation, the technical details for this remark are deferred to our technical report \cite{shenonline1}.}. 
% We note that when random orthogonalization is adopted in downlink communications, there would be a constant factor in convergence bounds. Due to the interference can not be reduced when the downlink power scales in the order of $\mathcal{O}(t^2)$, the convergence of FL tasks demonstrates a gap compared to ideal case, as shown in experiments in Section \ref{subsec:learningperformance}. However, we should notice that the performance gap can be tighten and become trivial as the increase of the number of antennas at the BS. 
\end{remark}

We next analyze the relationship between the number of selected clients $K$ and the number of BS antennas $M$ to understand the scalability of multi-user MIMO for FL, which provides more insight for practical system design. To this end, we consider a simplified case where the system only configures random orthogonalization in the uplink communications, assuming that the downlink communications are error-free. Note that this configuration is reasonable when the BS has large transmit power. % and is widely adopted in the literature \cite{amiri2020federated,guo2021iot,cao2021optimized, wang2021federated1, lin2022relay}. 
We further assume full client participation ($N = K$), one-step SGD at each device ($E = 1$), and i.i.d datasets across all clients ($\Gamma = 0$). For this special case, we establish Corollary \ref{lemma.1} as follows.

\begin{coro}[\textbf{\em Convergence for the simplified case}]\label{lemma.1}
Consider a MIMO system that applies random orthogonalization for the uplink communications of FL with full client participation, one-step SGD at each device, and i.i.d datasets across all clients. Based on Assumptions 1-4 and choosing learning rate as $\eta_t = \frac{2}{\mu(t+\gamma)}$, $\forall t\in [T]$, the following inequality holds:
\begin{equation}\label{eq.convergenceSimplify}
    \expt[f(\wbf_t)] - f^* \leq \frac{L}{2(t + \gamma)}\left[\frac{4\tilde{B}}{\mu^2} + (1 + \gamma)\norm{\wbf_0 - \wbf^*}^2\right]
\end{equation}
for any $t\geq 1$, where 
\begin{equation}\label{eq.BdefSimplify}
\tilde{B}\triangleq \squab{1 + \frac{K}{M} +\f{1}{{\ssf{SNR}} }}\frac{H^2}{K}.
\end{equation}
\end{coro}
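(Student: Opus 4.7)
The plan is to obtain Corollary~\ref{lemma.1} as a specialization of Theorem~\ref{thm.RO} to the regime $N=K$ (full participation), $E=1$ (one local step), $\Gamma=0$ (i.i.d.\ datasets), and $\ssf{SNR}_{\ssf{DL}}\to\infty$ (noiseless downlink). Under these substitutions the overall $\mathcal{O}(1/(t+\gamma))$ envelope in \eqref{eq.convergence} and the learning-rate schedule $\eta_t=2/(\mu(t+\gamma))$ carry over unchanged, and only the constant $B$ in \eqref{eq.Bdef} has to be reduced to the claimed $\tilde{B}$ in \eqref{eq.BdefSimplify}.

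I would then eliminate the terms of \eqref{eq.Bdef} that vanish in this regime: the non-i.i.d.\ penalty $6L\Gamma$ drops because $\Gamma=0$; the local-drift term $8(E-1)^2H^2$ vanishes because $E=1$; the partial-participation term $\frac{N-K}{N-1}\frac{4}{K}E^2H^2$ vanishes because $N=K$; and the downlink-noise term $\frac{MK}{N^2(K+M)}$ vanishes in the noise-free downlink limit. What remains is the stochastic-gradient variance $\sum_{k=1}^N H_k^2/N^2$, which under the uniform bound $H_k\le H$ contributes the $H^2/K$ term of $\tilde{B}$, together with the uplink interference-plus-noise penalty $\frac{4}{K}(K/M+1/\ssf{SNR}_{\ssf{UL}})E^2H^2$, which with $E=1$ contributes the $(K/M)H^2/K$ and $(1/\ssf{SNR})H^2/K$ terms.

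The step I expect to require care is pinning down the multiplicative constants. A literal substitution into \eqref{eq.Bdef} yields $[\,1+4K/M+4/\ssf{SNR}\,]H^2/K$, whereas the stated $\tilde{B}$ is a factor of four tighter on the interference and noise terms. This slack originates in the Young/Jensen splits used inside the proof of Theorem~\ref{thm.RO} to absorb local drift across $E$ steps; these splits are provably loose when $E=1$. With a single local step the per-round update collapses to $\wbf_{t+1}=\wbf_t-\eta_t\,\bar g_t+\etabf_t$, where $\bar g_t=\frac{1}{K}\sum_{k\in[K]}\nabla\tilde f_k(\wbf_t)$ and $\etabf_t$ is the zero-mean uplink communication error whose per-coordinate second moment is bounded directly by the random-orthogonalization analysis in Section~\ref{sec:RO}. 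Because $\etabf_t$, the SGD noise, and the descent direction $\nabla f(\wbf_t)$ are conditionally uncorrelated in this one-shot setting, they enter the descent lemma additively rather than through a squared-sum inequality, eliminating the factor-of-four gap.

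With the tightened one-step recursion $\expt\|\wbf_{t+1}-\wbf^*\|^2\le(1-\mu\eta_t)\expt\|\wbf_t-\wbf^*\|^2+\eta_t^2\tilde{B}$ in hand, the remainder is routine: induction on $t$ with $\eta_t=2/(\mu(t+\gamma))$ (identical to the schedule in Theorem~\ref{thm.RO}) delivers $\expt\|\wbf_t-\wbf^*\|^2\le\frac{1}{t+\gamma}\bigl[\frac{4\tilde{B}}{\mu^2}+(1+\gamma)\|\wbf_0-\wbf^*\|^2\bigr]$, and the $L$-smoothness assumption converts this into the stated bound on $\expt[f(\wbf_t)]-f^*$ in \eqref{eq.convergenceSimplify}.
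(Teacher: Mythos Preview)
Your approach---specialize Theorem~\ref{thm.RO} by setting $N=K$, $E=1$, $\Gamma=0$, and dropping the downlink term---is exactly what the paper does; the paper's proof is a one-sentence claim that the corollary ``comes naturally'' from these substitutions together with $H_k^2\le H^2$.

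You are actually more careful than the paper in flagging the constant mismatch: literal substitution into \eqref{eq.Bdef} gives $[1+4K/M+4/\ssf{SNR}]H^2/K$, not the stated $\tilde{B}$. Your instinct that the factor of four is an artifact of the multi-step analysis and disappears when $E=1$ is correct, but your diagnosis of \emph{where} it comes from is off. It is not a Young/Jensen split on cross terms or a conditional-uncorrelation issue. In the proof of Lemma~\ref{lemma:ROUL} the bound on $\sum_{k}\expt\|\xbf_k\|^2$ ends with $\eta_{t+1-E}^2 E^2 H^2$, and the last step invokes $\eta_{t+1-E}\le 2\eta_t$ to replace $\eta_{t+1-E}^2$ by $4\eta_t^2$. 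When $E=1$ one has $\eta_{t+1-E}=\eta_t$ exactly, so the factor of four never appears. That is the entire mechanism; you do not need to rederive the descent recursion from scratch or appeal to uncorrelated error components.

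With that correction your plan is fine and matches (indeed, slightly sharpens) the paper's argument.
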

\begin{proof}
Corollary \ref{lemma.1} comes naturally from Theorem \ref{thm.RO} by setting $N = K$, $\Gamma = 0$, $E = 1$, omitting the $\left(d + \frac{dK}{M}\right)$ term due to the perfect downlink communications, and the fact that $\expt\norm{\nabla f_k(\wbf) - \nabla \tilde f_k(\wbf)}^2\leq \expt\norm{\nabla \tilde f_k(\wbf)}^2\leq H^2$.
\end{proof}

Corollary \ref{lemma.1} shows that there are two main factors that impact the convergence rate of FL with MIMO: \textbf{variance reduction} and \textbf{channel interference and noise}. In particular, the definition of $\tilde{B}$ in \eqref{eq.BdefSimplify}, which appears in Corollary \ref{lemma.1}, captures the joint impact of both factors.  The nature of distributed SGD suggests that, for a fixed mini-batch size at each client, involving $K$ devices enjoys a $\frac{1}{K}$ variance reduction of stochastic gradient at each SGD iteration \cite{johnson2013accelerating}, which is captured by the $\frac{H^2}{K}$ term in \eqref{eq.BdefSimplify}. However, due to the existence of interference and noise, the convergence rate is determined by both factors, shown as $\frac{H^2}{K}$ and $\frac{(K/M + 1/\ssf{SNR})H^2}{K}\approx \frac{H^2}{M}$ terms in \eqref{eq.BdefSimplify}. This suggests that the desired variance reduction may be adversely impacted if channel interference and noise dominate the convergence bound. In particular, when $M \gg K$, we have $\frac{1}{K} \gg \frac{1}{M}$, and the system enjoys almost the same variance reduction as the interference-free and noise-free case. However, in the case of $K \gg M$, we have $\frac{(K/M + 1/\ssf{SNR})}{K}\approx \frac{1}{M}\gg \frac{1}{K}$, and $\frac{H^2}{M}$ dominates the convergence bound. In this case, it is unwise to blindly increase the number of clients, as it does not have the advantage of variance reduction. %, and will increase the power consumption of the system.

\begin{remark}
In massive MIMO, a BS is usually equipped with many (up to hundreds) antennas. Although there may be large number of users participating in FL, only a small number of them are simultaneously active \cite{yang2020federated}. Both factors indicate that $K \ll M$ often holds in typical massive MIMO systems. The  analysis reveals that our proposed framework enjoys nearly the same interference-free and noise-free convergence rate with low communication and computation overhead in massive MIMO systems.
\end{remark}

\section{Experiment Results}
\label{sec:sim}
We evaluate the performances of random orthogonalization and the enhanced method for uplink and downlink FL communications through numerical experiments. From a communication performance perspective, we compare the proposed methods with the classic MIMO \zixiangb{estimators and precoders} with respect to the MSE. We provide the computation time comparison as a measure of the complexity of various methods. We also discuss the robustness of the proposed methods when the properties of channel hardening and favorable propagation are not fully offered and the channel estimation is imperfect. We further verify the effectiveness of the proposed methods via FL tasks using real-world datasets. %The proposed communication methods demonstrate convincing performance in both linear and non-linear FL tasks on two widely accepted datasets.

%\begin{figure}[htb]
%    \centering
%    \subfigure{\includegraphics[width = 0.48\linewidth]{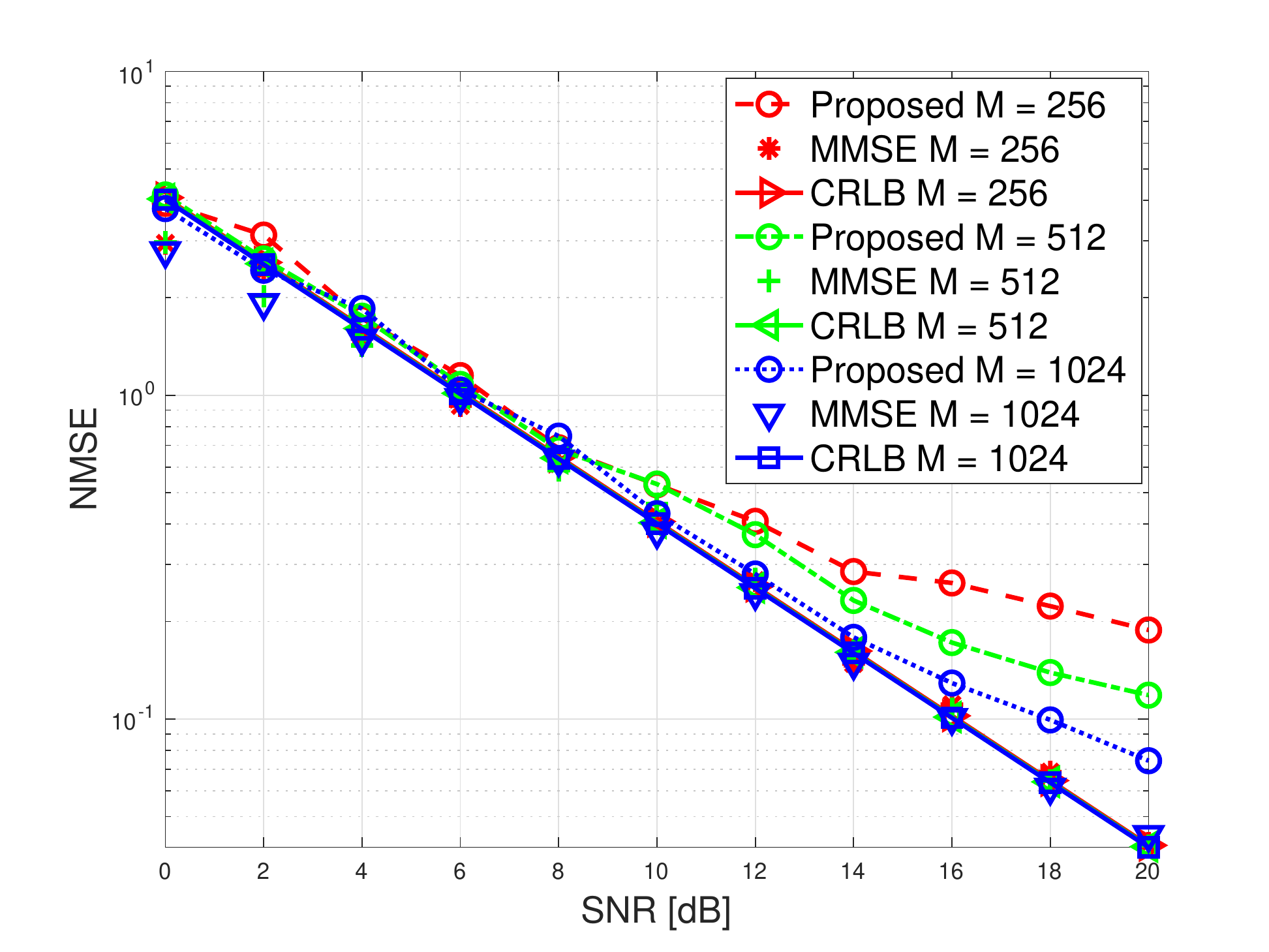}}
%    \subfigure{\includegraphics[width = 0.48\linewidth]{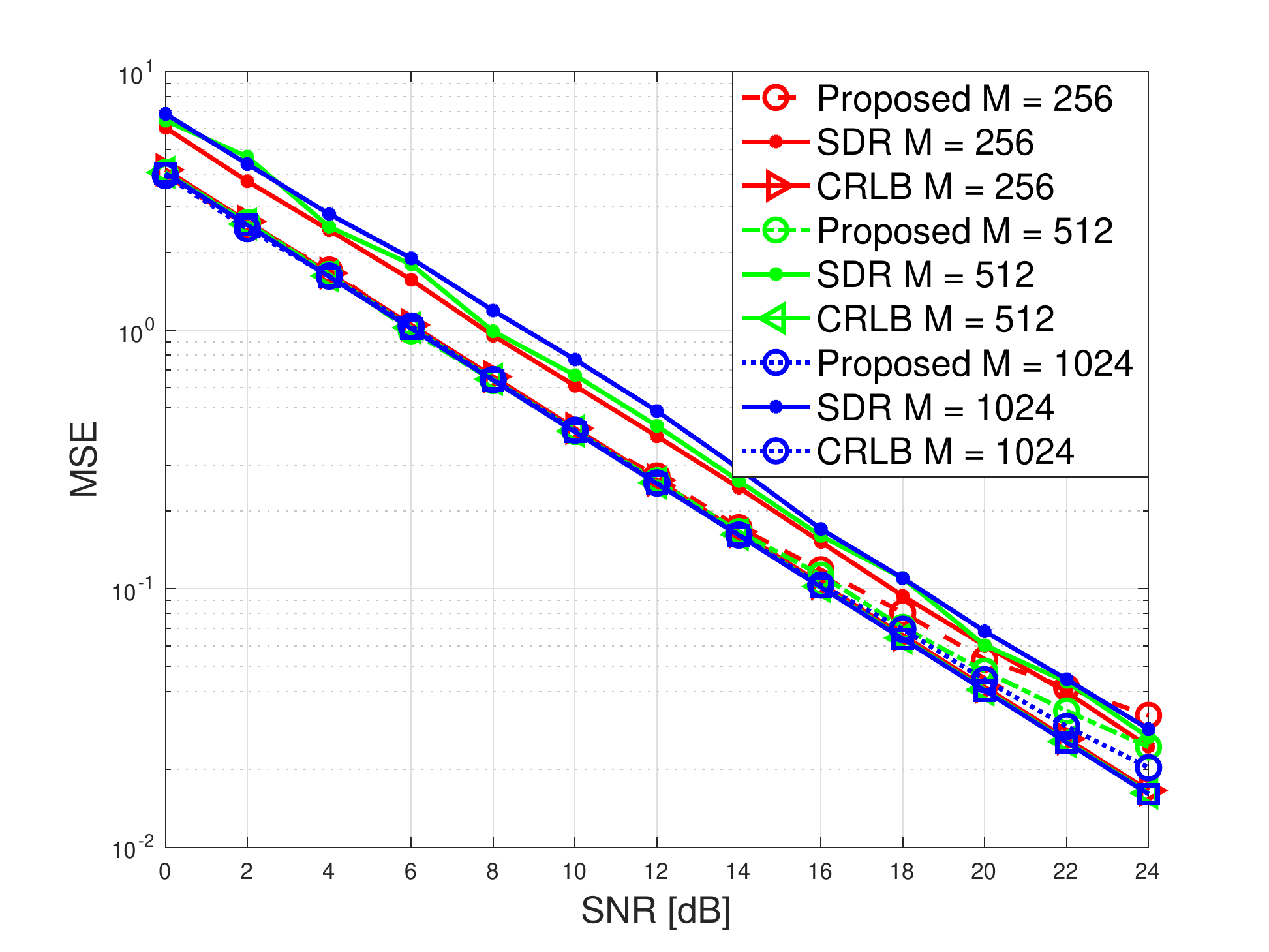}}
%    \caption{MSE of the received global ML model parameters versus SNR of random orthogonalization method in uplink (left) and downlink (right) communications.}
    %\label{fig:ORMSE}
%\end{figure}
%\begin{figure}[htb]
    %\centering
    %\subfigure{\includegraphics[width = 0.48\linewidth]{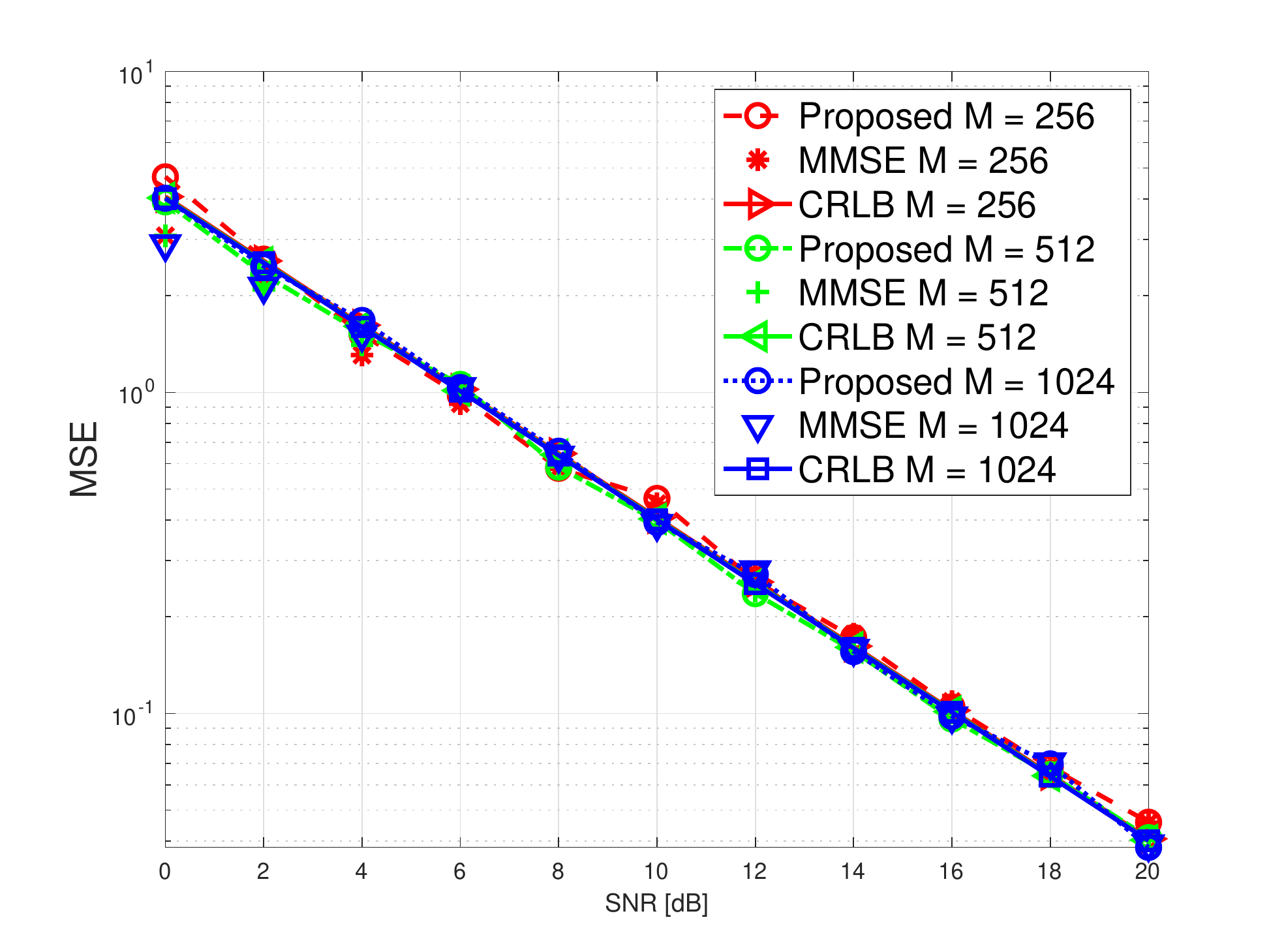}}
    %\subfigure{\includegraphics[width = 0.48\linewidth]{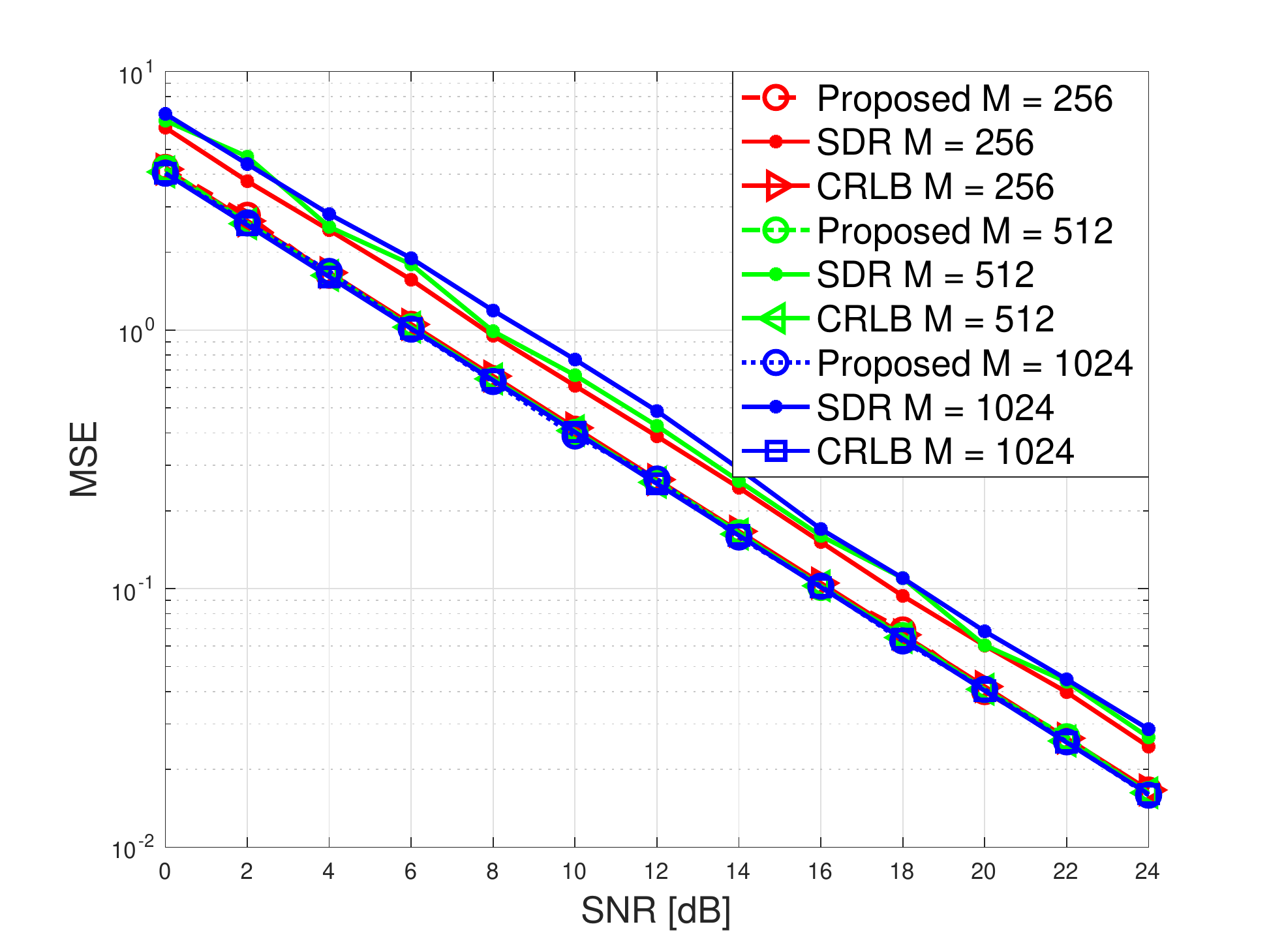}}
    %\caption{MSE of the received global ML model parameters versus SNR of enhanced method in uplink (left) and downlink (right) communications.}
    %\label{fig:enhancedMSE}
%\end{figure}

\begin{figure}[htb]


    \centering
    \subfigure[]{\includegraphics[width = 0.48\linewidth]{figure_Journal/UL_RO-eps-converted-to.pdf}}
    \subfigure[]{\includegraphics[width = 0.48\linewidth]{figure_Journal/DL_RO_compare-eps-converted-to.pdf}}
    \subfigure[]{\includegraphics[width = 0.48\linewidth]{figure_Journal/UL_enhanced-eps-converted-to.pdf}}
    \subfigure[]{\includegraphics[width = 0.48\linewidth]{figure_Journal/DL_enhanced_compare-eps-converted-to.pdf}}
    \caption{MSE of the received global ML model parameters versus SNR of random orthogonalization in uplink (a) and downlink (b) communications and of the enhanced method in uplink (c) and downlink (d) communications.}
\label{fig:MSE-RO-enhanced}
\end{figure}

\subsection{Communication Performance}

We consider a massive MIMO BS with $M = 256$, $512$, or $1024$ antennas, with $K = 8$ active users participating in an FL task. We assume a Rayleigh fading channel model, i.e., $\hbf_k \sim \mathcal{CN}(0,\frac{1}{M}\Ibf)$, for each user, and use the MSE of the computed global model parameters in uplink and downlink communications to evaluate the system performance. All MSE results are obtained from $2000$ Monte Carlo experiments. We use CRLBs derived in Section \ref{subsec:crlb} as the benchmarks of the computed MSEs. In addition, we adopt the traditional MIMO MMSE estimator and the semidefinite relaxation based (SDR-based) \zixiangb{precoder} design method in \cite{sidiropoulos2004broadcasting} for performance comparisons of uplink and downlink communications, respectively.

\mypara{Effectiveness.} Fig.~\ref{fig:MSE-RO-enhanced}(a) and Fig.~\ref{fig:MSE-RO-enhanced}(b) compare the MSE performance of the random orthogonalization method in uplink and downlink communications with the traditional MIMO estimator/\zixiangb{precoder} as well as the CRLB under different system SNRs. As illustrated in these two plots, the proposed method performs nearly identically {to} the CRLB in low and moderate SNRs under different antenna configurations (see $\ssf{SNR} \leq 12$ dB for uplink and $\ssf{SNR} \leq 18$ dB for downlink). As the SNR increases, the dominant factor affecting system performances {becomes} the interference among different users. In the uplink communications, when $K\leq M$ and at high SNR, Eqn.~\eqref{eq:SINR} shows that for a given $K$ and $M$, the proposed method has a fixed (approximate) $\ssf{SIR} = \frac{K - 1}{M}$ as $\ssf{SNR} \rightarrow \infty$, which explains why the performance of the proposed scheme deteriorates compared with MMSE at high SNR. However, this issue disappears naturally as the number of BS antennas increases. It can be seen in Fig.~\ref{fig:MSE-RO-enhanced}(a) that the performance gap between the proposed method and the CRLB reduces, from about $7$ dB when $M = 256$ to about $2$ dB when $M = 1024$ at $\ssf{SNR} = 20$ dB, in uplink communications. We note that, although random orthogonalization produces higher MSEs than the MMSE estimator, the FL tasks have the same convergence rate under a constant SINR in uplink communications as indicated by the convergence analysis. This will be further validated in Section \ref{subsec:learningperformance} by showing that random orthogonalization hardly slows the convergence of FL. Similar to uplink, random orthogonalization performs nearly identically {to} the CRLB in low and moderate SNRs in downlink, and only loses about $0.5 \sim 3$ dB under different antenna configurations at $\ssf{SNR} = 24$ dB. Moreover, random orthogonalization outperforms the SDR-based method at almost all SNRs and antenna configurations. Due to its sub-optimality, the SDR-based method has about $1.5$ dB loss compared with the CRLB. We should emphasize that our method only requires $1/K$ of the channel estimation overheard (partial CSI) compared with both MMSE and the SDR-based method (full CSI), and this advantage is more pronounced when the BS is equipped with a larger number of antennas.

Similarly, Fig.~\ref{fig:MSE-RO-enhanced}(c) and Fig.~\ref{fig:MSE-RO-enhanced}(d) compare the MSE performance of the enhanced method in uplink and downlink communications with the MMSE estimator / SDR-based \zixiangb{precoder}. It is clear from both plots that the enhanced method achieves MSEs very closed to the CRLBs. Furthermore, it performs nearly identically as the MMSE estimator in uplink and outperforms the SDR-based method by about $1.5$ dB in downlink. Therefore, by introducing channel echos, the enhanced method achieves excellent performance while consuming relatively low additional resource.

\begin{table}
\caption{Computation time comparison between the proposed methods and the MMSE/SDR Method}
\label{table:comp}\small
    \centering
    \begin{tabular}{c| c c c | c c c}
    \hline
    % \multirow{2}{6em}{\# antennas (M)}
    \# antennas
    &   \multicolumn{2}{c}{Total CPU time (second)} & Ratio & \multicolumn{2}{c}{Total CPU time (second)} & Ratio \cr
    (M) & RO-UL & MMSE & RO-UL/MMSE & Enhanced-UL & MMSE & Enhanced-UL/MMSE\cr \hline
    256 & 0.0186 & 2.7141 & 0.68\% & 0.0203 & 2.9228 & 0.69\% \cr
    512 & 0.0303 & 12.4155 & 0.24\% & 0.0469 & 16.3938 & 0.30\% \cr
    1024 & 0.0448 & 82.3530 & 0.05\% & 0.0711 & 91.4117 & 0.07\% \cr\hline\hline
    \# antennas
    &   \multicolumn{2}{c}{Total CPU time (second)} & Ratio & \multicolumn{2}{c}{Total CPU time (second)} & Ratio \cr
    (M) & RO-DL & SDR & RO-DL/SDR & Enhanced-DL & SDR & Enhanced-DL/SDR\cr 
    \hline
    256 & 0.0157 & 25.1492 & 0.062\% & 0.0163 & 28.8593 & 0.68\% \cr
    512 & 0.0415 & 324.7349 & 0.012\% & 0.0592 & 492.9539 & 0.012\% \cr
    1024 & 0.0571 & 4819.6221 & 0.0012\% & 0.0695 & 5925.9250 & 0.0011\% \cr\hline
    \end{tabular}
\end{table}

\mypara{Efficiency.} We next focus on the low-latency benefit of the proposed methods. Table \ref{table:comp} compares the computational time of the proposed schemes with the MMSE estimator and the SDR-based \zixiangb{precoder} when $\ssf{SNR} = 10$ dB in the uplink and downlink communications, respectively. The total CPU time is the \emph{cumulative time} {of each algorithm} over $2000$ Monte Carlo experiments. We see that the time consumption of random orthogonalization and the enhanced method is much less than that of the MMSE estimator and the SDR-based \zixiangb{precoder}. Especially, when $M = 1024$, despite the $0.3$ dB normalized MSE (NMSE) performance loss of random orthogonalization compared with the MMSE estimator in the uplink communications (as shown in Fig.~\ref{fig:MSE-RO-enhanced}(a)), the computation time of the former is only $0.05\%$ of the latter. Since SDR in general has an $O(M^3)$ complexity, the proposed methods are even more computationally efficient for the downlink communications, as the total CPU time is less than $0.1\%$ of the SDR-based method in all settings.  All these results suggest that both random orthogonalization and its enhancement are attractive in massive MIMO systems, because they have nearly identical MSE performances to CRLBs but require much less channel estimation overhead and achieve extremely lower system latency than the classic MIMO estimators and \zixiangb{precoders}.

\begin{figure}[htb]
    \centering
    \subfigure{\includegraphics[width = 0.45\linewidth]{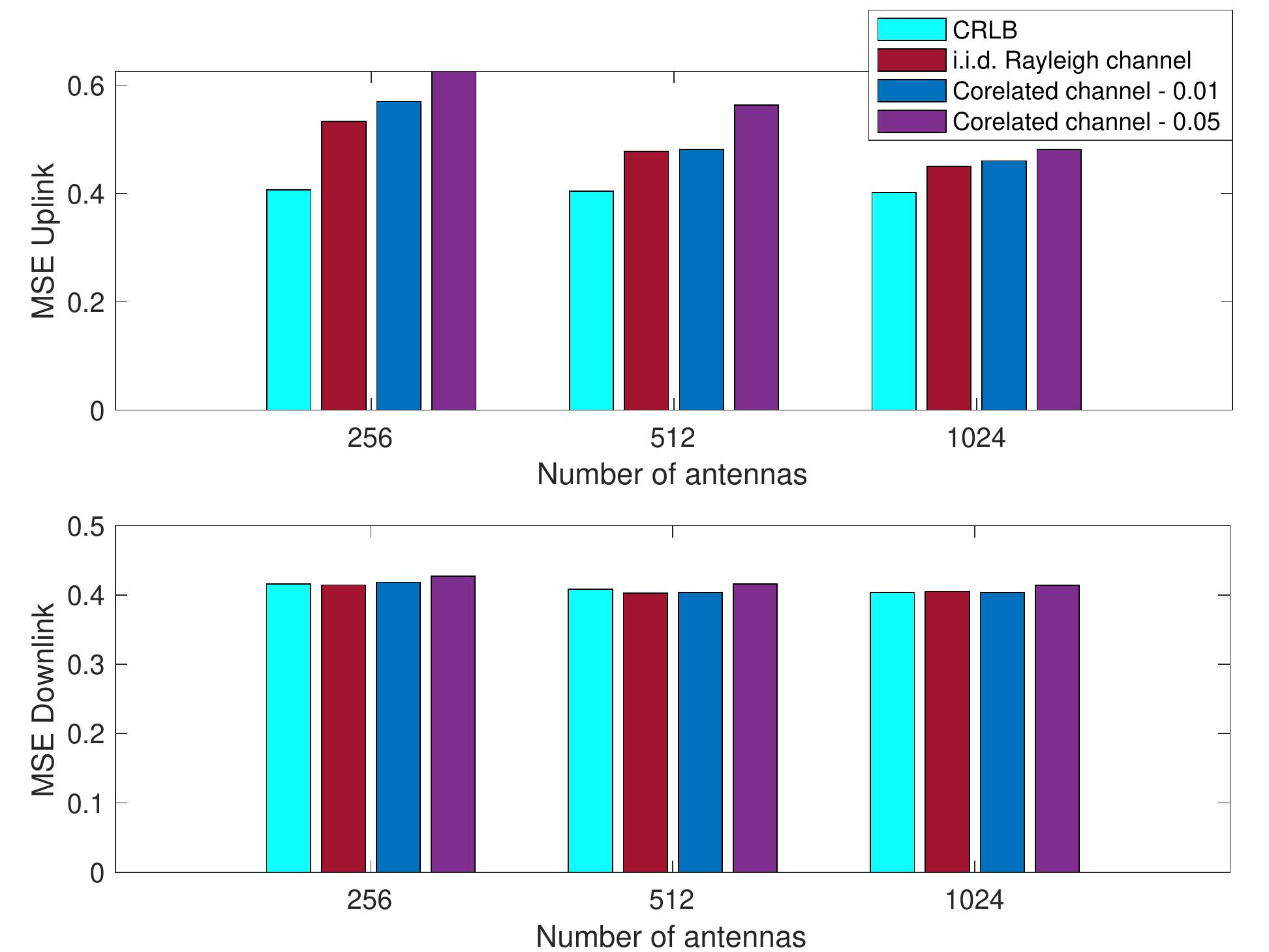}}
    \subfigure{\includegraphics[width = 0.45\linewidth]{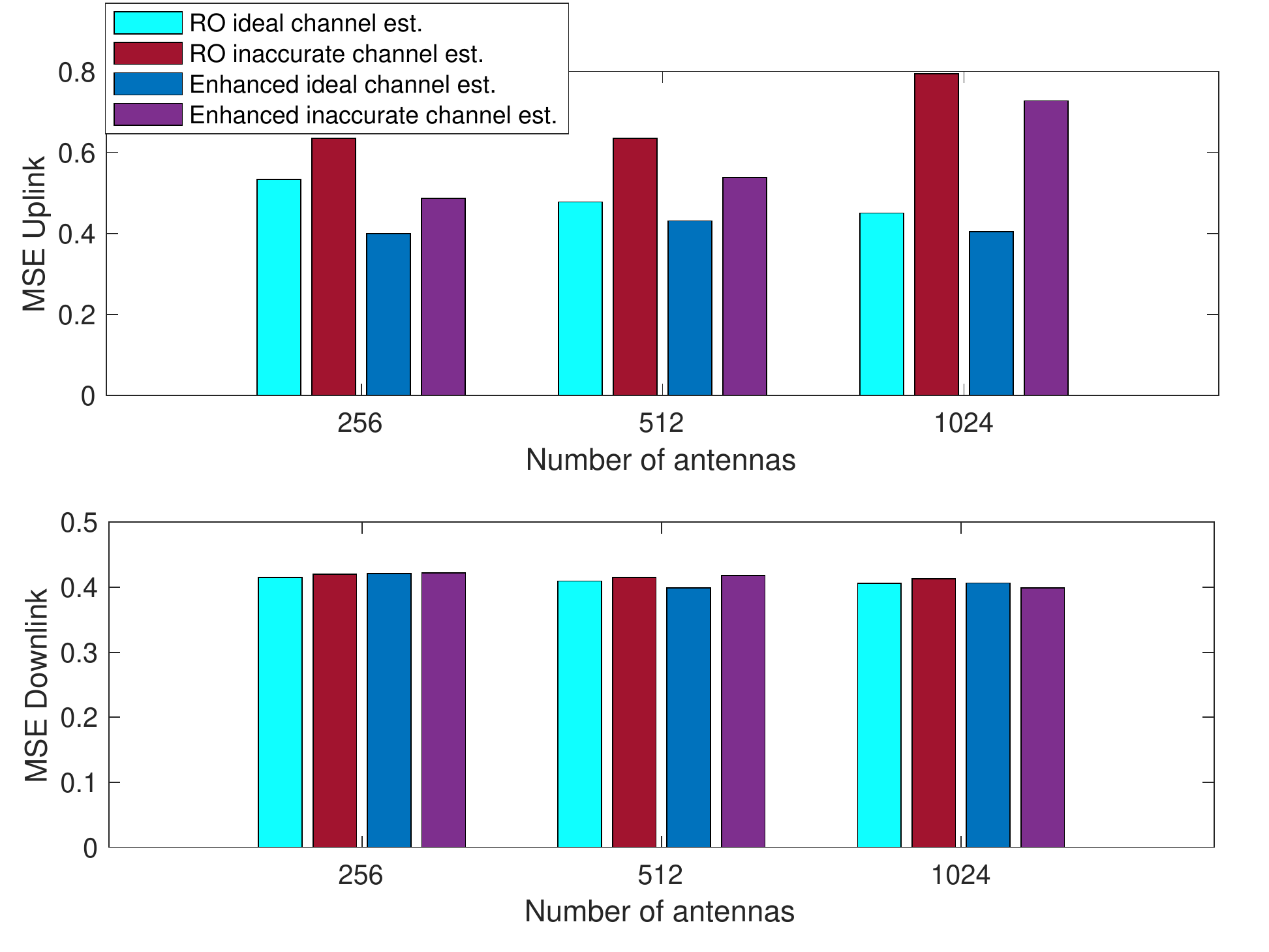}}
    \caption{MSE comparison of the received global ML model parameters when channel hardening and favorable propagation are not fully offered (left) and channel estimation is imperfect (right).}
    \label{fig:corelated}
\end{figure}

\mypara{Robustness.} We now focus on the robustness of the proposed methods, and evaluate the MSEs of the global model parameters obtained at $\ssf{SNR} = 10$ dB through $2000$ Monte Carlo experiments. \zixiangb{
Fig.~\ref{fig:corelated} reports the achieved MSEs of the random orthogonalization method when the (approximate) channel hardening and favorable propagation are not strictly offered, i.e., the wireless channels are correlated.}
We consider two channel correlation models with covariance matrix elements equal to $1$ on the diagonal and equal to $0.01$ or $0.05$ off the diagonal, respectively. \zixiangb{It is observed that} when the off-diagonal elements are $0.01$, random orthogonalization performs nearly identically as \zixiangb{that} in the ideal i.i.d. Rayleigh fading channel case. Even when the off-diagonal elements equal to $0.05$, the achieved \zixiangb{MSEs} only \zixiangb{increase} by less than $1$ dB in the worst case (when $M=256$). \zixiangb{The MSEs become closer to those of the i.i.d. Rayleigh channel cases when $M$ increases, as larger antenna arrays offer higher orthogonality.}

\zixiangb{We next evaluate the performance when the estimation of the summation channel $\hbf_s$ (and $g_k$ in the enhanced method) is imperfect.} The right sub-figure of Fig.~\ref{fig:corelated} compares the MSEs of both proposed methods when the channel estimation is obtained \zixiangb{under} $\ssf{SNR} = 20$ dB. \zixiangb{It reveals that} the downlink communication is more robust than the uplink -- the former achieves nearly identical MSEs as the ideal case even when the channel estimation is inaccurate. For the uplink, an imperfect channel estimation increases the MSEs by $1 \sim 3$ dB depending on the antenna configurations. However, we emphasize again that the FL tasks have the same convergence rate under a constant SINR in the uplink communications (thanks to the model differential transmission).

%\begin{figure}[htb]
%    \centering
%    \subfigure{\includegraphics[width = 0.48\linewidth]{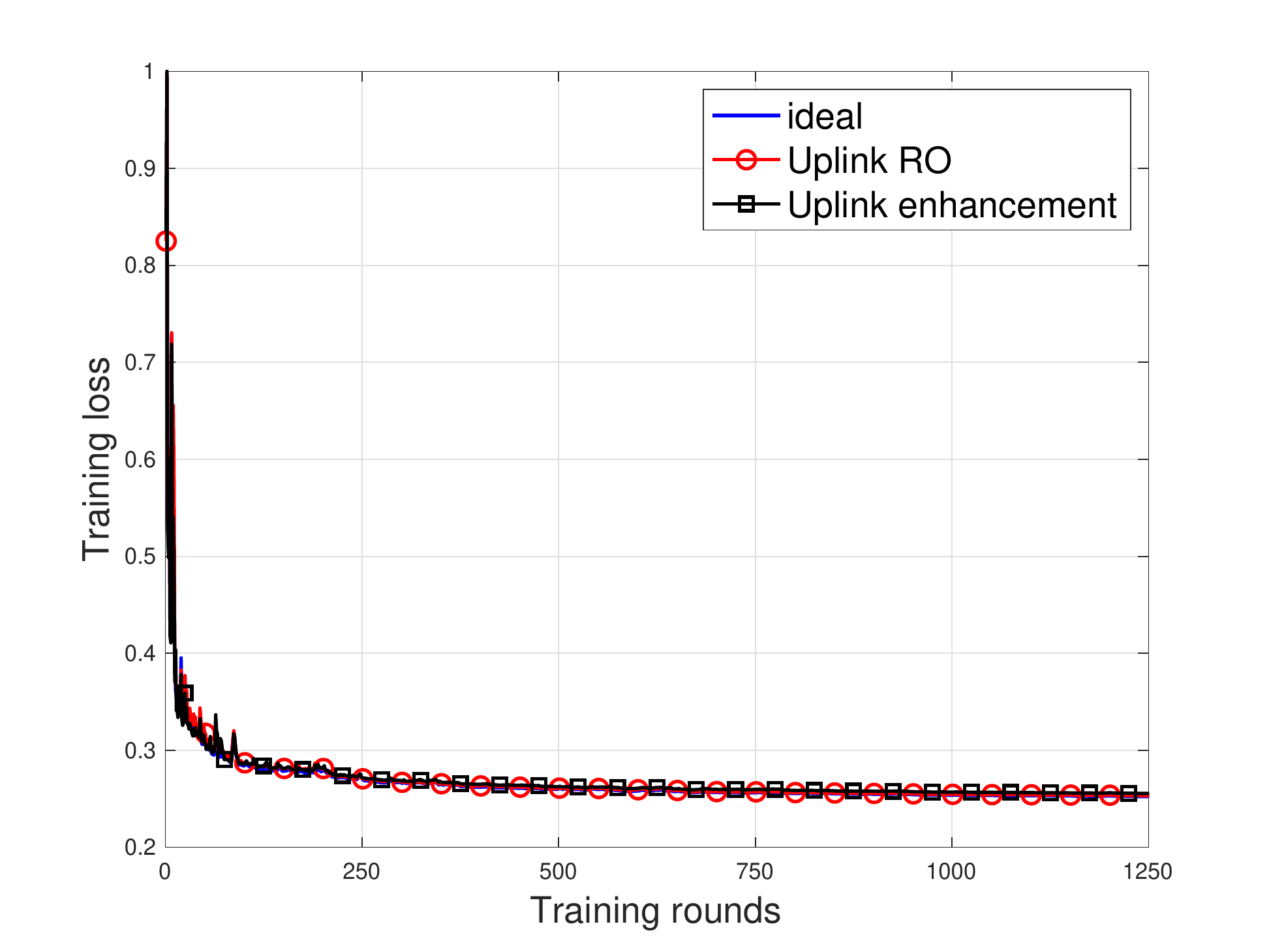}}
%    \subfigure{\includegraphics[width = 0.48\linewidth]{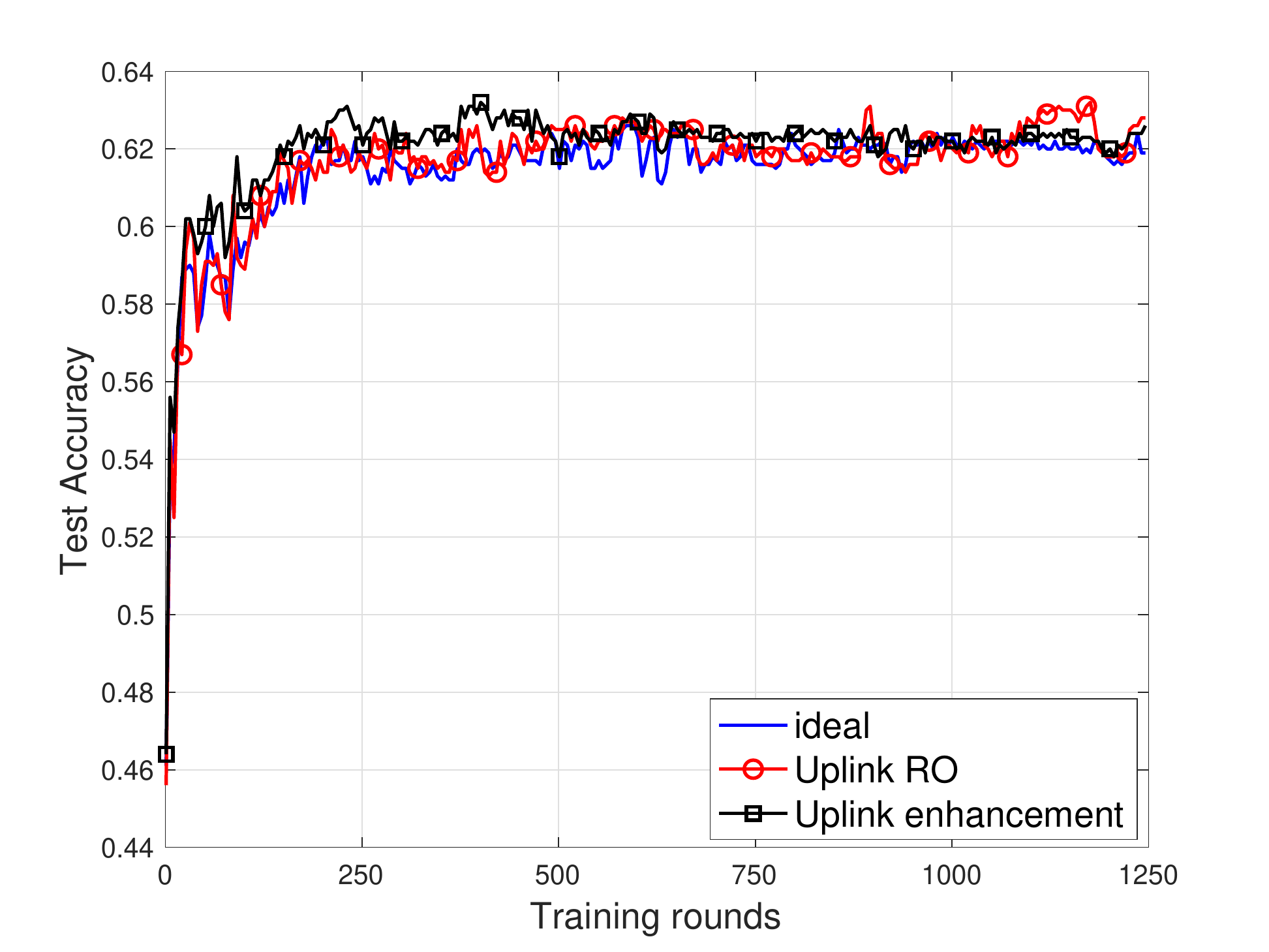}}
    
    %\caption{Comparison of training loss and test accuracy of a SVM FL task with uplink communication via ideal channel (interference and noise free), random orthogonalization method and enhanced method.}
    %\label{fig:SVMUL}
%\end{figure}

%\begin{figure}[htb]
%    \centering
 %   \subfigure{\includegraphics[width = 0.48\linewidth]{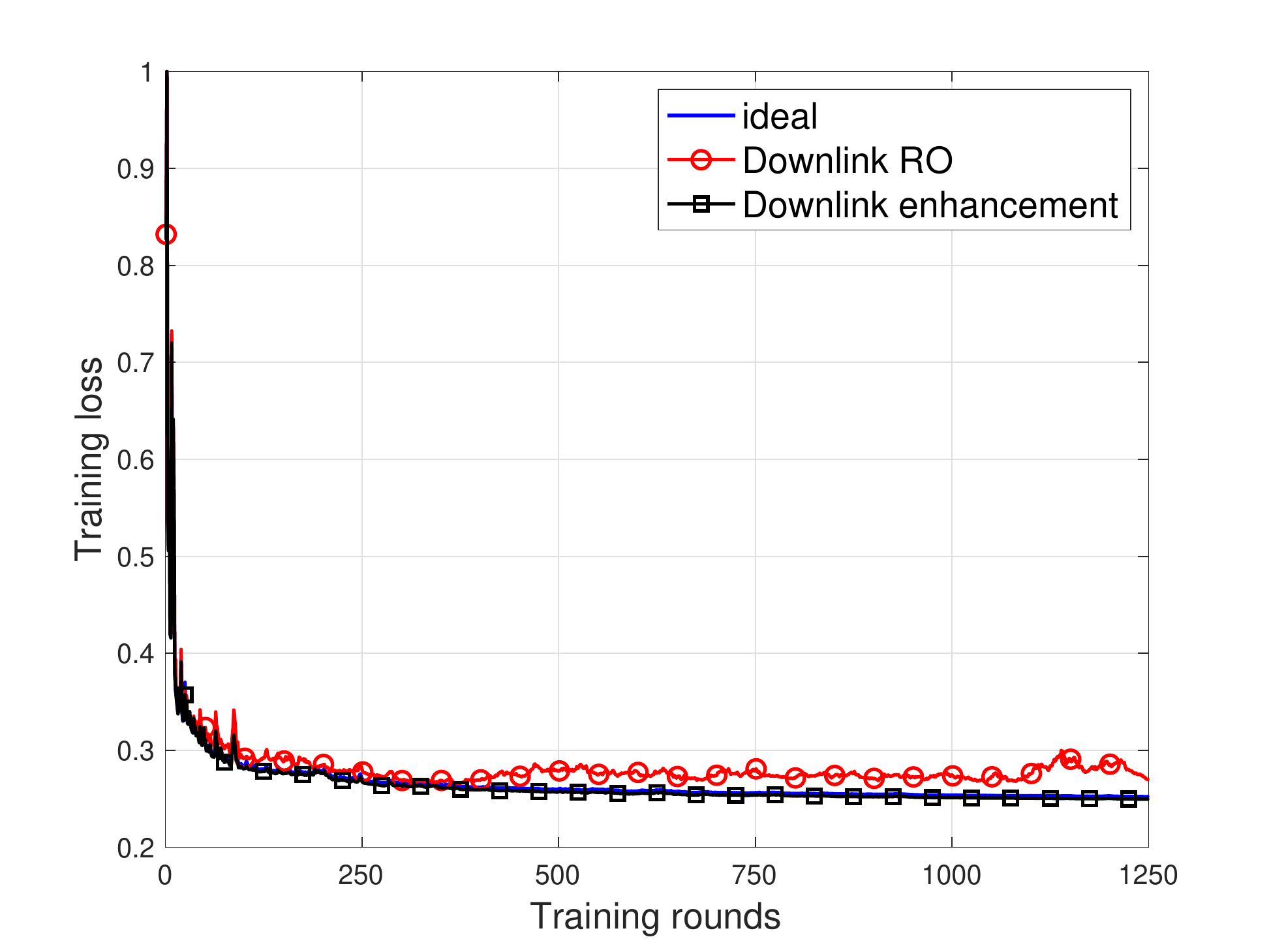}}
%    \subfigure{\includegraphics[width = 0.48\linewidth]{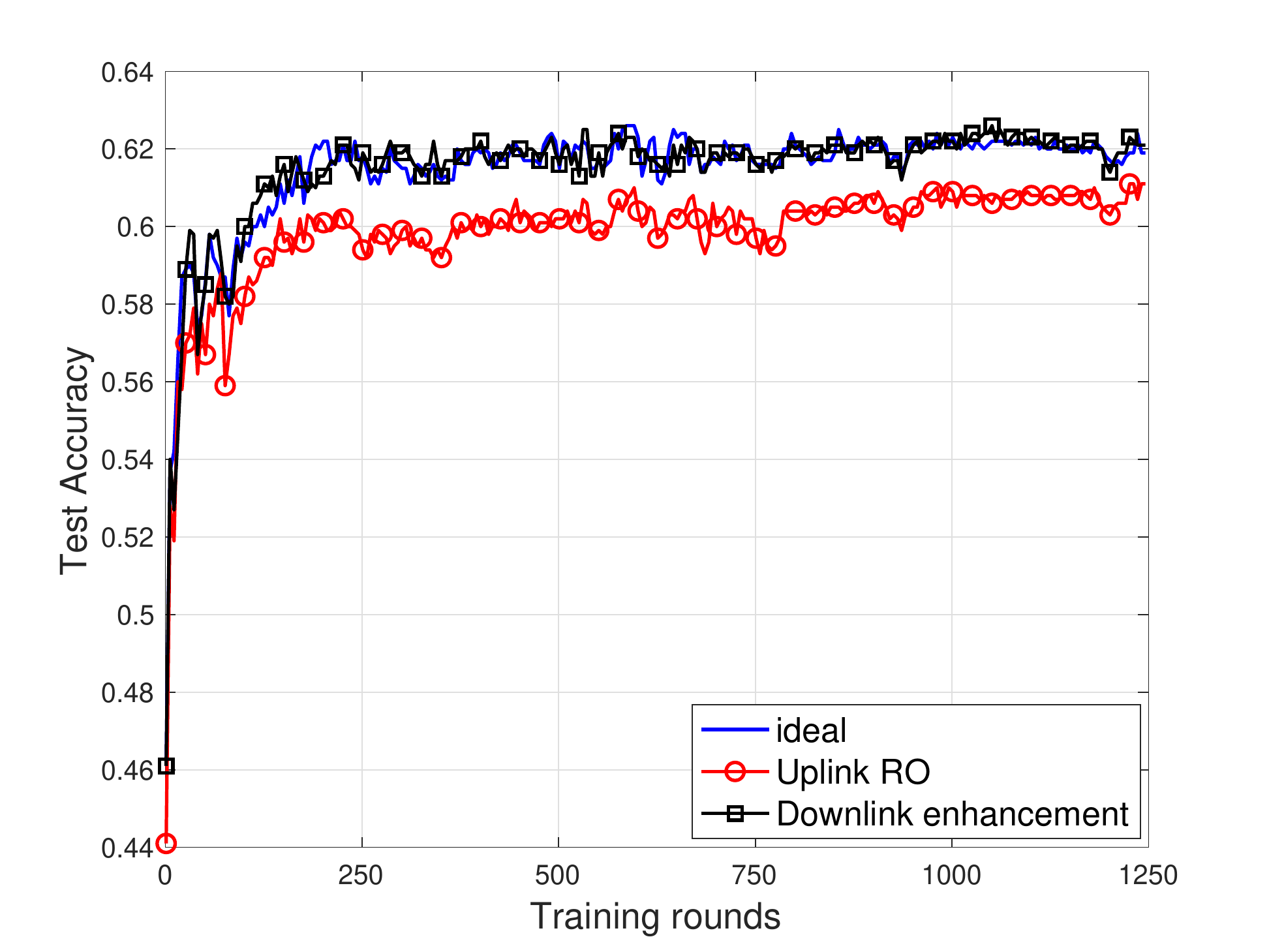}}
    
   % \caption{Comparison of training loss and test accuracy of a SVM FL task with downlink communication via ideal channel (interference and noise free), random orthogonalization method and enhanced method.}
   % \label{fig:SVMDL}
%\end{figure}

%\begin{figure}[htb]
    %\centering
    %\subfigure{\includegraphics[width = 0.48\linewidth]{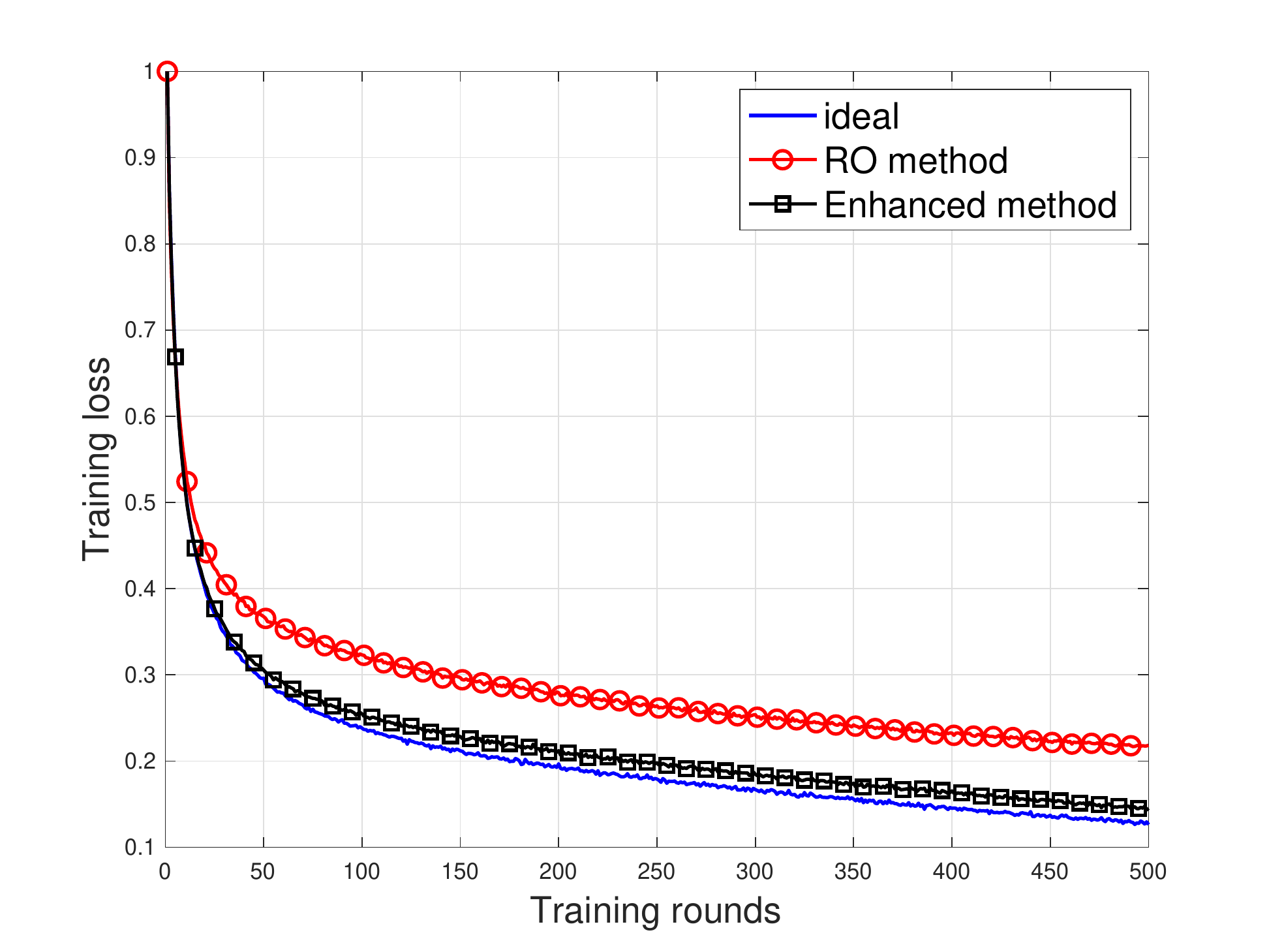}}
    %\subfigure{\includegraphics[width = 0.48\linewidth]{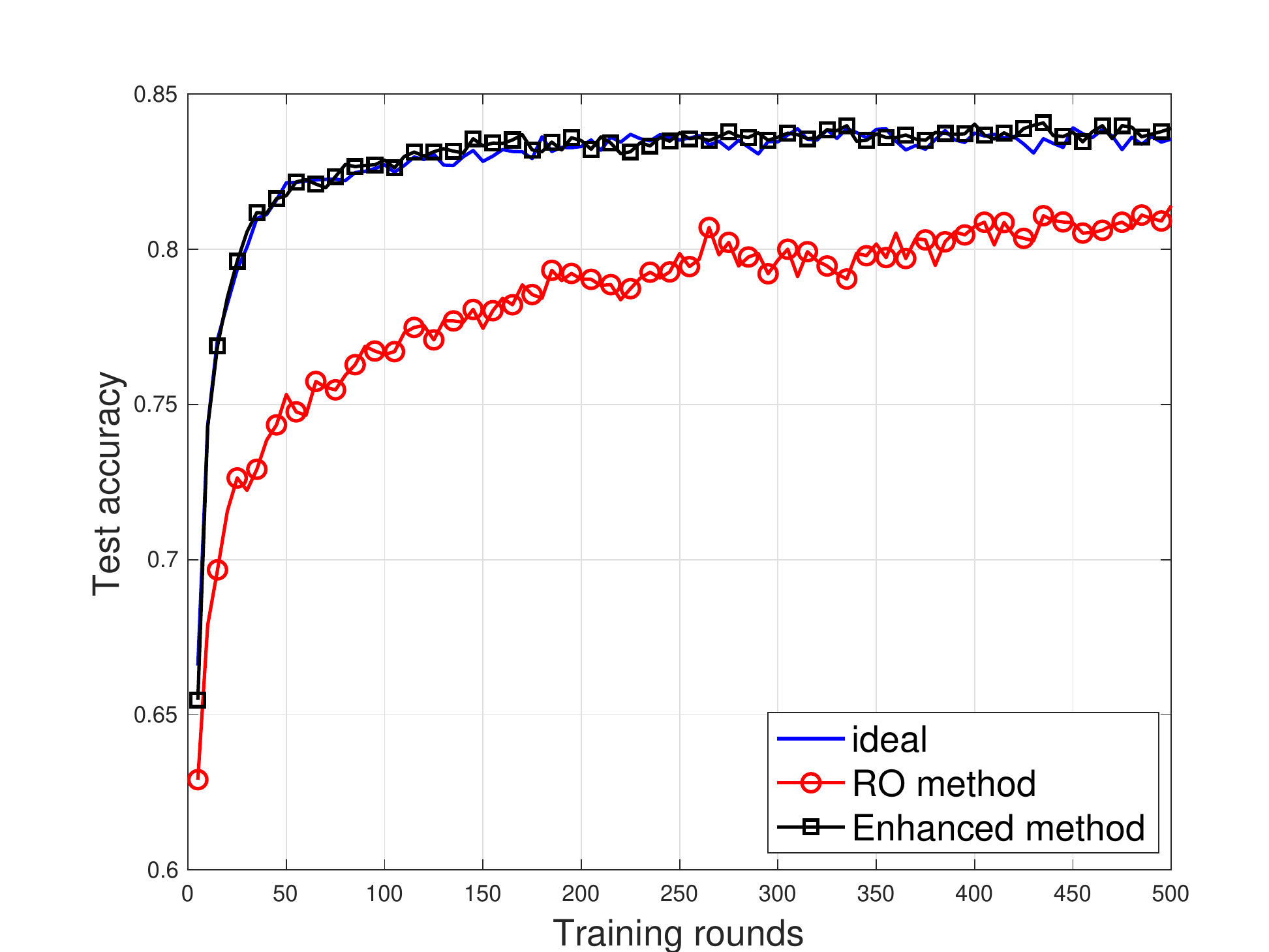}}

    %\caption{Comparison of training loss and test accuracy of a CIFAR classification FL task with uplink and downlink communication via ideal channel (interference and noise free), random orthogonalization method and enhanced method.}
   % \label{fig:CIFAR}
%\end{figure}

\begin{figure}[htb]


    \centering
    \subfigure[MNIST uplink]{\includegraphics[width = 0.32\linewidth]{figure_Journal/TrainLoss_UL_Compare_SVM-eps-converted-to.pdf}}
    \subfigure[MNIST uplink]{\includegraphics[width = 0.32\linewidth]{figure_Journal/TestAcc_UL_Compare_SVM-eps-converted-to.pdf}}
    \subfigure[MNIST downlink]{\includegraphics[width = 0.32\linewidth]{figure_Journal/TrainLoss_DL_Compare_SVM-eps-converted-to.pdf}}
    \subfigure[MNIST downlink]{\includegraphics[width = 0.32\linewidth]{figure_Journal/TestAcc_DL_Compare_SVM-eps-converted-to.pdf}}
    \subfigure[CIFAR-10 uplink+downlink]{\includegraphics[width = 0.32\linewidth]{figure_Journal/CIFAR-RO-enhanced-ideal-eps-converted-to.pdf}}
    \subfigure[CIFAR-10 uplink+downlink]{\includegraphics[width = 0.32\linewidth]{figure_Journal/CIFAR-RO-enhanced-ideal-acc-eps-converted-to.pdf}}
    \caption{Comparison of the training loss and test accuracy. (a) and (b): a SVM FL task with an ideal uplink communication (interference and noise free), random orthogonalization,  and the enhanced method; (c) and (d): a SVM FL task with an ideal downlink communication  (interference and noise free), random orthogonalization,  and the enhanced method; (e) and (f): a CIFAR classification FL task with ideal uplink and downlink communications  (interference and noise free), random orthogonalization, and the enhanced method.}
    \label{fig:performanceCompare}
\end{figure}

\subsection{Learning Performance}\label{subsec:learningperformance}
To evaluate the ML performance, we carry out experiments of FL classification tasks using two widely adopted real-world datasets: MNIST and CIFAR-10, via a support vector machine (SVM) model and a convolution neural network (CNN) model, respectively. %The experiment results reveal that both models have promising test accuracy and convergence rate with the proposed communication designs.

\mypara{MNIST-SVM.} We implement a SVM to classify even and odd numbers in the MNIST handwritten-digit dataset \cite{deng2012mnist}, with $d = 784$. Total clients are set as $N = 20$, the size of each local dataset is $500$, the size of the test set is $2000$, and $E = 1$. The local dataset can be regarded as non-i.i.d. since we only allocate data of one label to each client. We consider a massive MIMO cell with $M = 256$ antennas at the BS and $K = 8$ (out of 20) randomly selected clients are involved in each learning round. The channels between each client and the BS are assumed to be i.i.d. Rayleigh fading.

Fig.~\ref{fig:performanceCompare}(a) and Fig.~\ref{fig:performanceCompare}(b) report the training loss and test accuracy when the uplink adopts the proposed method, and the downlink is assumed to be noise-free. The uplink SNR is set as $10$ dB. We can see that both random orthogonalization and the enhanced method behave almost identically as the ideal case where both uplink and downlink communications are perfect. Note that although the global model received at the BS has noise and interference components, the actual learning performances of the two methods do not deteriorate. Due to the model differential transmission in the uplink communications, the effective SINR of the received global model gradually increases as the model converges, despite the presence of channel interference and noise.  Fig.~\ref{fig:performanceCompare}(c) and Fig.~\ref{fig:performanceCompare}(d) demonstrate the learning performance when the proposed designs are applied to the downlink communications. Since the model differential transmission is infeasible, we set the initial downlink SNR as $0$ dB and scale at a rate of $\mathcal{O}(t^2)$ as the learning progresses (see \cite{wei2021TCCN}). We notice that the learning performance of the enhanced method is almost identical to that of the ideal case, while there exists a performance gap of about $2\%$ test accuracy loss of random orthogonalization. %Unlike enhanced method that cancels all interference in the received global model, the interference is constant in the global model obtained via random orthogonalization method despite the increased SNR. Note that this gap can be tighten as the growth of the number of antennas at the BS. Therefore, downlink random orthogonalization method is more attractive in systems with large number of antennas or severely limited resources for its efficiency and promising performance.

\mypara{CIFAR-CNN.} We train a CNN model with two $5 \times 5$ convolution layers (both with 64 channels), two fully connected layers (384 and 192 units respectively) with the $\ssf{ReLU}$ activation, and a final output layer with the softmax activation. The two convolution layers are both followed by $2 \times 2$ max pooling and a local response norm layer. In the FL tasks, we set $N=K=10$, and the size of each local dataset is $1000$, with mini-batch size $50$ and $E=5$. The initial learning rate is $\eta=0.15$ and decays every 10 rounds with rate 0.99. We consider a massive MIMO cell with $M = 1024$ antennas at the BS and the channels between each client and the BS are assumed to be i.i.d. Rayleigh fading. The uplink SNR is set as $10$ dB and the initial downlink SNR is set as $0$ dB, and scales at the rate of $\mathcal{O}(t^2)$.% as learning progresses.

Fig.~\ref{fig:performanceCompare}(e) and Fig.~\ref{fig:performanceCompare}(f) illustrate the training loss and test accuracy versus the learning rounds when \emph{both} the uplink and downlink communications adopt the random orthogonalization method or the enhanced method, respectively. It is observed that the enhanced method achieves similar training loss and test accuracy as the ideal case. Due to the constant interference in the downlink communications, random orthogonalization incurs a test accuracy loss of about $3\%$.

To summarize, experiments on both datasets demonstrate that random orthogonalization suffers a slight performance degradation over the ideal case when it is applied to the downlink communications. As we have stated in Remark \ref{remark:constantconvergence}, unlike the enhanced method that cancels all interference in the received global model, the interference is constant in the global model obtained via random orthogonalization despite the increased SNR. Note that this gap can be reduced by increasing $M$. Therefore, downlink random orthogonalization is more attractive in systems with large number of antennas or severely limited resources. %, for its efficiency and promising performance.

\section{Conclusions}
\label{sec:conc}

Leveraging the unique characteristics of channel hardening and favorable propagation in a massive MIMO system, we have proposed a novel uplink communication method, termed \emph{random orthogonalization}, that significantly reduces the channel estimation overhead while achieving natural over-the-air model aggregation without requiring transmitter side channel state information. We have extended this principle to the downlink communication phase and developed a simple but highly effective model broadcast method for FL. We also relaxed the massive MIMO assumption by proposing an enhanced random orthogonalization design that utilizes channel echos. Theoretical performance analyses, from both communication (CRLB) and machine learning (model convergence rate) perspectives, have been carried out. The theoretical results suggested that random orthogonalization achieves the same convergence rate as vanilla FL with perfect communications asymptotically, and were further validated with numerical experiments. %More importantly, random orthogonalization improves the scalability of FL, which is a critical feature that is often bottlenecked by the limited wireless resources.

%%%%%%%%%%%%%%%%%%%%%%%%%%%%%%%%%%%%%%%%%% Appendix %%%%%%%%%%%%%%%%%%%%%%%%%%%%%%%%%%%%%%%%%%%%%%%%%%%

\appendices

\section{Preliminaries}
\label{sec:notation}

With a slight abuse of notation, we change the timeline to be with respect to the overall SGD iteration time steps instead of the communication rounds, i.e., $$t=\underbrace{1, \cdots, E}_{\text{round 1}}, \underbrace{E+1, \cdots, 2E}_{\text{round 2}}, \cdots, \cdots, \underbrace{(T-1)E+1, \cdots, TE}_{\text{round $T$}}.$$ 
Note that the global model $\vect{w}_{t}$ is only accessible at the clients for specific $t \in \mathcal{I}_E$, where $\mathcal{I}_E=\{nE ~|~n=1,2,\dots\}$, i.e., the time steps for communication.  The notation for $\eta_t$ is similarly adjusted to this extended timeline, but their values remain constant within the same round. The key technique in the proof is the \emph{perturbed iterate framework}  in \cite{mania2017siam}. In particular, we first define the following local training variables for client $k$: %$\vect{v}_{t+1}^k \triangleq \vect{p}_t^k - \eta_t \nabla \tilde f_k(\vect{p}_t^k)$; when $t+1 \notin \mathcal{I}_E$, we have $\vect{v}_{t+1}^k = \vect{u}_{t+1}^k = \vect{w}_{t+1}^k = \vect{p}_{t+1}^k$; when $t+1 \in \mathcal{I}_E$, we have: $\vect{u}_{t+1}^k = \frac{1}{K} \sum_{i \in [K]} \vect{v}_{t+1}^i$, $\vect{w}_{t+1}^k =  \frac{1}{K}\sum_{i \in [K]}\hbf_s^H \hbf_k (\vect{v}_{t+1}^i - \vect{w}_{t + 1 -E}) +  \Nbf_{t + 1} \hbf_s + \vect{w}_{t + 1 - E}$ and $\vect{p}_{t+1}^k = \hbf_k^H \hbf_s \vect{w}_{t+1}^k + \nbf_{t + 1}$
%To summarize the aforementioned steps:
\begin{align*}\small
    \vect{v}_{t+1}^k & \triangleq \vect{p}_t^k - \eta_t \nabla \tilde f_k(\vect{p}_t^k); \\
     \vect{u}_{t+1}^k & \triangleq \begin{cases}
        \vect{v}_{t+1}^k & \text{if~} t+1 \notin \mathcal{I}_E, \\
        \frac{1}{K} \sum_{i \in [K]} \vect{v}_{t+1}^i & \text{if~} t+1 \in \mathcal{I}_E; 
    \end{cases} \\
    \vect{w}_{t+1}^k & \triangleq \begin{cases}
         \vect{v}_{t+1}^k & \text{if~} t+1 \notin \mathcal{I}_E, \\
          \frac{1}{K}\sum_{i \in [K]}\hbf_s^H \hbf_i (\vect{v}_{t+1}^i - \vect{w}_{t + 1 -E}) &\\ +  \frac{1}{K}\Nbf_{t + 1} \hbf_s + \vect{w}_{t + 1 - E}
         & \text{if~} t+1 \in \mathcal{I}_E;
    \end{cases}\\
         \vect{p}_{t+1}^k &  \triangleq \begin{cases}
         \vect{v}_{t+1}^k & \text{if~} t+1 \notin \mathcal{I}_E, \\
          \wbf_{t + 1}^k + \tilde\zbf^k_{t + 1}
         & \text{if~} t+1 \in \mathcal{I}_E\text{~and~} k \in [K],
         \\
          \wbf_{t + 1}^k 
         & \text{if~} t+1 \in \mathcal{I}_E \text{~and~} k \notin [K];
    \end{cases}
 \end{align*}
where %$$\Nbf_{t + 1} \triangleq\begin{bmatrix} \nbf_1^H\\
%\vdots\\ \nbf_i^H \\ \vdots \\ \nbf_d^H\end{bmatrix}\in \mathbb{C}^{d\times M}$$
$\Nbf_{t + 1} \triangleq\squab{\nbf_1, \cdots, \nbf_i, \cdots, \nbf_d}^H \in \mathbb{C}^{d\times M}$
is the stack of uplink noise in \eqref{eq:ULcom}, and
$$\tilde\zbf^k_{t + 1} \triangleq
\begin{cases}
        \sqrt{K}\squab{\real(z^k_1/g_1), \cdots, \real(z^k_d/g_d)}^H\in \mathbb{C}^{d\times 1} & \text{if~} k\in[K], \\
        0 & \text{otherwise}, 
    \end{cases}$$
% $$\tilde\zbf^k_{t + 1} \triangleq \sqrt{K}\squab{\real(z^k_1/g_1), \cdots, \real(z^k_d/g_d)}^H\in \mathbb{C}^{d\times 1} \text{~if~}\forall k\in[K]\text{~while~}\tilde\zbf^k_{t + 1} = 0 \text{~otherwise}$$ 
are the downlink noise in \eqref{eq:DLOTAComp}, respectively. Then, we construct the following \textit{virtual sequences}:
$$\avgvect{v}_t=\frac{1}{N}\sum_{k=1}^N \vect{u}_t^k,\;\;\avgvect{u}_t=\frac{1}{N}\sum_{k=1}^N \vect{v}_t^k,\;\;\avgvect{w}_t=\frac{1}{N}\sum_{k=1}^N \vect{w}_t^k,\;\;\text{and~~}
\avgvect{p}_{t} = \frac{1}{N}\sum_{k=1}^N \vect{p}_t^k.$$ 
We also define $\avgvect{g}_{t} = \frac{1}{N}\sum_{k=1}^N \nabla f_k(\vect{w}_t^k)$ and $\vect{g}_{t} = \frac{1}{N}\sum_{k=1}^N \nabla \tilde f_k(\vect{w}_t^k)$ for convenience. Therefore, $\avgvect{v}_{t+1} = \avgvect{w}_t - \eta_t \vect{g}_t$ and $\expt \squab{\vect{g}_t} = \avgvect{g}_t$. Note that the global model $\wbf_{t + 1}$ is only meaningful when $t+1 \in \mathcal{I}_E$, hence we have
$\vect{w}_{t + 1} \triangleq \frac{1}{K}\sum_{k\in[K]} \vect{w}_{t + 1}^k =\frac{1}{N}\sum_{k=1}^N \vect{w}_{t + 1}^k = \avgvect{w}_{t + 1}$.
% \begin{equation}
%     \label{eqn:avg_sq}
%     \vect{w}_{t + 1} \triangleq \frac{1}{K}\sum_{k\in[K]} \vect{w}_{t + 1}^k =\wbf_{t+1}^k =\frac{1}{N}\sum_{k=1}^N \vect{w}_{t + 1}^k = \avgvect{w}_{t + 1}.
% \end{equation}
Thus it is sufficient to analyze the convergence of $\norm{\avgvect{w}_{t+1}-\vect{w}^*}^2$ to evaluate random orthogonalization.

\section{Lemmas}
We first establish the following lemmas that are useful in the proof of Theorem \ref{thm.RO}. 
\label{sec:lemmas}
\begin{lemma}%[Bounding one-step SGD]
\label{lemma:one-step-sgd}
Let Assumptions 1-4 hold, $\eta_t$ is non-increasing, and $\eta_t \leq 2 \eta_{t+E}$ for all $t \geq 0$. If $\eta_t \leq {1}/(4L)$, we have
%\begin{equation*}
%\begin{split}
    $\expt \norm{\avgvect{v}_{t+1} - \vect{w}^*}^2  \leq (1-\eta_t \mu) \expt \norm{\avgvect{p}_t - \vect{w}^*}^2 %\\
     + \eta_t^2 \left({\sum_{k\in[N]}H_k^2}/{N^2} + 6 L \Gamma + 8(E-1)^2H^2\right)$.
%\end{split}
%\end{equation*}
\end{lemma}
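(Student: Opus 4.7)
The plan is to follow the perturbed iterate framework and adapt the one-step \textsc{FedAvg} analysis of \cite{li2019convergence} to the augmented virtual sequence defined above. Starting from the identity $\avgvect{v}_{t+1} = \avgvect{p}_t - \eta_t \vect{g}_t$ induced by $\vect{v}_{t+1}^k = \pbf_t^k - \eta_t \nabla\tilde f_k(\pbf_t^k)$, I decompose
\[
\avgvect{v}_{t+1} - \vect{w}^* \;=\; \bigl[(\avgvect{p}_t - \vect{w}^*) - \eta_t \avgvect{g}_t\bigr] \;+\; \eta_t\,(\avgvect{g}_t - \vect{g}_t),
\]
where $\avgvect{g}_t = \tfrac{1}{N}\sum_k \nabla f_k(\pbf_t^k)$ and $\vect{g}_t$ is its stochastic counterpart. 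Taking $\expt\|\cdot\|^2$, the cross term vanishes by the unbiasedness in Assumption~3, and per-client independence of mini-batch sampling together with the variance bound in Assumption~3 give $\expt\|\avgvect{g}_t - \vect{g}_t\|^2 \leq \sum_k H_k^2/N^2$, producing the first summand of the constant on the right-hand side.

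The core task is to bound $\expt\|(\avgvect{p}_t - \vect{w}^*) - \eta_t \avgvect{g}_t\|^2$ by $(1-\eta_t\mu)\,\expt\|\avgvect{p}_t - \vect{w}^*\|^2 + \eta_t^2(6L\Gamma + 8(E-1)^2 H^2)$. I expand the square into $\|\avgvect{p}_t - \vect{w}^*\|^2 - 2\eta_t \langle \avgvect{p}_t - \vect{w}^*, \avgvect{g}_t\rangle + \eta_t^2 \|\avgvect{g}_t\|^2$. For the quadratic piece I apply Jensen and $L$-smoothness (Assumption~1) to obtain $\|\avgvect{g}_t\|^2 \leq \tfrac{2L}{N}\sum_k (f_k(\pbf_t^k) - f_k^*)$. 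For the inner product I split $\avgvect{p}_t - \vect{w}^* = (\avgvect{p}_t - \pbf_t^k) + (\pbf_t^k - \vect{w}^*)$ and apply $\mu$-strong convexity (Assumption~2) to the second piece, yielding $\langle \pbf_t^k - \vect{w}^*, \nabla f_k(\pbf_t^k)\rangle \geq f_k(\pbf_t^k) - f_k(\vect{w}^*) + \tfrac{\mu}{2}\|\pbf_t^k - \vect{w}^*\|^2$. An AM--GM bound on the residual inner product $\langle \avgvect{p}_t - \pbf_t^k, \nabla f_k(\pbf_t^k)\rangle$ exchanges it for a drift term proportional to $\|\pbf_t^k - \avgvect{p}_t\|^2$ plus an $L$-smoothness contribution that merges with $\|\avgvect{g}_t\|^2$.

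Under $\eta_t \leq 1/(4L)$, the coefficient of $\tfrac{1}{N}\sum_k (f_k(\pbf_t^k) - f_k^*)$ becomes non-positive, so these functional-value contributions can be dropped after introducing $\Gamma = f^* - \sum_k p_k f_k^*$ via $\sum_k p_k (f_k(\pbf_t^k) - f_k^*) = \sum_k p_k (f_k(\pbf_t^k) - f^*) + \Gamma$; this produces the $6L\Gamma\,\eta_t^2$ summand, the constant $6$ arising from the standard bookkeeping that converts $\|\pbf_t^k - \vect{w}^*\|^2$ into $\|\avgvect{p}_t - \vect{w}^*\|^2$ up to a drift residual. The strongly-convex contribution $-\eta_t\mu\|\avgvect{p}_t - \vect{w}^*\|^2$ then combines with the leading $\|\avgvect{p}_t - \vect{w}^*\|^2$ to yield the desired contraction factor $(1-\eta_t\mu)$.

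The main obstacle is controlling the client-drift quantity $\tfrac{1}{N}\sum_k \expt\|\pbf_t^k - \avgvect{p}_t\|^2$. Let $t_0$ be the most recent synchronization step, so that $t - t_0 \leq E-1$ and all $\pbf_{t_0}^k$ coincide, making the drift at $t_0$ vanish. Writing $\pbf_t^k - \pbf_{t_0}^k$ as a telescoping sum of at most $E-1$ stochastic gradient increments and invoking Assumption~4, each increment is bounded by $\eta_{t_0} H$; iterating the hypothesis $\eta_t \leq 2\eta_{t+E}$ gives $\eta_{t_0} \leq 2\eta_t$. A Jensen-type inequality then yields $\tfrac{1}{N}\sum_k \expt\|\pbf_t^k - \avgvect{p}_t\|^2 \leq 4\eta_t^2 (E-1)^2 H^2$, and multiplying by the $2L$ factor carried over from the smoothness bound produces the $8(E-1)^2 H^2$ contribution. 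Collecting the three estimates completes the proof.
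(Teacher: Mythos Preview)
Your proposal is correct and follows exactly the standard one-step \textsc{FedAvg} analysis from \cite{li2019convergence,stich2018local}; this is precisely what the paper invokes, since it omits the proof of this lemma entirely and states only that ``the proofs are similar to the technique in \cite{stich2018local}.'' One minor caveat: in this paper's augmented virtual sequence the local iterates $\pbf_{t_0}^k$ at the most recent synchronization step need not literally coincide (clients in $[K]$ carry the downlink perturbation $\tilde\zbf_{t_0}^k$), so your claim that ``the drift at $t_0$ vanishes'' is slightly loose; however, the paper explicitly treats this lemma as ``not impacted by the noisy communication'' and absorbs that initial spread separately via Lemma~\ref{lemma:EnhanceDL}, so your derivation matches the intended argument.
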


\begin{lemma}%[Unbiased and variance bounded client sampling]
\label{lemma:sample}
    Let Assumptions 1-4 hold. With $\eta_t \leq 2\eta_{t+E}$ for all $t \geq 0$ and $\forall t+1 \in \mathcal{I}_E$, we have
     \begin{equation*}
        \expt \squab{\avgvect{u}_{t+1}} = \avgvect{v}_{t+1}, \;\; \text{and} \;\; \expt \norm{\avgvect{v}_{t+1} - \avgvect{u}_{t+1}}^2 \leq \frac{N-K}{N-1} \frac{4}{K} \eta_t^2 E^2 H^2.
     \end{equation*}
\end{lemma}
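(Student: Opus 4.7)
My plan is to handle the unbiasedness claim first and then the variance bound, both of which stem from the fact that, when $t+1 \in \mathcal{I}_E$, the definition of $\vect{u}_{t+1}^k$ gives the \emph{same} value $\frac{1}{K}\sum_{i\in[K]}\vect{v}_{t+1}^i$ for every $k \in [N]$. Hence the population mean collapses to $\avgvect{u}_{t+1} = \frac{1}{K}\sum_{i\in[K]}\vect{v}_{t+1}^i$. Conditioning on the local iterates $\{\vect{v}_{t+1}^j\}_{j=1}^N$, the subset $[K]$ is drawn uniformly among all size-$K$ subsets of $[N]$, so each $j$ lies in $[K]$ with marginal probability $K/N$. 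Rewriting $\avgvect{u}_{t+1} = \frac{1}{K}\sum_{j=1}^N \mathbf{1}\{j\in[K]\}\vect{v}_{t+1}^j$ and taking expectation over this sampling directly yields $\expt[\avgvect{u}_{t+1}] = \frac{1}{N}\sum_{j=1}^N\vect{v}_{t+1}^j = \avgvect{v}_{t+1}$, giving the first claim.

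For the variance, I would center around $\avgvect{v}_{t+1}$ and expand using pairwise indicator moments. Using $\Pr(j\in[K])=K/N$, $\Pr(j,j'\in[K]) = K(K-1)/(N(N-1))$ for $j\neq j'$, and the identity $\sum_{j,j'}\langle\vect{v}_{t+1}^j-\avgvect{v}_{t+1},\vect{v}_{t+1}^{j'}-\avgvect{v}_{t+1}\rangle = 0$ (which follows since the centered vectors sum to zero), the off-diagonal cross terms combine cleanly with the diagonals and the standard sampling-without-replacement formula drops out:
\begin{equation*}
    \expt\norm{\avgvect{u}_{t+1}-\avgvect{v}_{t+1}}^2 \;=\; \frac{N-K}{K\,N(N-1)}\sum_{j=1}^N \norm{\vect{v}_{t+1}^j - \avgvect{v}_{t+1}}^2 .
\end{equation*}

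The remaining task is to bound the population variance $\sum_j \norm{\vect{v}_{t+1}^j - \avgvect{v}_{t+1}}^2$ by $4NE^2\eta_t^2 H^2$. I would pivot around the last global model $\vect{w}_{t+1-E}$ and apply $\norm{a-b}^2 \leq 2\norm{a}^2 + 2\norm{b}^2$ to obtain $\norm{\vect{v}_{t+1}^j - \avgvect{v}_{t+1}}^2 \leq 2\norm{\vect{v}_{t+1}^j - \vect{w}_{t+1-E}}^2 + 2\norm{\avgvect{v}_{t+1} - \vect{w}_{t+1-E}}^2$. Because each client executes at most $E$ mini-batch SGD steps from the common starting model $\vect{w}_{t+1-E}$ with per-step size bounded by $\eta_t$ (using the constancy of $\eta$ within a round together with the regularity $\eta_t \leq 2\eta_{t+E}$ to cover round boundaries) and gradients bounded by $H$ via Assumption~\ref{as:4}, Cauchy--Schwarz/Jensen telescoping gives $\norm{\vect{v}_{t+1}^j - \vect{w}_{t+1-E}}^2 \leq E^2\eta_t^2 H^2$, and averaging propagates the same bound to $\avgvect{v}_{t+1} - \vect{w}_{t+1-E}$. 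Summing over $j$ and substituting into the display above produces the announced $\frac{N-K}{N-1}\frac{4}{K}\eta_t^2 E^2 H^2$.

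The main technical subtlety lies in the last paragraph: the per-client starting point defined in Appendix~\ref{sec:notation} is actually $\vect{p}_{t+1-E}^j = \vect{w}_{t+1-E} + \tilde{\vect{z}}_{t+1-E}^j$ for clients that were selected in the \emph{previous} round, so pivoting cleanly around $\vect{w}_{t+1-E}$ requires either absorbing the residual downlink noise into the drift bound (which is legitimate since its magnitude is controlled by the downlink SNR scaling $\ssf{SNR}_{\ssf{DL}}\geq (1-\mu\eta_t)/\eta_t^2$ postulated in Theorem~\ref{thm.RO}) or conditioning on the downlink noise realization and tracking the extra offset separately. The sampling-arithmetic portion of the argument is otherwise routine once unbiasedness is established.
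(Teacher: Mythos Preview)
Your proposal is correct and is precisely the standard sampling-without-replacement argument the paper intends: the paper does not actually prove this lemma but writes that it ``only concern[s] the local model update and user selection, and [is] not impacted by the noisy communication,'' deferring to the technique in \cite{stich2018local}. Your derivation---collapsing $\avgvect{u}_{t+1}$ to a size-$K$ sample mean, applying the finite-population variance identity, and bounding each client's drift from the last synchronization point by $E^2\eta_t^2H^2$ via Assumption~\ref{as:4}---is exactly that technique; a minor variant replaces your $2{+}2$ split by the fact that the mean minimizes $\sum_j\norm{\vect{v}_{t+1}^j-c}^2$ over $c$, with the factor $4$ then coming from $\eta_{t+1-E}^2\le 4\eta_t^2$ as in the proof of Lemma~\ref{lemma:ROUL}.

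The subtlety you flag about the true per-client starting point being $\vect{p}_{t+1-E}^k=\vect{w}_{t+1-E}+\tilde{\vect{z}}_{t+1-E}^k$ rather than the common $\vect{w}_{t+1-E}$ is real; the paper simply asserts the lemma is unaffected by communication noise and does not address it. Your proposed fix---absorbing the $O(\eta_t)$ downlink noise into the drift bound using the SNR scaling $\ssf{SNR}_{\ssf{DL}}\ge(1-\mu\eta_t)/\eta_t^2$---is the right way to make the statement honest without changing the order of the bound.
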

Lemmas~\ref{lemma:one-step-sgd} and \ref{lemma:sample} establish bounds for the one-step SGD and random client sampling, respectively. These results only concern the local model update and user selection, and are not impacted by the noisy communication. The proofs are similar to the technique in \cite{stich2018local}, and are omitted due to space limitation.
\begin{lemma}\label{lemma:ROUL}
Let Assumptions 1-4 hold. With $\eta_t \leq 2\eta_{t+E}$ for all $t \geq 0$ and $\forall t+1 \in \mathcal{I}_E$, we have
 \begin{equation*}\label{eqn:RO_unbiased}
         \expt \squab{\avgvect{w}_{t+1}} = \avgvect{u}_{t+1}, \;\; \text{and}\;\; \expt \norm{\avgvect{w}_{t+1} - \avgvect{u}_{t + 1}}^2 \leq \frac{4}{K}\squab{\frac{K}{M} + \frac{1}{\ssf{SNR}_{\ssf{UL}}}} \eta^2_{t} E^2 H^2.
     \end{equation*}
\end{lemma}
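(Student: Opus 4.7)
The plan is to exploit that at any communication step $t+1\in\mathcal{I}_E$ both $\vect{w}_{t+1}^k$ and $\vect{u}_{t+1}^k$ are independent of the client index $k$, so $\avgvect{w}_{t+1}=\vect{w}_{t+1}^k$ and $\avgvect{u}_{t+1}=\tfrac{1}{K}\sum_{i\in[K]}\vect{v}_{t+1}^i$. Substituting $\hbf_s=\sum_{j\in[K]}\hbf_j$, writing the per-user model differential $\xbf_i\triangleq\vect{v}_{t+1}^i-\vect{w}_{t+1-E}$, and expanding $\hbf_s^H\hbf_i=\hbf_i^H\hbf_i+\sum_{j\neq i}\hbf_j^H\hbf_i$ would yield the clean decomposition
\begin{equation*}
\avgvect{w}_{t+1}-\avgvect{u}_{t+1} \;=\; \underbrace{\tfrac{1}{K}\sum_{i}(\hbf_i^H\hbf_i-1)\xbf_i}_{\text{(A): hardening residual}} \;+\; \underbrace{\tfrac{1}{K}\sum_{i}\sum_{j\neq i}\hbf_j^H\hbf_i\,\xbf_i}_{\text{(B): propagation residual}} \;+\; \underbrace{\tfrac{1}{K}\Nbf_{t+1}\hbf_s}_{\text{(C): receiver noise}}.
\end{equation*}
Unbiasedness would follow immediately from $\expt[\hbf_i^H\hbf_i]=1$, $\expt[\hbf_j^H\hbf_i]=0$ for $j\neq i$, and noise--channel independence with $\expt[\Nbf_{t+1}]=0$.

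For the variance bound, I would first show that (A), (B), and (C) are pairwise uncorrelated. Cross terms with (C) vanish because the noise is zero-mean and independent of the channels. For (A) versus (B), a short case analysis of $\expt[(\hbf_i^H\hbf_i-1)\,\overline{\hbf_j^H\hbf_{i'}}]$ with $j\neq i'$, splitting on whether $i$ coincides with $i'$, with $j$, or with neither, shows every contribution is zero by inter-user independence and the zero channel means. Consequently $\expt\|\avgvect{w}_{t+1}-\avgvect{u}_{t+1}\|^2$ equals the sum of the three individual variances.

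Next I would compute each variance. For (A), the scalar identities $\expt[\hbf_i^H\hbf_i]=1$ and $\var[\hbf_i^H\hbf_i]=1/M$ (since $\|\hbf_i\|^2$ is a sum of $M$ i.i.d.\ exponentials of mean $1/M$) give $\expt\|\text{(A)}\|^2=\tfrac{1}{K^2 M}\sum_i\|\xbf_i\|^2$. For (B), the key observation is that in the fourfold index sum only the matched pair $(i,j)=(i',j')$ contributes: the potentially dangerous swapped pair $(j,i)=(i',j')$ produces $\expt[(\hbf_j^H\hbf_i)^2]$, which vanishes because $\hbf_j^H\hbf_i$ is circularly symmetric complex Gaussian and therefore has zero pseudo-variance. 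Using $\expt[|\hbf_j^H\hbf_i|^2]=1/M$ for independent Gaussian vectors then yields $\expt\|\text{(B)}\|^2=\tfrac{K-1}{K^2 M}\sum_i\|\xbf_i\|^2$. For (C), conditioning on $\hbf_s$ together with $\expt\|\hbf_s\|^2=K$ and the noise model $\nbf_\ell\sim\mathcal{CN}(0,\sigma_{\ssf{UL}}^2\Ibf/M)$ gives a per-coordinate variance $\sigma_{\ssf{UL}}^2/(KM)$, which after the per-symbol power normalization $\expt\|x_{k,i}\|^2=1$ (that rescales the physical noise by the magnitude of the differential) delivers a total contribution of $\tfrac{4E^2\eta_t^2 H^2}{K\,\ssf{SNR}_{\ssf{UL}}}$.

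Finally I would bound the per-user differential: since $\xbf_i=-\sum_{\tau=t+1-E}^{t}\eta_\tau\nabla\tilde f_i(\vect{p}_\tau^i)$, Jensen's inequality and Assumption~\ref{as:4} give $\|\xbf_i\|^2\le E\sum_\tau\eta_\tau^2 H^2$, and the hypothesis $\eta_t\le 2\eta_{t+E}$ (combined with $\eta_t$ non-increasing) yields $\eta_\tau\le 2\eta_t$ for $\tau\in[t+1-E,t]$, so $\|\xbf_i\|^2\le 4E^2\eta_t^2 H^2$ and hence $\sum_i\|\xbf_i\|^2\le 4KE^2\eta_t^2 H^2$. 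Adding the three variance pieces produces $\tfrac{4E^2\eta_t^2 H^2}{M}+\tfrac{4E^2\eta_t^2 H^2}{K\,\ssf{SNR}_{\ssf{UL}}}=\tfrac{4}{K}\bigl(\tfrac{K}{M}+\tfrac{1}{\ssf{SNR}_{\ssf{UL}}}\bigr)E^2\eta_t^2 H^2$, as required. The step I expect to be the main obstacle is the variance computation for (B)---correctly enumerating the four-tuple index patterns and recognizing that the swapped pair drops via the pseudo-variance identity $\expt[(\hbf_j^H\hbf_i)^2]=0$---together with keeping the implicit power-normalization book-keeping consistent so that the noise contribution carries the desired $E^2\eta_t^2 H^2$ scaling rather than a problem-dimension constant.
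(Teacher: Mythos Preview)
Your proposal is correct and follows essentially the same approach as the paper: both decompose $\avgvect{w}_{t+1}-\avgvect{u}_{t+1}$ into a channel-hardening residual, a favorable-propagation (cross-channel) residual, and a noise term, invoke the moment identities $\expt[\hbf_i^H\hbf_i]=1$, $\var[\hbf_i^H\hbf_i]=1/M$, $\expt[\hbf_j^H\hbf_i]=0$, $\expt|\hbf_j^H\hbf_i|^2=1/M$, and then bound $\sum_i\|\xbf_i\|^2\le 4K E^2\eta_t^2 H^2$ via Jensen plus $\eta_{t+1-E}\le 2\eta_t$. The only cosmetic difference is that the paper expands the full square of $\bigl(\text{signal}+\text{interference}+\text{noise}-\sum_k\xbf_k\bigr)$ and evaluates all ten terms directly, whereas you first subtract the target into your (A) and then argue (A), (B), (C) are pairwise uncorrelated---your organization is slightly cleaner (and your explicit pseudo-variance remark $\expt[(\hbf_j^H\hbf_i)^2]=0$ is a point the paper leaves implicit), but the underlying computation and the handling of the power-normalization for the noise term are identical.
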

\begin{proof}
We take expectation over randomness of fading channel and channel noise. As mentioned in Section \ref{sec:RO}, leveraging channel hardening and favorable propagation properties, we have
%\begin{equation*}
\begin{align*}\small
%\begin{split}
     & \expt \squab{\avgvect{w}_{t+1}}  = \expt\squab{\frac{1}{N}\sum_{k=1}^N \vect{w}_{t + 1}^k} = \expt[\vect{w}_{t + 1}^k]  = \expt\squab{\frac{1}{K}\sum_{i \in [K]}\hbf_s^H \hbf_i (\vect{v}_{t+1}^i - \vect{w}_{t + 1 -E}) +  \frac{1}{K}\Nbf_{t + 1} \hbf_s + \vect{w}_{t + 1 - E}}\\
     & = \expt\squab{\frac{1}{K}\sum_{i \in [K]}\hbf_s^H \hbf_i (\vect{v}_{t+1}^i - \vect{w}_{t + 1 -E})} +  \expt\squab{\frac{1}{K}\Nbf_{t + 1} \hbf_s} + \expt\squab{\vect{w}_{t + 1 - E}}\\
     & = \frac{1}{K}\sum_{i \in [K]}\expt\squab{\sum_{k \in [K]}\hbf_k^H \hbf_i (\vect{v}_{t+1}^i - \vect{w}_{t + 1 -E})} + \vect{w}_{t + 1 - E}\\
     & = \frac{1}{K}\sum_{i \in [K]}\expt\squab{\hbf_i^H \hbf_i (\vect{v}_{t+1}^i - \vect{w}_{t + 1 -E})} + \frac{1}{K}\sum_{i \in [K]}\expt\squab{\sum_{k \in [K], k\neq i}\hbf_k^H \hbf_i (\vect{v}_{t+1}^i - \vect{w}_{t + 1 -E})} + \vect{w}_{t + 1 - E}\\
     & = \frac{1}{K}\sum_{i \in [K]}(\vect{v}_{t+1}^i - \vect{w}_{t + 1 -E}) + \vect{w}_{t + 1 - E} = \frac{1}{K}\sum_{i \in [K]}\vect{v}_{t+1}^i = \avgvect{u}_{t + 1}.
%\end{split}
%\end{equation*}
\end{align*}
We next evaluate the variance of $\avgvect{w}_{t + 1}$.
Based on the facts that $\expt[\hbf_i^H\hbf_i] = 1$, and $\forall i\neq j$, we have $\expt[\hbf_i^H\hbf_j] = 0$, $\variance[\hbf_i^H\hbf_j] = \frac{1}{M}$, and $\xbf_i$ and $\xbf_j$ are independent, we have
\begin{align*}\small
%\begin{split}
\label{eq.A1}
    & \expt \norm{\avgvect{w}_{t+1} - \avgvect{u}_{t + 1}}^2 = \expt\left\Vert \frac{1}{K}\sum_{i \in [K]}\hbf_s^H \hbf_i (\vect{v}_{t+1}^i - \vect{w}_{t + 1 -E}) +  \frac{1}{K}\Nbf_{t + 1} \hbf_s + \vect{w}_{t + 1 - E} - \frac{1}{K}\sum_{i \in [K]}\vect{v}_{t+1}^i\right\Vert^2\\
    & = \expt\norm{\frac{1}{K}\sum_{k\in[K]}\hat\xbf_{k} - \frac{1}{K}\sum_{k\in[K]}\xbf_{k}}^2\\
    & = \frac{1}{K^2}\expt\left\Vert\sum_{k \in [K]} \hbf_k^H \hbf_k \xbf_{k}+ \sum_{k \in [K]}\sum_{j \in [K], j\neq k} \hbf_k^H \hbf_j \xbf_{j} +  \Nbf_{t + 1}\sum_{k \in [K]} \hbf_k -\sum_{k\in[K]}\xbf_{k} \right\Vert^2\\
    & = \frac{1}{K^2}\left[\expt\norm{\sum_{k \in [K]} \hbf_k^H \hbf_k \xbf_{k}}^2 + \expt\norm{\sum_{k \in [K]}\sum_{j \in [K], j\neq k} \hbf_k^H \hbf_j \xbf_{j}}^2 + \expt\norm{\Nbf_{t + 1}\sum_{k \in [K]} \hbf_k}^2 + \expt\norm{\sum_{k\in[K]}\xbf_{k}}^2\right.\\
    & +\left.2\expt\squab{\sum_{k \in [K]} \hbf_k^H \hbf_k \xbf_{k}\sum_{k \in [K]}\sum_{j \in [K], j\neq k} \hbf_k^H \hbf_j \xbf_{j}} + 2\expt\squab{\sum_{k \in [K]} \hbf_k^H \hbf_k \xbf_{k}\Nbf_{t + 1}\sum_{k \in [K]} \hbf_k}\right.\\
    & -\left. 2\expt\squab{\sum_{k \in [K]} \hbf_k^H \hbf_k \xbf_{k}\sum_{k\in[K]}\xbf_{k}} + 2\expt\squab{\sum_{k \in [K]}\sum_{j \in [K], j\neq k} \hbf_k^H \hbf_j \xbf_{j}\Nbf_{t + 1}\sum_{k \in [K]} \hbf_k}\right.\\
    & -\left. 2\expt\squab{\sum_{k \in [K]}\sum_{j \in [K], j\neq k} \hbf_k^H \hbf_j \xbf_{j}\sum_{k\in[K]}\xbf_{k}} -2\expt\squab{\Nbf_{t + 1}\sum_{k \in [K]} \hbf_k \sum_{k\in[K]}\xbf_{k}}\right]\\
    & =\frac{1}{K^2}\squab{\left(1 + \frac{1}{M}\right)\sum_{k \in [K]} \expt\norm{\xbf_{k}}^2 + \frac{K-1}{M}\sum_{k \in [K]} \expt\norm{\xbf_{k}}^2 + \frac{dK}{\ssf{SNR}_{\ssf{UL}}}+\sum_{k \in [K]} \expt\norm{\xbf_{k}}^2 - 2\sum_{k \in [K]}\expt\norm{\xbf_{k}}^2}\\
    & = \frac{1}{K^2}\squab{\frac{K}{M}\sum_{k \in [K]} \expt\norm{\xbf_{k}}^2 + \frac{\sum_{k \in [K]} \expt\norm{\xbf_{k}}^2}{\ssf{SNR}_{\ssf{UL}}}} =\frac{1}{K^2}\squab{\frac{K}{M} + \frac{1}{\ssf{SNR}_{\ssf{UL}}}}\sum_{k \in [K]} \expt\norm{\xbf_{k}}^2 \\
    & \leq \frac{1}{K^2}\squab{\frac{K}{M} + \frac{1}{\ssf{SNR}_{\ssf{UL}}}}\sum_{k \in [K]} E \sum_{i = t + 1 - E}^t\norm{\eta_i \nabla\tilde{f}_k(\wbf_i^k)} \leq \frac{1}{K}\squab{\frac{K}{M} + \frac{1}{\ssf{SNR}_{\ssf{UL}}}} \eta^2_{t+1-E} E^2 H^2 \\
    & \leq \frac{4}{K}\squab{\frac{K}{M} + \frac{1}{\ssf{SNR}_{\ssf{UL}}}} \eta^2_{t} E^2 H^2,
%\end{split}
\end{align*}
where in the last inequality we use the fact that $\eta_t$ is non-increasing and $\eta_{t + 1 - E}\leq 2\eta_t$.
\end{proof}

\begin{lemma}\label{lemma:EnhanceDL}
Let Assumptions 1-4 hold and downlink SNR scales $\ssf{SNR}_{\ssf{DL}}\geq \f{1-\mu\eta_t}{\eta^2_t}$ as learning round $t$. $\forall t+1 \in \mathcal{I}_E$, we have
  $\expt \squab{\avgvect{p}_{t+1}} = \avgvect{w}_{t+1}, \; \text{and~} \; \expt \norm{\avgvect{p}_{t+1} - \avgvect{w}_{t + 1}}^2 \leq \left(\frac{dMK}{N^2(K +  M)}\right)\frac{\eta_t^2}{1 - \mu\eta_t}$.
  
\end{lemma}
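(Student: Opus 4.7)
The plan is to work directly from the additive decomposition
$\avgvect{p}_{t+1} - \avgvect{w}_{t+1} = \frac{1}{N}\sum_{k\in[K]}\tilde{\zbf}^k_{t+1}$,
which is an immediate consequence of the preliminaries since
$\vect{p}_{t+1}^k = \vect{w}_{t+1}^k + \tilde{\zbf}^k_{t+1}$ for the selected clients $k\in[K]$ and $\vect{p}_{t+1}^k = \vect{w}_{t+1}^k$ otherwise. The unbiasedness claim $\expt[\avgvect{p}_{t+1}] = \avgvect{w}_{t+1}$ then reduces to showing $\expt[\tilde{\zbf}^k_{t+1}] = 0$ for each $k\in[K]$: each component is $\sqrt{K}\,\real(z_i^k/g_k)$, and since $z_i^k\sim\mathcal{CN}(0,\sigma_{\ssf{DL}}^2)$ is zero-mean circularly symmetric and independent of all channels (hence of $g_k$), the conditional law of $z_i^k/g_k$ given the channels is itself circularly symmetric with zero mean, so its real part is zero mean; the tower property closes the argument.

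For the variance I first show that the cross terms vanish. For $k\neq j$, the noises $z_i^k$ and $z_{i'}^j$ originate at distinct receivers and are mutually independent and independent of all channels. Conditioning on the channels makes $\real(\sqrt{K}z_i^k/g_k)$ and $\real(\sqrt{K}z_{i'}^j/g_j)$ independent zero-mean Gaussians, so the conditional cross covariance vanishes and the tower property transfers this to the unconditional moment. The same reasoning kills cross terms between distinct symbol indices $i\neq i'$ at the same client. Thus $\expt\norm{\avgvect{p}_{t+1}-\avgvect{w}_{t+1}}^2 = \frac{1}{N^2}\sum_{k\in[K]}\expt\norm{\tilde{\zbf}^k_{t+1}}^2$.

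Next I reduce each per-client second moment to a scalar calculation involving $\expt[1/|g_k|^2]$. Conditional on the channels, $z_i^k/g_k\sim\mathcal{CN}(0,\sigma_{\ssf{DL}}^2/|g_k|^2)$ is circularly symmetric, whence $\expt[\real(z_i^k/g_k)^2\mid\text{channels}] = \sigma_{\ssf{DL}}^2/(2|g_k|^2)$. Summing over the $d$ symbols and absorbing the $\sqrt{K}$ factor from the definition of $\tilde{\zbf}^k_{t+1}$ then gives $\expt\norm{\tilde{\zbf}^k_{t+1}}^2$ proportional to $dK\sigma_{\ssf{DL}}^2\,\expt[1/|g_k|^2]$. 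Invoking the downlink SNR requirement $\ssf{SNR}_{\ssf{DL}} \geq (1-\mu\eta_t)/\eta_t^2$, equivalently $\sigma_{\ssf{DL}}^2 \leq \eta_t^2/(1-\mu\eta_t)$, then converts the noise variance into the learning-rate-dependent factor appearing in the target bound.

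The main obstacle is bounding $\expt[1/|g_k|^2]$ by a quantity of order $M/[K(K+M)]$. I plan to decompose $g_k = \norm{\hbf_k}^2 + \hbf_k^H\bar{\hbf}_{-k}$, where $\bar{\hbf}_{-k}\triangleq\sum_{j\neq k}\hbf_j$ is independent of $\hbf_k$ and distributed as $\mathcal{CN}(0,\tfrac{K-1}{M}\Ibf)$. Conditioning on $\hbf_k$, $|g_k|^2$ becomes (after rescaling by $\norm{\hbf_k}^2(K-1)/(2M)$) a non-central chi-squared random variable with two degrees of freedom and non-centrality parameter proportional to $M\norm{\hbf_k}^2/(K-1)$. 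Using standard reciprocal-moment bounds for non-central chi-squared laws and then integrating out $\norm{\hbf_k}^2$ (itself a scaled chi-squared with $2M$ degrees of freedom) yields the required rate in $M$ and $K$. Substituting this into the previous display, combined with the SNR condition, produces the claimed inequality.
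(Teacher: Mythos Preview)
Your decomposition $\avgvect{p}_{t+1}-\avgvect{w}_{t+1}=\frac{1}{N}\sum_{k\in[K]}\tilde{\zbf}^k_{t+1}$, the unbiasedness argument via circular symmetry and the tower property, and the elimination of cross terms are all correct and in fact cleaner than the paper's treatment. The reduction of the problem to bounding $\expt\bigl[\,|g_k|^{-2}\,\bigr]$ is also the right intermediate step.

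The gap is in your final move. Conditionally on $\hbf_k$, you correctly identify $|g_k|^2$ as a scaled non-central chi-squared variable with \emph{two} degrees of freedom: $g_k=\|\hbf_k\|^2+\hbf_k^H\bar{\hbf}_{-k}$ is complex Gaussian with mean $\|\hbf_k\|^2$ and variance $\|\hbf_k\|^2(K-1)/M$. But a non-central $\chi^2_2(\lambda)$ has a density that is bounded away from zero at the origin (it tends to $\tfrac{1}{2}e^{-\lambda/2}>0$), so $\expt[1/X]=\infty$ for any finite non-centrality parameter. Consequently $\expt\bigl[\,|g_k|^{-2}\,\bigr]=\infty$, and no ``standard reciprocal-moment bound'' can deliver the $M/[K(K+M)]$ rate you need. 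Integrating out $\|\hbf_k\|^2$ afterwards does not help, since the inner conditional expectation already diverges for every realization of $\hbf_k$.

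For comparison, the paper does not attempt a rigorous bound on $\expt\bigl[\,|g_k|^{-2}\,\bigr]$ at all. It simply writes
\[
\expt\!\left[\frac{|z_i^k|^2}{|g_k|^2}\right]\;\leq\;\frac{\expt\bigl[|z_i^k|^2\bigr]}{\expt\bigl[|g_k|^2\bigr]}
\;=\;\frac{1/\ssf{SNR}_{\ssf{DL}}}{1+K/M},
\]
i.e.\ it replaces $\expt[1/|g_k|^2]$ by $1/\expt[|g_k|^2]=M/(M+K)$. This ``inequality'' is not justified (Jensen's inequality for the convex map $x\mapsto 1/x$ goes the other way), so the paper's proof should be read as a large-$M$ heuristic in which $g_k$ concentrates around its mean and the inversion is well-behaved with high probability. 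Your more honest route exposes exactly why a fully rigorous version of this lemma, as stated, is not available without additional assumptions (e.g.\ truncating clients with small $|g_k|$, or working with a different channel model where $|g_k|^2$ has at least three effective degrees of freedom).
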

\begin{proof}
We first show that
    $\expt\squab{\real\left(\frac{z_i^k}{g_k}\right)} = \real\left(\expt\squab{z_i^k}\frac{1}{\expt \squab{ g_k}}\right) = 0$,
and
    $\variance\squab{\real\left(\frac{z_i^k}{g_k}\right)} = \expt\squab{\real\left(\frac{z_i^k}{g_k}\right)\real\left(\frac{z_i^{k^*}}{g_{k^*}}\right)} \\ \expt\squab{\real\left(\frac{z_i^k z_i^{k^*}}{g_k g_{k^*}}\right)}\leq \frac{\expt\squab{\real(z_i^{k^*}z_i^k)}}{\expt\squab{\real(g^*_k g_k)}} = \frac{1/(2\ssf{SNR}_{DL})}{1/2(1 + K/M)} = \left(\frac{M}{K + M}\right)\frac{1}{\ssf{SNR}_{\ssf{DL}}}$,
from which we can easily obtain
    $\expt\squab{{\tilde \zbf_{t + 1}^k}} = \vect{0}$ and $\variance\squab{\tilde \zbf_{t + 1}^k} = \left(\frac{M}{K + M}\right)\frac{d}{\ssf{SNR}_{\ssf{DL}}}$.
Therefore, we have $\expt \squab{\avgvect{p}_{t+1}} = \frac{1}{N}\sum_{k = 1}^N \wbf_{t + 1}^k +\frac{1}{N}\sum_{k\in[K]}\expt\squab{\tilde \zbf_{t + 1}^k} = \avgvect{w}_{t+1}$,
and $\expt \norm{\avgvect{p}_{t+1} - \avgvect{w}_{t + 1}}^2 = \expt\norm{\frac{1}{N}\sum_{k\in[K]}\tilde \zbf_{t + 1}^k}^2 = \frac{1}{N^2}\sum_{k\in[K]}\expt\norm{\tilde \zbf_{t + 1}^k}^2 = \left(\frac{MK}{N^2(K + M)}\right) \frac{d}{\ssf{SNR}_{\ssf{DL}}}\leq \left( \frac{dMK}{N^2(K +  M)}\right)\frac{\eta_t^2}{1 - \mu\eta_t}$.

\end{proof}

\section{Proof of Theorem \ref{thm.RO}}
\label{sec:proofThrm1}
We need to consider four cases for the analysis of the convergence of $\expt\norm{\avgvect{w}_{t+1}-\vect{w}^*}^2$.

1) If $t \notin \mathcal{I}_E$ and $t+1 \notin \mathcal{I}_E$, $\avgvect{v}_{t + 1} = \avgvect{w}_{t + 1}$ and $\avgvect{p}_{t} = \avgvect{w}_{t}$. Using Lemma \ref{lemma:one-step-sgd}, we have:
\begin{equation}
    \label{eqn:case1}
    \expt  \norm{\avgvect{w}_{t+1} - \vect{w}^*}^2 = \expt \norm{\avgvect{v}_{t+1} - \vect{w}^*}^2 \leq (1-\eta_t \mu) \expt \norm{\avgvect{w}_t - \vect{w}^*}^2 + \eta_t^2 \squab{\sum_{k=1}^{N}\frac{H_k^2}{N^2} + 6L \Gamma + 8(E-1)^2 H^2}.
\end{equation}
        % \begin{align}
        %   &  \expt  \norm{\avgvect{p}_{t+1} - \vect{w}^*}^2 = \expt \norm{\avgvect{v}_{t+1} - \vect{w}^*}^2 \nonumber \\
        %     % & \leq (1-\eta_t \mu) \expt \norm{\avgvect{w}_t - \vect{w}^*}^2 + \eta_t^2 \expt \norm{\vect{g}_t - \avgvect{g}_t}^2 + 6 L \eta_t^2 \Gamma + 2\expt \squab{\frac{1}{N}\sum_{k=1}^N \norm{\avgvect{w}_t - \vect{w}_t^k}^2} \nonumber \\
        %     & \leq (1-\eta_t \mu) \expt \norm{\avgvect{p}_t - \vect{w}^*}^2 + \eta_t^2 \squab{\sum_{k=1}^{N}\frac{\delta_k^2}{N^2} + 6L \Gamma + 8(E-1)^2 H^2}.
        %     \label{eqn:case1}
        % \end{align}

2) If $t \in \mathcal{I}_E$ and $t+1 \notin \mathcal{I}_E$, we still have $\avgvect{v}_{t+1} = \avgvect{w}_{t+1}$. With $\avgvect{p}_{t} = \avgvect{w}_{t} + \frac{1}{N}\sum_{k=1}^N 
\tilde\zbf_{t}^k$, we have:
\begin{equation*}
    \norm{\avgvect{w}_t - \vect{w}^*}^2 = \norm{\avgvect{p}_t - \avgvect{w}_t + \avgvect{w}_t - \vect{w}^*}^2 = \norm{\avgvect{w}_t - \vect{w}^*}^2 + \underbrace{\norm{\avgvect{w}_t - \avgvect{p}_t}^2}_{A_1} + \underbrace{2\dotp{\avgvect{w}_t - \avgvect{p}_t}{\avgvect{p}_t - \vect{w}^*}}_{A_2}.
\end{equation*}
 %and we need to evaluate $\expt \norm{\avgvect{w}_{t} - \vect{w}^*}^2$ with respect to all three sources of randomness. 
 We first note that the expectation of $A_2$ over the noise and fading channel randomness is zero since we have $\expt \squab{\avgvect{w}_{t} - \avgvect{p}_{t}} = \vect{0}$. Second, the expectation of $A_1$ can be bounded using Lemma \ref{lemma:EnhanceDL}. We then have
\begin{align}
    & \expt  \norm{\avgvect{w}_{t+1} - \vect{w}^*}^2 = \expt \norm{\avgvect{v}_{t+1} - \vect{w}^*}^2 \nonumber \\
    % & \leq (1-\eta_t \mu) \expt \norm{\avgvect{w}_t - \vect{w}^*}^2 + \eta_t^2 \expt \norm{\vect{g}_t - \avgvect{g}_t}^2 + 6 L \eta_t^2 \Gamma + 2\expt \squab{\frac{1}{N}\sum_{k=1}^N \norm{\avgvect{w}_t - \vect{w}_t^k}^2} \nonumber \\
    & \leq (1-\eta_t \mu) \expt \norm{\avgvect{w}_t - \vect{w}^*}^2 +(1-\eta_t \mu) \expt\norm{\avgvect{w}_t - \avgvect{p}_t}^2  + \eta_t^2 \squab{\sum_{k=1}^{N}\frac{H_k^2}{N^2} + 6L \Gamma + 8(E-1)^2 H^2 } \nonumber \\
    & \leq (1-\eta_t \mu) \expt \norm{\avgvect{w}_t - \vect{w}^*}^2 + \eta_t^2 \squab{\sum_{k=1}^{N}\frac{H_k^2}{N^2} + 6L \Gamma + 8(E-1)^2 H^2 + \frac{MK}{N^2(K + M)}}. \label{eqn:case2}
\end{align}
        
3) If $t \notin \mathcal{I}_E$ and $t+1 \in \mathcal{I}_E$, then we still have $\avgvect{p}_{t} = \avgvect{w}_{t}$. For $t+1$, we need to evaluate the convergence of $\expt\norm{\avgvect{w}_{t+1}-\vect{w}^*}^2$. We have
 \begin{equation} \label{eqn:depart0}
    \begin{split}
        \norm{\avgvect{w}_{t+1} - \vect{w}^*}^2 & = \norm{\avgvect{w}_{t+1} - \avgvect{u}_{t+1} + \avgvect{u}_{t+1}- \vect{w}^*}^2 \\
        & = \underbrace{\norm{\avgvect{w}_{t+1} - \avgvect{u}_{t+1}}^2}_{B_1} + \underbrace{\norm{\avgvect{u}_{t+1}- \vect{w}^*}^2}_{B_2} + \underbrace{2\dotp{\avgvect{w}_{t+1} - \avgvect{u}_{t+1}}{\avgvect{u}_{t+1}- \vect{w}^*}}_{B_3}.
    \end{split}
\end{equation}
We first note that the expectation of $B_3$ over the noise is zero since we have $\expt \squab{\avgvect{u}_{t+1} - \avgvect{w}_{t+1}} = \vect{0}$ and the expectation of $B_1$ can be bounded using Lemma \ref{lemma:ROUL}. We next write $B_2$ into
 \begin{equation} \label{eqn:depart1}
    \begin{split}
        \norm{\avgvect{u}_{t+1} - \vect{w}^*}^2 & = \norm{\avgvect{u}_{t+1} - \avgvect{v}_{t+1} + \avgvect{v}_{t+1}- \vect{w}^*}^2 \\
        & = \underbrace{\norm{\avgvect{u}_{t+1} - \avgvect{v}_{t+1}}^2}_{C_1} + \underbrace{\norm{\avgvect{v}_{t+1}- \vect{w}^*}^2}_{C_2} + \underbrace{2\dotp{\avgvect{u}_{t+1} - \avgvect{v}_{t+1}}{\avgvect{v}_{t+1}- \vect{w}^*}}_{C_3}.
    \end{split}
\end{equation}
Similarly, the expectation of $C_3$ over the noise is zero since we have $\expt \squab{\avgvect{u}_{t+1} - \avgvect{v}_{t+1}} = \vect{0}$ and the expectation of $C_1$ can be bounded using Lemma \ref{lemma:sample}. Therefore, we have
   \begin{equation}\label{eqn:case3}
   \begin{split}
        & \expt\norm{\avgvect{w}_{t+1}-\vect{w}^*}^2  \leq \expt\norm{\avgvect{v}_{t+1}- \vect{w}^*}^2 + \frac{4}{K}\squab{\frac{K}{M} + \frac{1}{\ssf{SNR}_{\ssf{UL}}}} \eta^2_{t} E^2 H^2 +  \frac{N-K}{N-1} \frac{4}{K} \eta_t^2 E^2 H^2 \\
        & \leq (1-\eta_t \mu) \expt \norm{\avgvect{w}_t - \vect{w}^*}^2 + \eta_t^2 \left[\sum_{k=1}^N \frac{H_k^2}{N^2} + 6L \Gamma + 8(E-1)^2 H^2 \right.\\
        &\left. + \frac{4}{K}\left(\frac{K}{M} + \frac{1}{\ssf{SNR}_{\ssf{UL}}}\right)  E^2 H^2 +  \frac{N-K}{N-1} \frac{4}{K}  E^2 H^2 \right].  
           \end{split}
      \end{equation}

4) If $t \in \mathcal{I}_E$ and $t+1 \in \mathcal{I}_E$, $\avgvect{v}_{t+1}\neq \avgvect{w}_{t+1}$ and $\avgvect{p}_{t}\neq \avgvect{w}_{t}$. (Note that this is possible only for $E=1$.) Combining the results from the previous two cases, we have 
\begin{equation}
\label{eqn:case4}
\begin{split}\small
    & \expt\norm{\avgvect{w}_{t+1}-\vect{w}^*}^2   \leq (1-\eta_t \mu) \expt \norm{\avgvect{w}_t - \vect{w}^*}^2 \\
    & + \left[\sum_{k=1}^N \frac{H_k^2}{N^2} + 6L \Gamma + 8(E-1)^2 H^2 + \frac{4}{K}\left(\frac{K}{M} + \frac{1}{\ssf{SNR}_{\ssf{UL}}}\right)  E^2 H^2 +  \frac{N-K}{N-1} \frac{4}{K}  E^2 H^2 + \frac{MK}{N^2(K + M)} \right].
 \end{split}
\end{equation}
Let $\Delta_t = \expt \norm{\avgvect{w}_{t}- \vect{w}^*}^2$. From \eqref{eqn:case1}, \eqref{eqn:case2}, \eqref{eqn:case3} and \eqref{eqn:case4}, it is clear that no matter whether $t+1 \in \mathcal{I}_E$ or $t+1 \notin \mathcal{I}_E$, we always have
    %\begin{equation*}
      $\Delta_{t+1} \leq (1-\eta_t\mu) \Delta_{t} + \eta_t^2 B$,
    %\end{equation*}
    where
    %\begin{equation*}
    %\begin{split}
      $B  = \sum_{k=1}^N \frac{H_k^2}{N^2} + 6L \Gamma + 8(E-1)^2 H^2 + \frac{4}{K}\left(\frac{K}{M} + \frac{1}{\ssf{SNR}_{\ssf{UL}}}\right)  E^2 H^2 +  \frac{N-K}{N-1} \frac{4}{K}  E^2 H^2 + \frac{MK}{N^2(K + M)}$.
    %\end{split}
    %\end{equation*}
 Define $v\triangleq \max\{\frac{4B}{\mu^2}, (1 + \gamma)\Delta_1\}$, by choosing $\eta_t = \frac{2}{\mu(t+\gamma)}$, we can prove $\Delta_t\leq\frac{v}{t + \gamma}$ by induction:
 \begin{equation*}
 \begin{split}
     \Delta_{t + 1} & \leq \left(1 - \frac{2}{t + \gamma}\right)\Delta_t + \frac{4B}{\mu^2(t + \gamma)^2} = \frac{t + \gamma -2}{(t + \gamma)^2}v + \frac{4B}{\mu^2(t + \gamma)^2}\\
     & = \frac{t + \gamma -1}{(t + \gamma)^2}v + \left(\frac{4B}{\mu^2(t + \gamma)^2} - \frac{v}{(t + \gamma)^2}\right) \leq \frac{v}{t + \gamma + 1}.
 \end{split}
 \end{equation*}
 By the $L$-smoothness of $f$ and $v \leq \frac{4B}{\mu^2} + (1 + \gamma)\Delta_1$, we can prove the result in (\ref{eq.convergence}).
 
\bibliographystyle{IEEEtran}
\bibliography{wireless,Shen,ref}

\end{document}